\title{Realizability Problem for Constraint LTL} %TODO Please 
\author{Ashwin Bhaskar}{Chennai Mathematical Institute, India 
}{}{}{}%TODO mandatory, please use full name; only 1 author per \author 
\author{M.~Praveen}{Chennai Mathematical Institute, 
India \and CNRS IRL ReLaX, India}{}{}{This author is partially supported 
by the Infosys foundation}
\authorrunning{Ashwin Bhaskar and M.~Praveen} %TODO mandatory. 
\keywords{Realizability, constraint LTL, Strategy trees, Tree automata} 
\tikzstyle{state}=[draw=black, fill=black, circle, inner 
\newcommand{\set}[1]{\{#1\}}
\newcommand{\Nat}{\mathbb{N}}
\newcommand{\Int}{\mathbb{Z}}
\newcommand{\nxt}{X}
\newcommand{\until}{U}
\newcommand{\gap}{\mathit{gp}}
\newcommand{\gf}{G}
\newcommand{\lav}{V^a}
\newcommand{\lat}[1]{T^a[#1]}
\newcommand{\fbv}{V^b}
\newcommand{\slav}{\mathit{SV}^a}
\newcommand{\sfbv}{\mathit{SV}^b}
\newcommand{\efbv}{\mathit{EV}^b}
\newcommand{\sv}{\mathit{SV}}
\newcommand{\ev}{\mathit{EV}}
\newcommand{\smap}{\mathit{sm}}
\newcommand{\emap}{\mathit{em}}
\newcommand{\maplabel}{\mathit{L}}
\newcommand{\Map}{M}
\newcommand{\eMap}{\mathit{EM}}
\newcommand{\sMap}{\mathit{SM}}
\newcommand{\sstrat}{\mathit{st}}
\newcommand{\estrat}{\mathit{et}}
\newcommand{\ceil}[2]{\left\lceil #1 \right\rceil_{#2}}
\newcommand{\sm}{\mathrm{\mu}}
\newcommand{\csys}{\mathcal{D}}
\newcommand{\pref}{\upharpoonright}
\newcommand{\frames}{\mathcal{F}}
\newcommand{\frs}{\mathrm{frames}}
\newcommand{\maps}{\mathrm{maps}}
\newcommand{\pth}{\pi}
\newcommand{\node}{\eta}
\newcommand{\tree}{T}
\newcommand{\restr}{\upharpoonright}
\newcommand{\restslav}{\restr \slav}
\newcommand{\restsv}{\restr \sv}
\newcommand{\restsfbv}{\restr \sfbv}
\newcommand{\restefbv}{\restr \efbv}
\newcommand{\restfbv}{\restr \fbv}
\newcommand{\lo}{\sqsubseteq}
\newcommand{\lto}{\sqsubset}
\newcommand{\ro}{\lto^*}
\newcommand{\rto}{\lto^+}
\newcommand{\go}{\sqsupset^*}
\newcommand{\gto}{\sqsupset^+}
\newcommand{\prompt}{\mathbf{F_P}}
\newcommand{\splx}{x^a_{c}}
\newcommand{\sply}{y^a_{c}}
\newcommand{\col}{p}
\newcommand{\pfr}{\mathit{pf}}
\newcommand{\sys}{{\tt system}\xspace}
\newcommand{\Sys}{{\tt System}\xspace}
\newcommand{\env}{{\tt environment}\xspace}
\begin{document}

\maketitle

%TODO mandatory: add short abstract of the document
\begin{abstract}
Constraint linear-time temporal logic (CLTL) is an extension of LTL that 
is interpreted on sequences of valuations of variables over an infinite 
domain. The atomic formulas are interpreted as constraints on the 
valuations. The atomic formulas can constrain valuations over a range of 
positions along a sequence, with the range being bounded by a 
parameter depending on the formula. The satisfiability and model 
checking problems for CLTL have been studied by Demri and D’Souza. 
We consider the 
realizability problem for CLTL. The set of variables is partitioned into 
two parts, with each part controlled by a player. Players take turns to 
choose valuations for their variables, generating a sequence of 
valuations. The winning condition is specified by a CLTL formula---the first player wins if the sequence of valuations satisfies the specified 
formula. We study the decidability of checking whether the first player 
has a winning strategy in the realizability game for a given CLTL 
formula. We prove that it is decidable in the case where the domain 
satisfies the completion property, a property introduced by Balbiani and 
Condotta in the context of satisfiability. We prove that it is undecidable 
over $(\Int,<,=)$, the domain of integers with order and equality. We 
prove that over $(\Int,<,=)$, it is decidable if the atomic constraints in 
the formula can only constrain the current valuations of variables 
belonging to the second player, but there are no such restrictions for 
the variables belonging to the first player. We call this single-sided 
games.

Prompt-LTL is an extension of LTL with the prompt-eventually operator,  which imposes a bound on the wait time for all prompt-eventually sub-formulas. CLTL can be similarly extended to 
prompt-CLTL. We prove that decidability is maintained for single-sided 
games, even if we allow prompt-CLTL formulas.
\end{abstract}

\section{Introduction}
\label{sec:intro}
% DD07 - Demri, D'Souza 
% DL09 - LTLFreeze - Demri LTL with freeze operators
% DDG12 - LRV1 - Demri, Gascon Temporal logic of repeating values
% PFD16 - LRV2 - Figueira, Demri, Praveen LRV
% KF94 - FMA - Kaminski, Francez finite memory automata
% BDMSS11 - DA - Bojanczyk, Muscholl et.al. FO2 on data words
% KPV09 - Prompt - Vardi, Prompt LTL 
% BBDGGK14 - CloudElastic - Formalization of properties of cloud based elastic 
% systems
% C64 - Church - Alonzo Church, synthesis problem (see the synthesis of 
%reactive designs paper for the actual reference)
% PR89 - SynthesisDoublyExponential - Pnueli, Rosner synthesis of 
%asynchronous reactive modules
% KFP09 - RobotMotionPlanning
% PPS06 - SynthesisReactiveModule

% Emanuel FIliot
% RFE21 - SynthesisDataWordTransd - Synthesis of data word transducers
% EFK20 - SynthesisRegAutomata - Church Synthesis on Register Automata over 
%Linearly Ordered Data Domains

%Nathanel Fijalko
% FMMV21 - AssumeGuaranteePromptLTL - Assume-Guarantee Synthesis for 
%Prompt Linear Temporal Logic
% CF19 - UniversalGraphsGoodForGames - Universal Graphs and Good for 
%Games Automata: New Tools for Infinite Duration Games

Propositional linear temporal logic (LTL) and related automata theoretic 
models have been extended in various ways to make it more 
expressive. Prompt-LTL \cite{KPV09}, Constraint LTL \cite{DD07}, LTL 
with freeze operators \cite{DL09}, temporal logic of repeating 
values \cite{DDG12, PFD16}, finite memory automata \cite{KF94}, data 
automata \cite{BDMSS11} are all examples of this. Prompt-LTL is concerned 
with bounding wait times for formulas that are intended to become true 
eventually, while other extensions are concerned with using variables 
that range over infinite domains in place of Boolean propositions 
used in propositional LTL. Variables ranging over infinite domains are a 
natural choice for writing specifications for systems that deal with 
infinite domains. For example, constraint LTL has been used for 
specifications of cloud based elastic systems \cite{BBDGGK14}, where 
the domain of natural numbers are used to reason about the number of 
resources that are being used by cloud based systems.

An orthogonal development in formal verification is synthesis, that is 
concerned with automatically synthesizing programs from logical 
specifications. The problem was identified by Church \cite{C64} and 
one way to solve it is by viewing it as the solution of a two person 
game. For specifications written in propositional LTL, the worst case 
complexity of the realizability problem is doubly exponential 
\cite{PR89}. However, efficient algorithms exist for fragments of LTL. The algorithms are efficient enough and the 
fragments are expressive enough to be used in practice, for example to 
synthesize robot controllers \cite{KFP09}, data buffers 
and data buses \cite{PPS06}.

This paper is in an area that combines both developments 
mentioned in the above paragraphs. We consider constraint LTL (CLTL) 
and
partition the set of variables into two parts, each being owned by a 
player in a two player game. The players take turns to choose a valuation 
for their variables over an infinite domain. The game is played forever 
and results in a sequence of valuations. The first player tries to ensure 
that the resulting sequence satisfies a specified CLTL formula (which is 
the winning condition) and the second player tries to foil this. We study 
the decidability of checking whether the first player has a winning 
strategy, called the realizability problem in the sequel. CLTL is parameterized by a constraint system, that can have 
various relations over the infinite domain. The atomic formulas of CLTL 
can compare values of variables in different positions along a range of 
positions, using the relations present in the constraint system. The 
range of positions is bounded and depends on the formula. E.g., an 
atomic formula can say that the value of $x$ at a position is less than 
the value of $y$ in the next position, in the domain of integers or real 
numbers with linear order. Decidability of the CLTL realizability problem depends on the constraint system. It also depends 
on whether the atomic formulas can compare values at different 
positions of the input, as opposed to comparing values of different 
variables at the same position of the input. If the former is allowed only 
for variables belonging 
to one of the players, they are called single-sided games. This is 
illustrated next.

For instance in cloud based elastic systems \cite{BBDGGK14}, 
the number of resources in use is tracked with respect to the number of 
virtual machines running. One desirable property is that if the number 
of 
virtual machines increases, the number of resources allocated must also 
increase. Typically the number of resources allocated is under the system's 
control and the number of virtual machines is under the environment's 
control. Specifying this property will require comparing the number of 
currently allocated resources with the same number in the next 
position. We may also compare the current number of virtual 
machines with the same number at the next position, but this will need 
the both the system and the environment to compare numbers at different 
positions. Instead, if we only allow the environment to decide whether a 
new virtual machine request is raised at the current position, the game will be 
single-sided.

\textbf{Contributions} We prove that the realizability problem is
\begin{enumerate}
	\item \textsc{2EXPTIME}-complete for CLTL over constraint systems that satisfy a so-called 
	completion property,
	\item undecidable for CLTL over integers with linear order and equality and
	\item \textsc{2EXPTIME}-complete for CLTL single-sided games on integers with linear 
	order and equality.
	\item \textsc{2EXPTIME}-complete for prompt-CLTL single-sided games on integers 
	with linear order and equality.
\end{enumerate}
The third result above is the main one and is inspired by concepts used 
in satisfiability \cite{DD07}. In satisfiability, this technique is based on 
patterns that repeat in ultimately periodic words. It requires new 
insights to make it work in trees that we use to represent strategies 
here. 

\textbf{Related works} Two player games on automata models and  
logics dealing with infinite domains have been studied before \cite{RFE21,EFK20}. The techniques involved are similar to those used here in the sense that instead of reasoning about sequences of values from an infinite domain, sequences of elements from a finite abstraction are considered. Single-sided games are considered in \cite{EFK20}, like we do here, but for register automata specifications. Their result subsumes ours, since register automata are more expressive than CLTL. In register automata, values can be compared even if they occur far apart in the input sequence, but in CLTL, values can only be compared if they occur within a bounded distance. For this reason, CLTL can be handled with simpler arguments, resulting in some differences in technical details, which we will highlight later in this paper. This can potentially speed up procedures in case the specifications only need CLTL and not the full power of register automata\footnote{This does need a detailed study, which we defer to future work.}. 
 Similar single-sided games are also considered in \cite{PMF20}, for an 
extension of LTL incomparable with CLTL. There, single-sided games 
are reduced to energy games \cite{AMSS13} to get decidability.

Even apart from infinite alphabets, the synthesis problem continues to 
be actively under research. In \cite{FMMV21}, 
environment is assumed to satisfy some properties, 
and the system is expected to guarantee that it satisfies some 
properties. Both the assumptions and guarantees can be written in 
prompt-LTL and it is shown that theoretically, the complexity does not 
increase from the case of plain prompt-LTL synthesis (i.e. without 
assumptions and guarantees). We consider prompt-LTL without 
assumptions and 
guarantees, 
but we consider infinite domains.

\section{Preliminaries}
\label{sec:prelim}
Let $\Int$ be the set of integers and $\Nat$ be the set of non-negative 
integers. We denote by $\ceil{i}{k}$ the number $i$ ceiled at $k$: 
$\ceil{i}{k} = i$ if $i \le k$ and $\ceil{i}{k}=k$ otherwise. If $m$ is any 
mapping and $S$ is a subset of the domain of $m$, we 
denote by $m \restr S$ the mapping $m$ restricted to the domain $S$. 
For a sequence of mappings $m_1 \cdot m_2 \cdots$, we write $m_1 
\cdot m_2 \cdots \restr S$ for $m_1 \restr S \cdot m_2 \restr S 
\cdots$. For integers $n_1,n_2$, we denote by $[n_1,n_2]$ the set 
$\set{n \in \Int \mid n_1 \le n \le n_2}$.

We recall the definitions of constraint systems and constraint LTL 
(CLTL) 
from 
\cite{DD07}. A constraint system $\csys$ is of the form $(D, R_1,\ldots, 
R_n, \mathcal{I})$, where $D$ is a non-empty set called the 
domain. Each $R_i$ is a predicate symbol of arity $a_i$, with 
$\mathcal{I}(R_i) \subseteq D^{a_i}$ being its interpretation.

Let $V$ be a set of variables, partitioned into the sets $\lav,\fbv$ of 
look-\textbf{a}head and future-\textbf{b}lind variables. A look-ahead term is of the 
form $\nxt^i y$, where $y$ is a look-ahead variable, $i \ge 0$ and 
$\nxt$ is a symbol intended to denote ``next''. For $k \ge 0$, we 
denote by $\lat{k}$ the set of all look-ahead terms of the form 
$\nxt^i 
y$, where $i \in [0,k]$ and $y$ is a look-ahead variable. A 
constraint 
$c$ is of the form $R(t_1,\ldots, t_n)$, where $R$ is a predicate 
symbol 
of arity $n$ and $t_1,\ldots, t_n$ are all future-blind variables or they 
are all look-ahead terms. The syntax of CLTL is given by the following 
grammar, where $c$ is a constraint as defined above.
\begin{align*}
	\phi ::= c ~|~ \lnot \phi ~|~ \phi \lor \phi ~|~ \nxt \phi ~|~ \phi 
	\until \phi
\end{align*}

The semantics of CLTL are defined over sequences $\sigma$ (also 
called concrete models in the following); for every $i 
\ge 
0$, $\sigma(i) \colon V \to D$ is a mapping of the variables. Given, $x_1,\ldots, 
x_n \in \lav$ and $i_1,\ldots,i_n \in \Nat$, the 
$i$\textsuperscript{th} position of a concrete model $\sigma$ 
satisfies 
the constraint $R(\nxt^{i_1}x_1, \ldots, \nxt^{i_n}x_n)$ (written as 
$\sigma,i \models R(\nxt^{i_1}x_1, \ldots, \nxt^{i_n}x_n)$) if 
$(\sigma(i+i_1)(x_1), \ldots, \sigma(i+i_n)(x_n)) \in \mathcal{I}(R)$. If 
the constraint is of the form $R(x_1,\ldots, x_n)$ where $x_1,\ldots, 
x_n \in \fbv$, then $\sigma,i \models R(x_1, 
\ldots, x_n)$ if 
$(\sigma(i)(x_1), \ldots, \sigma(i)(x_n)) \in \mathcal{I}(R)$. The 
semantics is extended to the rest of the syntax similar to the usual 
propositional LTL. We use the 
standard abbreviations $F\phi$ (resp.~$G \phi$) to mean that $\phi$ 
is 
true at some position (resp.~all positions) in the future. The 
$\nxt$-length of a look-ahead term $\nxt^i y$ is $i$. We say that a 
formula is 
of $\nxt$-length $k$ if it uses look-ahead terms of $\nxt$-length at 
most $k$. The 
constraint 
system $(\Int,<,=)$ (resp.~$(\Nat,<,=)$) has the domain $\Int$ 
(resp.~$\Nat$) and $<,=$ are interpreted as the usual linear order 
and 
equality relations. The formula $G(x<\nxt y)$ will be true in the first 
position of a concrete model if in all positions, the value of $x$ is less 
than the value of $y$ in the next position.

We adapt the concept of realizability games \cite{PR89} to CLTL. 
There are two players \sys and \env. The set of variables $V$ 
is 
partitioned into two parts $\sv,\ev$ owned by \sys, \env respectively. 
The \env begins by choosing a mapping $\emap_0 \colon \ev \to D$, 
to 
which \sys responds by choosing a mapping $\smap_0 \colon \sv \to D$. 
This 
first round results in the mapping $\emap_0 \oplus 
\smap_0$. This notation is used to define the function such that 
$\emap_0 
\oplus \smap_0(x)=\emap_0(x)$ if $x\in \ev$ and $\emap_0 \oplus 
\smap_0(x)=\smap_0(x)$ if $x\in \sv$. In the next round, the 
two players chose mappings $\emap_1,\smap_1$. Both the players continue to play forever and the play results in a concrete model $\sigma = (\emap_0 
\oplus \smap_0) (\emap_1 \oplus \smap_1) \cdots$. The winning 
condition 
is specified by a CLTL formula $\phi$. \Sys wins this play of the game if 
$\sigma,0 \models \phi$.

Let $\Map$ (resp.~$\eMap$,$\sMap$) be the set of all mappings of 
the 
form $V \to D$ (resp.~$\ev \to D$, $\sv \to D$). For a concrete model 
$\sigma$ and $i\ge 0$, let $\sigma \pref i$ denote the prefix of 
$\sigma$ of length $i$ (for $i=0$, $\sigma \pref i$ is the empty 
sequence $\epsilon$). An \env strategy is a function 
$\estrat \colon \Map^* \to \eMap$ and a \sys strategy is a function 
$\sstrat \colon \Map^* \cdot \eMap \to \sMap$. We say that the \env 
plays according to the strategy $\estrat$ if the resulting model $\sigma 
= (\emap_0 \oplus \smap_0) (\emap_1 \oplus \smap_1) \cdots $ is 
such that $\emap_i = \estrat(\sigma\pref i)$ for all $i \ge 0$. \Sys 
plays according to the strategy $\sstrat$ if the resulting model $\sigma 
= (\emap_0 \oplus \smap_0) (\emap_1 \oplus \smap_1) \cdots $ is 
such that $\smap_i=\sstrat(\sigma\pref i\cdot \emap_i)$ for all $i 
\ge 0$. We say that $\sstrat$ is a winning strategy for \sys if she wins 
all plays of the game played according to $\sstrat$, irrespective of the strategy 
used by \env. For example, let us consider a CLTL game with $V = \lav = \set{x,y}, \ev = \set{x}, \sv = \set{y}$, over the constraint system $(\Int,<,=)$ with winning condition $G ((y > \nxt y) \wedge \neg (({\nxt}^2 x > y) \wedge ({\nxt}^2 x < \nxt y)))$. For \sys to win, the sequence of valuations for $y$ should form a descending chain, and at any position, the value of $x$ should be outside the interval defined by the previous two values of $y$. \Sys has a winning strategy in this game: it can choose $y$ to be $-i$ in the $i^{\textnormal{th}}$ round and the \env cannot choose its $x$ to be strictly between the previous two values of $y$ in any round. \Sys does not have a winning strategy in the same game when it is considered over $(\Nat, <, =)$, as there is no infinite descending sequence of natural numbers. \Sys does not have a winning strategy over dense domains, since \env can choose the third value of $x$ to be strictly between the first two values of $y$, violating the winning condition. Given a CLTL formula $\phi$, the \textbf{realizability problem} is to check whether \sys has a winning 
strategy in the CLTL game whose winning condition is 
$\phi$.

\section{Undecidability over Integers with Order and Equality}
\label{sec:undec}
The realizability problem is undecidable for CLTL over 
$(\Int, <, =)$ and $(\Nat, <, =)$. We prove this by a reduction from the 
repeated control state reachability problem for 2-counter machines, which is known to be undecidable \cite{AH94}. 
The undecidability result holds even in the case where only the \env can own future-blind variables (unboundedly many) and both \sys and \env are restricted to own only a single look-ahead variable each.
The main idea of the reduction is that one of the players 
simulates the counter machine and the other player catches mistakes, 
like other similar reductions 
for games \cite{ABd03}. To make this work when each player owns  a 
single 
look-ahead variable, we use the following idea. Let $x$ be the look-ahead variable owned by the \env player, 
and $y$ be the look-ahead variable owned by the \sys player. The first counter is updated only at odd rounds of a play of the game,
and it is set to be the difference between $x$ and $y$ in that round. Similarly, the second counter is updated only at even rounds of a play of the game and is set to be the difference between $x$ and $y$ in the that round. So both players participate in the
simulation. The \env additionally 
chooses the next counter machine transition to be executed at every odd round using its future-blind variables. \Sys has the additional responsibility of catching mistakes. The CLTL 
formulas 
specifying the winning condition ensure that any player 
who doesn't fulfill their responsibility loses.

We now state and prove the result formally. 

\begin{theorem}
	\label{thm:undecidable}
	Checking whether \sys has a winning strategy in CLTL games over $(\Int,<,=)$ or $(\Nat,<,=)$ where 
	both players have only one look-ahead variable each and \env 
	additionally has unboundedly many future-blind variables is undecidable.
\end{theorem}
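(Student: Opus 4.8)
The plan is to reduce from the repeated control state reachability problem for two-counter Minsky machines, which asks whether a designated control state is visited infinitely often along some computation, and is undecidable. Given a two-counter machine $\mathcal{M}$ with a distinguished state $q_f$, I would construct a CLTL formula $\phi_{\mathcal{M}}$ over $(\Int,<,=)$ (and similarly over $(\Nat,<,=)$) such that \sys has a winning strategy in the realizability game with winning condition $\phi_{\mathcal{M}}$ if and only if $\mathcal{M}$ has a computation visiting $q_f$ infinitely often. As hinted in the excerpt, the encoding uses a single look-ahead variable $x$ for \env and a single look-ahead variable $y$ for \sys, with the two counter values represented by the differences $x-y$ at odd and even rounds respectively; \env additionally uses its unboundedly many future-blind variables to announce which transition of $\mathcal{M}$ is taken at each step.

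The first step is to fix the encoding of configurations. I would have \env own future-blind variables that act as a one-hot (or binary) encoding of the current control state and the transition chosen, and I would write CLTL constraints forcing these to be a legal encoding (exactly one transition selected, its source state matching the current state, its target becoming the next state). The counter semantics are encoded positionally: the first counter's value is read off as the difference between $x$ and $y$ at odd rounds and the second counter's at even rounds, so an increment/decrement/zero-test on a counter becomes a constraint relating the relevant difference at a round to the difference two rounds later, expressible as a bounded-$\nxt$-length look-ahead constraint. Because each player contributes one term to each difference, neither player can unilaterally dictate a counter value, which is exactly what lets both of them participate in the simulation while keeping the look-ahead single-sided in the sense of one variable per player.

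The crux is arranging the adversarial responsibility so that faithfulness of the simulation is enforced by the game rather than by any single player. The intended division is that \env drives the computation (choosing transitions and the values that fix counter contents), while \sys is charged with detecting cheating: the winning formula should let \sys win if \env ever proposes an illegal transition, miscounts an increment or decrement, or falsifies a zero-test. The delicate part is the zero-test, since over $(\Int,<,=)$ one must express ``this difference equals zero'' and ``this difference is nonnegative'' using only order and equality on a bounded window; I would encode nonnegativity of a counter as a CLTL invariant $G$ comparing the appropriate difference to a baseline, and implement a zero-test as a constraint that the relevant difference coincides with the zero baseline, with the ability of \sys to challenge a false claim. The liveness requirement that $q_f$ recur infinitely often is a conjunct $G\,F\,(\text{state}=q_f)$ built from the state-encoding variables, and I would insist that any faithful infinite simulation satisfies the overall specification precisely when the underlying run is a valid recurring computation.

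The main obstacle I anticipate is exactly this verification-of-faithfulness step: ensuring that \sys can always catch a genuine cheat within the bounded look-ahead window, while also ensuring that \sys cannot win by spuriously accusing an honest \env, so that the equivalence between winning strategies and valid recurring computations holds in both directions. This requires the formula to give \sys a local, bounded-distance witness for every possible class of error (wrong transition, wrong counter update, failed zero-test, negative counter) and to simultaneously force \sys to actually commit to its $y$-values in a way that cannot itself be used to sabotage a correct simulation; getting these two monitoring obligations to coexist without letting either player escape responsibility is the heart of the construction, and is where the careful choice of constraints and the $G$-guarded winning condition does the work.
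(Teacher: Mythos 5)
Your reduction skeleton is the same as the paper's: reduce from repeated control-state reachability of a two-counter machine, encode the counters as the differences $x-y$ at even and odd rounds, let \env select transitions via equality patterns on its future-blind variables, make \sys the referee, and take a winning condition of the form ``\env made a mistake, or the simulation is faithful and the designated state recurs infinitely often.'' Your treatment of zero-tests ($x=y$), negativity ($F(x<y)$ as an \env mistake), and the recurrence conjunct also matches the paper.

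However, there is a genuine gap, and it is exactly the step you explicitly defer as ``the heart of the construction'': with only $<$ and $=$ over a bounded window, one cannot directly say that a counter changes by \emph{exactly} one, and your proposal never supplies the mechanism that enforces this. The paper's solution is two-fold. First, \sys is required (by a conjunct of $\Phi_s$ of the form $G(y = \nxt y) \vee F(\mathit{catch})$) to keep $y$ constant unless it blows the whistle; as long as $y$ is constant, counter updates are reflected in $x$ alone, so ``increments,'' ``decrements,'' and ``stays equal'' become order constraints between $x$ and $\nxt^2 x$, which are expressible. Second, the one condition that order constraints cannot express---``the counter jumped by at least $2$''---is delegated to \sys as an interpolation move: \sys may break the constancy of $y$ by placing $\nxt^2 y$ strictly between $x$ and $\nxt^2 x$, and the formula $F(((x < \nxt^2 y) \wedge (\nxt^2 y < \nxt^2 x)) \vee ((\nxt^2 x < \nxt^2 y) \wedge (\nxt^2 y < x)))$ is then declared an \env mistake. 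The discreteness of $\Int$ and $\Nat$ is what prevents spurious accusations: if the update is exactly one, there is no integer strictly in between, so an honest \env cannot be framed; note also that \sys moves after seeing \env's $x$ in each round, so a genuine over-jump can always be caught. This mechanism is precisely where the proof uses the domain being $\Int$ or $\Nat$---over dense domains the interpolation would always succeed and, consistently, the paper shows realizability is decidable there---so resolving your ``coexisting monitoring obligations'' is not a routine detail but the actual content of the theorem, and without it the equivalence between winning strategies and recurring computations is not established.
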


\begin{proof}
Given a 2-counter machine, we design an instance of the CLTL realizability problem. Let $x$ be the look-ahead variable owned by the \env player, 
and $y$ be the look-ahead variable owned by the \sys player. 
Corresponding to every transition $t$ of the counter machine, let $u_t$ 
and $v_t$ be two future-blind variables owned by the \env 
such that at every position of the concrete model built during the game, 
$u_t = v_t$ would indicate that the transition $t$ is taken at that 
position and $\neg(u_t = v_t)$ indicates that transition $t$ is not taken 
at that position. For the sake of notational convenience, we shall 
assume that corresponding to each such $t$, the \env player 
has a Boolean variable $p_t$ and we enforce that at each position $p_t$ 
holds iff $(u_t = v_t)$ holds.

Now, using the variables $x$ and $y$, we shall encode the counter 
$c_1$ at the even positions and the counter $c_2$ at the odd positions. 
The role of the \env is to choose a transition of the counter 
machine and then, both players together participate in ensuring that the 
counters are updated according to the transition chosen. In addition, 
the \sys player (using the variable $y$) also ensures that the 
\env player neither makes an illegal transition nor updates the 
counters incorrectly.

Suppose $\sigma$ is the concrete model built during a game. We 
denote by $x_i$ the value of $x$ at position $i$. Similarly for $y_i$. 
The 
initial value of the counter $c_1$ is given by $x_0 - y_0$ and the initial 
value of $c_2$ is given by $x_1 - y_1$. The $i^{\textnormal{th}}$ 
transition chosen by the \env is in the $2i^{\textnormal{th}}$ 
position of the valuation sequence of the concrete model. The value of 
the counter $c_1$ just after taking the $i^{\textnormal{th}}$ transition 
is updated at the $2i^{\textnormal{th}}$ position and is given by 
$(x_{2i} - y_{2i})$ and the value of the counter $c_2$ just after taking 
the $i^{\textnormal{th}}$ transition is updated at the 
$(2i+1)^{\textnormal{st}}$ position and is given by $(x_{2i+1} - 
y_{2i+1})$. Let the set $\Phi_e$ consist of the following formulas, each 
of which denotes a 'mistake' made by the \env player.

\begin{itemize}
    \item The \env chooses some transition in either the 
    $0^{\textnormal{th}}$ or the $1^{\textnormal{st}}$ position.
    
    \begin{equation*}
        (\bigvee_{t\textnormal{ is a any transition}}p_t) \vee {\nxt}( 
        \bigvee_{t\textnormal{ is a any transition}}p_t) 
    \end{equation*}
    
    \item First transition chosen, is not an initial transition
    
    \begin{equation*}
        {\nxt}^2 (\bigvee_{t\textnormal{ is not an initial transition}}p_t)
    \end{equation*}
    
    \item \env chooses more than one transition at some position
    \begin{equation*}
        {\nxt}^2F (\bigvee_{t \not = t'} (p_t \wedge p_{t'}))
    \end{equation*}
    
    \item \env chooses a transition at two consecutive positions
    \begin{equation*}
        {\nxt}^2F (\bigvee_{t, t'} (p_t \wedge {\nxt}p_{t'}))
    \end{equation*}
    
    \item \env does not choose any transition at two consecutive 
    positions (other than the $0^{\textnormal{th}}$ and 
    $1^{\textnormal{st}}$ positions)
    \begin{equation*}
        {\nxt}F(\bigwedge_t (\neg p_t) \wedge \bigwedge_t {\nxt}(\neg p_t))
    \end{equation*}
    
    \item Consecutive transitions are not compatible
    
    \begin{equation*}
        {\nxt}^2F (\bigvee_{t'\textnormal{ cannot come after }t} (p_t \wedge 
        {\nxt}^2p_{t'}))
    \end{equation*}
    
    \item A transition increments $c_1$ but the value of $x$ does not 
    increase.
    
    \begin{equation*}
        F(\bigvee_{t\textnormal{ increments }c_1} {\nxt}^2 p_t \wedge \neg (x 
        < {\nxt}^2x))
    \end{equation*}
    
    \item A transition increments $c_2$ but the value of $x$ in the 
    position just after the transition does not increase.
    
    \begin{equation*}
        F(\bigvee_{t\textnormal{ increments }c_2} {\nxt} p_t \wedge \neg (x < 
        {\nxt}^2x))
    \end{equation*}
    
    \item A transition decrements $c_1$ but the value of $x$ does not 
    decrease.
    
    \begin{equation*}
        F(\bigvee_{t\textnormal{ decrements }c_1} {\nxt}^2 p_t \wedge \neg 
        (x > {\nxt}^2x))
    \end{equation*}
    
    \item A transition decrements $c_2$ but the value of $x$ in the 
    position just after the transition does not decrease.
    
    \begin{equation*}
        F(\bigvee_{t\textnormal{ decrements }c_2} {\nxt} p_t \wedge \neg (x > 
        {\nxt}^2x))
    \end{equation*}
    
    \item A transition demands that the counter $c_1$ remain same, but 
    the value of $x$ changes at that position.
    
    \begin{equation*}
        F(\bigvee_{t\textnormal{ does not change }c_1} {\nxt}^2 p_t \wedge 
        \neg (x = {\nxt}^2x))
    \end{equation*}
    
    \item A transition demands that the counter $c_2$ remain same, but 
    the value of $x$ changes at that position.
    
    \begin{equation*}
        F(\bigvee_{t\textnormal{ does not change }c_2} {\nxt} p_t \wedge \neg 
        (x = {\nxt}^2x))
    \end{equation*}
    
    \item A transition tests that the value of $c_1$ is zero but the value 
    of $x$ at that position does not equal the value of $y$
    
    \begin{equation*}
        F(\bigvee_{t\textnormal{ tests }c_1 = 0} p_t \wedge \neg (x = y))
    \end{equation*}
    
    \item A transition tests that the value of $c_2$ is zero but in the next 
    position, the value of $x$ does not equal the value of $y$
    
    \begin{equation*}
        F(\bigvee_{t\textnormal{ tests }c_2 = 0} p_t \wedge \neg {\nxt}(x = y))
    \end{equation*}
    
    \item The value of a counter is negative at some position.
    
    \begin{equation*}
        F(x < y)
    \end{equation*}
    
    \item Either $c_1$ or $c_2$ at some position, is incremented or 
    decremented by more than 1.
    
    \begin{equation*}
        F (((x < {\nxt}^2 y) \wedge ({\nxt}^2 y < {\nxt}^2 x)) \vee (({\nxt}^2 x < {\nxt}^2 y) 
        \wedge ({\nxt}^2 y < x)))
    \end{equation*}
    
    \item For some transition $t$, at some position, $p_t$ is true but 
    $u_t \ne v_t$ or vice versa.
    
    \begin{equation*}
        F (\bigvee_{t}\neg ((u_t = v_t) \iff p_t))
    \end{equation*}
\end{itemize}

The set $\Phi_s$ consists of the following formulas, each of which 
denotes constraints that the \sys has to satisfy.

\begin{itemize}
    \item The value of $y$ is same in the first three positions.
    
    \begin{equation*}
        (y = {\nxt}y) \wedge {\nxt} (y = {\nxt}y)
    \end{equation*}
    
    \item The initial value of the counter $c_1$ is positive. (This 
    combined with the previous constraint automatically ensures that the 
    initial value of counter $c_2$ is also positive)
    
    \begin{equation*}
        y \geq x
    \end{equation*}
    
    \item The \sys ensures that either the value of $y$ remains 
    constant throughout or it is used at some position to catch a mistake 
    (of incrementing or decrementing a counter value) made by the 
    \env
    
    \begin{equation*}
        G(y = {\nxt}y) \vee F(((x < {\nxt}^2 y) \wedge ({\nxt}^2 y < {\nxt}^2 x)) \vee (({\nxt}^2 
        x < {\nxt}^2 y) \wedge ({\nxt}^2 y < x)))
    \end{equation*}
    
    \item If the \env has not made a mistake at any point, then, 
    the halting state is reached infinitely often.
    
    \begin{equation*}
        G (y = {\nxt}y) \implies GF (\bigvee_{t\textnormal{ is halting}} p_t)
    \end{equation*}
\end{itemize}
The winning condition of the CLTL game is given by:  

\begin{equation*}
	\bigvee \Phi_e \vee \bigwedge \Phi_s
\end{equation*}
    
For the \sys player to win, either one of the formulas in $\Phi_e$ 
	must hold or all the formulas in $\Phi_s$ must be true. Hence, for 
	the 
	\sys to win, the \env should make a mistake during the 
	simulation or both players together simulate the 2-counter machine 
	correctly and the halting state is reached infinitely often. Thus, the 
	\sys has a winning strategy if and only if the 2-counter machine 
	reaches the halting state.
\end{proof}

\section{Symbolic Models}
\label{sec:symbModels}
The models of CLTL are infinite sequences over infinite alphabets. Frames, 
introduced in 
\cite{DD07}, abstract them to finite alphabets. We adapt frames to 
constraint systems of the form $(D,<,=)$. Conceptually, frames and symbolic models as we will define here are almost the same as introduced in \cite{DD07}, where the authors used these notions to solve the satisfiability problem for CLTL. For the purpose of CLTL games, we use slightly different definitions and notations, as this makes it easier to present game-theoretic arguments. For the rest of the paper, we shall assume that the set of variables $V$ is finite. And unless mentioned otherwise, we shall assume that $D$ is $\Int$, $\Nat$ or a domain that satisfies a so-called completion property.

Suppose that the first player owns the variables $x,z$. The second player 
owns $y$ and wants to ensure that $x < y ~\land ~ y 
< z$ over the domain of integers. It depends on whether 
the gap between the values assigned by the first player to $x$ and to $z$, is large enough for the 
second player 
to push $y$ in between.
\begin{definition}[gap functions]
	\label{def:gapFunctions}
	Given a mapping $m \colon \fbv \to D$, we associate with it a gap function 
	$\gap \colon \fbv \to \Nat$ as follows. Arrange $\fbv$ as $x_0, x_1, 
	\ldots$ 
	such that $m(x_0) \le m(x_1)\le \cdots$. Define the function $\gap$ 
	such that $\gap(x_0)=0$ and 
	$\gap(x_{l+1}) = \gap(x_l) + \ceil{m(x_{l+1}) - 
		m(x_l)}{|\fbv|-1}$ for all $l <|\fbv|-1$. 
\end{definition}
The left hand side of 
the above equation denotes 
the gap between $x_l$ and $x_{l+1}$ according to the $\gap$ function. 
The right hand side denotes the gap 
between the same variables according to the mapping $m$, ceiled at 
$|\fbv|-1$. Since, $\fbv$ is finite, the set of gap functions is also finite.
We use gap functions only for future-blind variables $\fbv$, only for the domains $\Int$ or $\Nat$. Hence, the minus sign '$-$' in the definition of gap functions is interpreted as the usual subtraction over $\Int$ or $\Nat$.

The following 
definition formalizes how a frame captures information about orders 
and gaps for $s$ 
successive positions.
\begin{definition}[Frames]
	\label{def:frames}
	Given a number $s \ge 1$, an $s$-frame $f$ is a pair 
	$(\le_f,\gap_f)$, where $\le_f$ is a total pre-order\footnote{a 
	reflexive and transitive relation such that for all $x,y$, either $x \le_f 
	y$ or $y \le_f x$} on the set of 
	look-ahead terms $\lat{s-1}$ and $\gap_f \colon \fbv \times [0,s-1] \to 
	\Nat$ 
	is a function such that for all $i \in [0,s-1]$, $\lambda x.\gap_f(x,i)$\footnote{Note that we could have used a function $h_i(x) = \gap_f(x,i)$ instead of using the lambda notation. But this introduces a new notation---the function $h_i$, which will not be used anywhere else.}
	is a gap function.
\end{definition}

In the notation $s$-frame, $s$ is intended to denote the size of the 
frame---the number of successive positions about which information 
is captured. The current position and the following $(s-1)$ positions are 
considered, 
for which the look-ahead terms in $\lat{s-1}$ are needed. We denote 
by $<_f$ and $\equiv_f$ the strict order and equivalence relation 
induced by $\le_f \colon x <_f y$ iff $x \le_f y$ and $y \not\le_f x$ and 
$x \equiv_f y$ iff $x \le_f y$ and $y \le_f x$.

We will deal with symbolic models that constitute sequences of frames. 
An $s$-frame will capture information about the first $s$ 
positions of a model. If this is followed by a $(s+1)$-frame, it will 
capture 
information about the first $(s+1)$ positions of the model. Both frames 
capture information about the first $s$ positions, so they must be 
consistent about the information they have about the shared positions. 
Similarly, an $s$-frame meant for positions $i$ to $i+s-1$ may be 
followed by another $s$-frame meant for positions $i+1$ to $i+s$. 
The two frames must be consistent about the positions $i+1$ to 
$i+s-1$ that they share. The following definition formalizes these 
requirements.
\begin{definition}[One-step compatibility]
	For $s \ge 1$, an $s$-frame $f$ and an $(s+1)$-frame $g$, the pair 
	$(f,g)$ is one-step compatible if the following conditions are 
	true.
	\begin{itemize}
		\item For all terms $t_1,t_2 \in \lat{s-1}$, $t_1 \le_f t_2$ iff $t_1 
		\le_g 
		t_2$.
		\item For all $j  \in [0,s-1]$ and all variables $x\in \fbv$, 
		$\gap_f(x,j) = 
		\gap_g(x,j)$.
	\end{itemize}
    For $s \ge 2$ and $s$-frames $f,g$, the pair $(f,g)$ is one-step 
    compatible if:
    \begin{itemize}
    	\item For all terms $t_1,t_2 \in \lat{s-2}$, $\nxt t_1 \le_f \nxt 
    	t_2$ iff $t_1 \le_g t_2$ and
    	\item for all $j \in [0,s-2]$ and all variables $x \in \fbv$, 
    	$\gap_f(x,j+1) = 
    	\gap_g(x,j)$.
    \end{itemize}
\end{definition}

Fix a number $k\ge 0$ and consider formulas of $\nxt$-length $k$. 
A 
symbolic model is a sequence $\rho$ of frames such that for all $i \ge 0$, 
$\rho(i)$ is an $\ceil{i+1}{k+1}$-frame and $(\rho(i),\rho(i+1))$ is 
one-step 
compatible. CLTL formulas can be interpreted on symbolic models, using symbolic semantics $\models_s$ as explained next. To check if the $i$\textsuperscript{th} position of $\rho$ symbolically satisfies the atomic constraint $t_1 < t_2$ (where $t_1,t_2$ are look-ahead terms), we check whether $t_1 < t_2$ according to the $i$\textsuperscript{th} frame $\rho(i)$. In formal notation, this is written as $\rho,i \models_s t_1 < t_2$ if $t_1 <_{\rho(i)} t_2$. For 
future-blind variables $x,y$, $\rho,i \models_s x < y$ if $\gap_{\rho(i)}(x,0) < \gap_{\rho(i)}(y,0)$. The 
symbolic satisfaction relation $\models_s$ is extended to all CLTL 
formulas of $\nxt$-length $k$ by induction on structure of the 
formula, 
as done for propositional LTL.
%In this symbolic semantics, the prefix 
%$\rho(0)\ldots \rho(k-1)$ is ignored if $k \ge 1$, since the information 
%captured there is also present in $\rho(k)$. 
To check whether $\rho,i 
\models_s t_1 < t_2$ in this symbolic 
semantics, we only need to check $\rho(i)$, the $i$\textsuperscript{th} 
frame in $\rho$, unlike the CLTL semantics, where we may need to 
check other positions also. In this sense, the symbolic semantics lets us 
treat CLTL formulas as if they were formulas in propositional LTL and 
employ techniques that have been developed for propositional LTL. But 
to complete that task, we need a way to go back and forth between 
symbolic and concrete models.

Given a concrete model $\sigma$, we associate with it a symbolic 
model $\sm(\sigma)$ as follows. Imagine we are looking at the concrete model through a narrow aperture that only allows to view $k+1$ positions of the concrete model, and we can slide the aperture to view different portions. The $i$\textsuperscript{th} frame of 
$\sm(\sigma)$ will 
capture information about the portion of the concrete model visible when the right tip of the aperture is at position $i$ of the concrete model (so the left tip will be at $i - \ceil{i}{k}$). Formally, the total pre-order of the $i$\textsuperscript{th} frame is the one induced by the valuations along the positions $i-\ceil{i}{k}$ to $i$ of the concrete model. For every $j \in 
[0,\ceil{i}{k}]$, the function $\lambda x.\gap_f(x,j)$ of the $i$\textsuperscript{th} frame is the gap function associated with the mapping $\sigma(i - \ceil{i}{k}+j) \restfbv$.

For every concrete model, there is an associated symbolic model, but the converse is not true. E.g., if every frame in a symbolic model requires 
$\nxt x < x$, the corresponding concrete model needs to have an 
infinite descending chain, which is not possible in the constraint system 
$(\Nat, <, =)$. We
say that 
a symbolic model $\rho$ admits a concrete model if there 
exists a concrete model $\sigma$ such that $\rho = \sm(\sigma)$.
\begin{lemma}[{\cite[Lemma~3.1]{DD07}}]
	\label{lem:symbModelConcrMdel}
	Let $\phi$ be a CLTL formula of $\nxt$-length $k$. Let $\sigma$ 
	be a 
	concrete model and $\rho=\sm(\sigma)$. Then $\sigma,0 \models 
	\phi$ iff $\rho,k \models_s \phi$.
\end{lemma}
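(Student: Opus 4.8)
The plan is to strengthen the statement to arbitrary positions and then induct on the structure of the formula. Concretely, I would prove that for every $i \ge 0$ and every subformula $\psi$ of $\phi$, $\sigma, i \models \psi$ iff $\rho, i+k \models_s \psi$; the lemma is then the instance $i=0$. The strengthening is forced by the temporal operators, whose inductive steps move the evaluation position, so I cannot keep it pinned at concrete position $0$ and symbolic position $k$. Note that since $i \ge 0$, the symbolic position $i+k$ is always at least $k$, so $\rho(i+k)$ is a full $(k+1)$-frame whose pre-order is defined on all of $\lat{k}$; because $\phi$ has $\nxt$-length $k$, every look-ahead term appearing in an atomic constraint lies in $\lat{k}$ and is therefore ordered by this frame.

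The base cases carry the real content and rest entirely on the definition of $\sm(\sigma)$. Fix $i \ge 0$ and write $I = i+k$; since $I \ge k$ we have $\ceil{I}{k} = k$, so the frame $\rho(I)$ captures exactly the concrete positions $i, i+1, \dots, i+k$, with its left tip at position $i$, and the index $j$ of a look-ahead term or gap entry corresponds to concrete position $i+j$. For a look-ahead constraint $\nxt^{i_1} x_1 < \nxt^{i_2} x_2$, the symbolic semantics gives $\rho, I \models_s \nxt^{i_1}x_1 < \nxt^{i_2}x_2$ iff $\nxt^{i_1}x_1 <_{\rho(I)} \nxt^{i_2}x_2$, which by the construction of $\sm(\sigma)$ holds iff $\sigma(i+i_1)(x_1) < \sigma(i+i_2)(x_2)$, and this is exactly $\sigma, i \models \nxt^{i_1}x_1 < \nxt^{i_2}x_2$; the case of $=$ is identical using $\equiv_{\rho(I)}$. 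For a future-blind constraint $x < y$, the symbolic semantics compares $\gap_{\rho(I)}(x,0)$ with $\gap_{\rho(I)}(y,0)$, and these are the values of the gap function associated with $\sigma(i)\restfbv$ (the $j=0$ entry corresponds to concrete position $i$). Here I would record the elementary fact that a gap function preserves the order and equality of its underlying mapping, which follows from Definition~\ref{def:gapFunctions} because $\ceil{m(x_{l+1}) - m(x_l)}{|\fbv|-1}$ is strictly positive exactly when $m(x_{l+1}) > m(x_l)$. Hence $\rho, I \models_s x < y$ iff $\sigma(i)(x) < \sigma(i)(y)$, matching $\sigma, i \models x < y$.

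The inductive steps are routine once the generalization is in place. The Boolean cases $\lnot \psi$ and $\psi_1 \lor \psi_2$ are immediate, since both semantics extend identically over the connectives. For $\nxt \psi$, $\sigma, i \models \nxt \psi$ iff $\sigma, i+1 \models \psi$, which by the induction hypothesis at position $i+1$ is $\rho, (i+1)+k \models_s \psi$, i.e.\ $\rho, (i+k)+1 \models_s \psi$, i.e.\ $\rho, i+k \models_s \nxt \psi$. For $\psi_1 \until \psi_2$, applying the induction hypothesis to every position witnessing the until and substituting the running index by its $k$-shift in the range quantifiers transports a concrete witness at a position $\ge i$ to a symbolic witness at a position $\ge i+k$ and back. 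I expect the main obstacle to be purely the bookkeeping in the base case: verifying that the left-tip alignment of $\rho(I)$ places concrete position $i$ at symbolic index $0$, hence concrete position $i+i_1$ at the look-ahead term $\nxt^{i_1}$, together with the order-and-equality preservation of gap functions. Everything downstream is a mechanical induction mirroring the propositional LTL case.
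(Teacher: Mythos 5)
Your proof is correct. Note that the paper does not prove Lemma~\ref{lem:symbModelConcrMdel} at all---it imports it from \cite{DD07} (Lemma~3.1 there)---and your argument is precisely the standard one behind that citation: strengthen to ``$\sigma,i \models \psi$ iff $\rho,i+k \models_s \psi$ for all $i \ge 0$ and subformulas $\psi$'', check the atomic cases via the left-tip alignment of the frame $\rho(i+k)$ and the order/equality preservation of gap functions, and close with a routine structural induction.
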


\section{Decidability Over Domains Satisfying the Completion Property}
\label{sec:decidableCompletion}
In this section, we prove that the CLTL realizability problem is 
decidable if the domain satisfies a so called completion property. Let 
$C$ be a set of constraints over a constraint system $\csys$. We call 
$C$ satisfiable if 
there is a valuation satisfying all the constraints in $C$. For a subset $U 
\subseteq V$ of variables, $C \restr U$ is the subset of $C$ 
consisting of those constraints that only use terms built with variables in 
$U$. A partial valuation $v'$ is a valuation for the terms occurring in 
$C \restr U$. We say $\csys$ has the 
completion 
property if for every satisfiable set of constraints $C$ and every subset 
$U \subseteq V$, every partial valuation $v'$ satisfying $C \restr U$ 
can be 
extended to a valuation $v$ satisfying $C$. An example of a constraint 
system which does not satisfy the completion property is $(\Int,<,=)$, 
since for the set of constraints $C = \set{x < y,x < z,z < y}$ over the set 
of variables $V = \set{x, y, z}$, the partial valuation $v \colon x \mapsto 0, y 
\mapsto 1$ satisfies the constraints in $C$ involving $x$ and $y$, but 
cannot be extended to a valuation which satisfies the constraints $x < 
z$ and $z < y$ in $C$. The constraint systems $(\mathbb{Q},<,=)$ 
and $(\mathbb{R}, <,=)$ satisfy the completion property. Also, one can easily see that for every infinite domain $D$, the constraint system $(D,=)$ always satisfies the completion property.

It is known that CLTL satisfiability is decidable for constraint systems 
that satisfy the completion property \cite{DD07,BC02}. 
The completion property of a constraint system is closely related to the 
denseness of the underlying domain. A constraint system satisfies the completion property if and only if the underlying domain is dense and open \cite[Lemma 5.3]{DD07}. Now we prove that 
for constraint 
systems of the form $(D,<,=)$ that satisfy the completion property, the 
CLTL realizability problem is decidable. This 
holds even when both players have look-ahead variables, so we don't 
need to treat future-blind variables separately. Hence, we set 
$\fbv$ to be empty and ignore gap functions in frames.

We reduce CLTL games to parity games on finite graphs, which are 
known to be decidable (see, e.g., \cite{M02}). In a CLTL game, \env chooses 
a valuation for $\ev$, which we track in our finite graph by storing the 
positions of the new values relative to the values chosen in the previous 
rounds. We do this with partial frames, which we define next.
\begin{definition}[Partial frames and compatibility]
	For $s \ge 1$, a partial $s$-frame $\pfr$ is a total pre-order 
	$\le_\pfr$ on the 
	set of terms $\lat{s-2} \cup 
	\set{\nxt^{s-1} y \mid y \in \ev}$. For $s \ge 0$, an $s$-frame $f$ 
	and an $(s+1)$-partial frame $\pfr$, the pair $(f,\pfr)$ is one step 
	compatible if for all $t_1, t_2 \in \lat{s-1}$, $t_1 \le_f 
	t_2$ iff $t_1 \le_\pfr t_2$. For $s \ge 2$, an 
	$s$-frame $f$ and an $s$-partial frame $\pfr$, the pair $(f,\pfr)$ is 
	one-step compatible if for all $t_1, t_2 \in \lat{s-2}$, 
	$\nxt t_1 \le_f \nxt t_2$ iff $t_1 \le_\pfr t_2$. For 
	$s 
	\ge 2$, an $s$-partial frame $\pfr$ and an $s$-frame $f$, $(\pfr,f)$ 
	is one step compatible if for all $t_1, t_2 \in \lat{s-2} \cup \set{\nxt^{s-1} y \mid y \in \ev}$, 
	$t_1 \le_\pfr t_2$ iff  $t_1 \le_f t_2$.
\end{definition}
In the set of terms $\lat{s-2} \cup 
\set{\nxt^{s-1} y \mid y \in \ev}$ used in partial frames, the terms in 
the first set represent values chosen in the previous rounds and the 
terms in the second set represent values chosen by \env for $\ev$ in 
the current round.

Note that a partial $s$-frame is a total pre-order on the set of terms 
$\lat{s-2} \cup \set{\nxt^{s-1} y \mid y \in \ev}$ and an $s$-frame is 
a total pre-order on the set of terms $\lat{s-1}$. Let $pf$ be an 
$s$-partial frame and let $f$ be an $s$-frame such that $(pf,f)$ is 
one-step compatible. Suppose $C_1 = \set{t_1 = t_2 \mid t_1 
\equiv_{f} t_2} \cup \set{t_1 < t_2 \mid t_1 <_{f} t_2}$ and $C_2 = 
\set{t_1 = t_2 \mid t_1 \equiv_{pf} t_2 } \cup \set{t_1 < t_2 \mid t_1 
<_{pf} t_2}$. Clearly, $C_2$ is a subset of $C_1$ skipping all those 
constraints that contain \sys variables corresponding to the 
$s^{\text{th}}$ position. If a finite sequence of mappings $({\emap}_1 \oplus 
{\smap}_1)...({\emap}_{s-1} \oplus {\smap}_{s-1}){\emap}_{s}$ satisfies the pre-order $\le_\pfr$ then it satisfies the 
constraints in $C_2$. Since the constraint system satisfies the 
completion property, there must exist a \sys mapping 
${\smap}_{s}$ for the \sys variables at position $s$ such that 
the sequence of mappings $({\emap}_1 \oplus {\smap}_1)...({\emap}_{s} 
\oplus {\smap}_{s})$ satisfies the constraints in $C_1$ and hence, also satisfies
the pre-order $\le_f$. Thus, we have the following proposition:

\begin{proposition}\label{prop:partial_frames_and_completion}
Given $s \ge 1$, suppose $({\emap}_1 \oplus {\smap}_1)...({\emap}_i \oplus {\smap}_i){\emap}$ 
is a sequence of mappings, where ${\emap}_1,\ldots,{\emap}_i,\emap \in \eMap$, ${\smap}_1,\ldots,{\smap}_i \in \sMap$, $pf$ is the $s$-partial frame induced by 
${\emap}$ and the previous $(s-1)$ mappings in the sequence, and $f$ 
is an $s$-frame such that $(pf,f)$ is one-step compatible (where $i \ge 
s$). If the constraint system satisfies the completion property, then ${\emap}$ can be extended to a mapping ${\emap} \oplus {\smap}$ 
such that $f$ is the $s$-frame associated with ${\emap} \oplus 
{\smap}$ and the previous $(s-1)$ mappings in the sequence.
\end{proposition}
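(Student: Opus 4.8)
The plan is to carry out the argument sketched before the statement as a single application of the completion property, with the look-ahead terms in $\lat{s-1}$ playing the role of the variable set. This instantiation is legitimate because the completion property is a property of the domain $(D,<,=)$ alone---by the cited characterization it holds exactly when the domain is dense and open---and so may be used with any finite collection of objects as ``variables,'' in particular with the terms of $\lat{s-1}$ treated as distinct variables. I would encode the target frame $f$ as the constraint set $C_1 = \set{t_1 = t_2 \mid t_1 \equiv_f t_2} \cup \set{t_1 < t_2 \mid t_1 <_f t_2}$ and the partial frame $pf$ as $C_2 = \set{t_1 = t_2 \mid t_1 \equiv_{pf} t_2} \cup \set{t_1 < t_2 \mid t_1 <_{pf} t_2}$, so that a valuation of the terms realizes $\le_f$ (resp.\ $\le_{pf}$) iff it satisfies $C_1$ (resp.\ $C_2$).

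I would then record three facts. First, $C_1$ is satisfiable: $\le_f$ is a total pre-order on the finite set $\lat{s-1}$, so sending its $\equiv_f$-classes to an increasing sequence of distinct domain elements realizes it (possible since a dense open domain is infinite). Second, taking $U = \lat{s-2} \cup \set{\nxt^{s-1} y \mid y \in \ev}$ to be the set of terms on which $pf$ is defined, one-step compatibility of $(pf,f)$ states exactly that $\le_{pf}$ and $\le_f$ agree on all pairs from $U$; hence $C_1 \restr U = C_2$, that is, $C_2$ is $C_1$ with precisely the constraints mentioning some term $\nxt^{s-1} y$ with $y \in \sv$ removed. Third, the prefix $(\emap_{i-s+2} \oplus \smap_{i-s+2}) \cdots (\emap_i \oplus \smap_i)\,\emap$ determines a partial valuation $v'$ on $U$---the term $\nxt^j z$ records the value of $z$ at the position $j$ steps into this length-$s$ window, with $\emap$ supplying the values of $\nxt^{s-1} y$ for $y \in \ev$---and, since this prefix induces $pf$, the valuation $v'$ realizes $\le_{pf}$ and so satisfies $C_2 = C_1 \restr U$.

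With these in hand, the main step is a direct appeal to the completion property applied to the satisfiable set $C_1$, the subset $U$, and the partial valuation $v'$: it yields a valuation $v$ of all terms in $\lat{s-1}$ that extends $v'$ and satisfies $C_1$. I would then set $\smap(y) = v(\nxt^{s-1} y)$ for each $y \in \sv$. Since $v$ agrees with $v'$ off the terms $\set{\nxt^{s-1} y \mid y \in \sv}$, the earlier positions and the choice $\emap$ made by the \env are untouched; and since $v$ satisfies $C_1$, the valuations of the completed window $(\emap_{i-s+2} \oplus \smap_{i-s+2}) \cdots (\emap \oplus \smap)$ realize $\le_f$. By definition of the frame associated with a sequence of mappings, this frame is exactly $f$, as required.

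The one point demanding care is the second paragraph: verifying that one-step compatibility yields the exact equality $C_1 \restr U = C_2$ (so that $v'$ really satisfies $C_1 \restr U$ and not merely a subset of it), and making explicit that the completion property may be applied with $\lat{s-1}$ as the variable set. This last point is what lets us freeze the \sys values at earlier positions while leaving only the current \sys values to be completed, something a restriction to a subset of the game variables could not achieve, since a single variable contributes terms $\nxt^0 y, \ldots, \nxt^{s-1} y$ across all positions of the window. Everything else is bookkeeping or an immediate instantiation of the completion property.
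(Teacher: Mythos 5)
Your proof is correct and takes essentially the same route as the paper: the paper's own argument (the paragraph preceding the proposition) likewise encodes $f$ and $pf$ as the constraint sets $C_1$ and $C_2$, observes via one-step compatibility that $C_2$ is exactly the part of $C_1$ not mentioning the current-position \sys terms, and applies the completion property once to complete the partial valuation induced by the prefix. Your extra care---checking that $C_1$ is satisfiable, that $C_1 \restr U = C_2$ holds exactly, and that the completion property must be instantiated with the terms of $\lat{s-1}$ as the variables (rather than with a subset of the game variables)---only makes explicit what the paper leaves implicit.
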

We know that any LTL formula $\phi$ can be converted to an equivalent non-deterministic B\"{u}chi automaton with an exponential number of states in the size of $\phi$ in \textsc{EXPTIME} \cite{VW86}. Now, every non-deterministic B\"{u}chi automaton $B$ with $n$ states can be converted to a deterministic parity automaton \cite[Chapter 1]{GT02} with number of states exponential in $n$ and number of colours polynomial in $n$  \cite[Theorem 3.10]{P07}. Using these results, it is easy to see that given a CLTL formula $\phi$, we can construct a deterministic parity automaton $A_\phi$ with set of states $Q$ and with number of colours $d$, accepting the set of all sequences of frames that symbolically satisfy $\phi$, such that $|Q|$ is double exponential in the size of $\phi$ and $d$ is exponential in the size of $\phi$. Now we design parity games to simulate CLTL games.
\begin{definition}
	\label{def:parityGame}
	Let $\phi$ be the CLTL formula defining the winning condition for a 
	CLTL game and $k$ be its $\nxt$-length. Let $\frames$ denote the set of all $s$-frames for $s\in [0,k]$. Let $A_\phi$ be a 
deterministic parity automaton accepting the set of all sequences of 
	frames that symbolically satisfy $\phi$, with $Q$ being the set of 
	states, $q_I\in Q$ being the initial state and $d$ being the number of colours. We define a parity game 
	with \env 
	vertices $V_e = \set{(f,q_I) \mid f \text{ is an } s\text{-frame}, 0 \le s 
	\le k} 
	\cup \set{(f,q) \mid f \text{ is a } (k+1)\text{-frame}, q \in Q}$. The 
	set of \sys vertices is $V_s = \set{(f,q_I,\pfr) \mid f \text{ is an } 
	s\text{-frame}, 0 \le s \le k, \pfr \text{ is an }(s+1)\text{-partial 
	frame}} \cup \set{(f,q,\pfr) \mid f \text{ is a } (k+1)\text{-frame}, 
	\pfr \text{ is a }(k+1)\text{-partial frame}}$. There is an edge from 
	$(f,q)$ to $(f,q,\pfr)$ if $(f,\pfr)$ is one-step compatible, $f$ is an 
	$s$-frame for some $s$ and $\pfr$ is a partial 
	$\ceil{s+1}{(k+1)}$-frame. There is 
	an 
	edge from $(f,q_I,\pfr)$ to $(g,q_I)$ if $(\pfr,g)$ is one step 
	compatible and $g$ is an $s$-frame for $s \in [1,k]$. There is an 
	edge from $(f,q,\pfr)$ to $(g,q')$ if $(\pfr,g)$ is one-step 
	compatible, 
	$g$ is a $(k+1)$-frame and $A_\phi$ goes from $q$ to $q'$ on 
	reading $g$. Vertices $(f,q)$ and $(f,q,\pfr)$ get the same colour as 
	$q$ in the parity automaton $A_\phi$. The initial vertex is 
	$(\bot,q_I)$, where $\bot$ is the trivial $0$-frame.
\end{definition}
The edges of the parity game above are from $V_s$ to $V_e$ or 
vice-versa. They are designed such that $q_I$ is the 
only state used for the first $k$ rounds, where the frames will be of 
size at 
most $k$ (this is because for the \sys to win in a play of the parity game generating a frame sequence $\rho$, we only require that the sequence $\boldsymbol{\rho[k,\infty)}$ symbolically satisfy $\phi$, according to Lemma~\ref{lem:symbModelConcrMdel}). For the first $(k+1)$ frame, an edge from a \sys vertex of 
the form $(f,q_I,\pfr)$ to an \env vertex of the form $(g,q')$ is taken 
and from then on, we track the state of the parity automaton as it reads 
the sequence of frames contained in the sequence of vertices that are 
chosen by the players in the game.
\begin{lemma}
	\label{lem:cltlToParityGame}
	For a CLTL game over a constraint system satisfying the completion 
	property with winning condition given by a formula $\phi$, \sys has 
	a winning strategy iff she has a positional winning strategy in the 
	parity game given in Definition~\ref{def:parityGame}.
\end{lemma}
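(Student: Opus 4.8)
The plan is to set up a tight correspondence between plays of the parity game of Definition~\ref{def:parityGame} and plays of the CLTL game: \env's choice of a partial frame mirrors her choice of a concrete mapping $\emap_i \in \eMap$, while \sys's completion of it to a full frame mirrors her response $\smap_i \in \sMap$. Because parity games over finite graphs are positionally determined \cite{M02}, it is enough to prove that \sys wins the CLTL game iff she has \emph{some} winning strategy in the parity game; the positional strategy of the statement is then supplied by determinacy. The basic bookkeeping fact is that a play of the parity game yields a sequence of frames $\rho$ whose vertices are coloured by the run of $A_\phi$ starting at the first $(k+1)$-frame (the earlier vertices all carry the colour of $q_I$ and form a finite prefix that is irrelevant to the parity condition); hence \sys wins the play iff $A_\phi$ accepts the suffix of $\rho$ from position $k$, equivalently iff $\rho,k \models_s \phi$.

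From a winning parity strategy to a winning CLTL strategy: let $\tau$ be a winning strategy for \sys in the parity game. I would define $\sstrat$ by simulation. Given a CLTL history ending in \env's current move $\emap_i$, compute the frame sequence induced by the concrete mappings so far and the partial frame $\pfr$ induced by $\emap_i$ together with the previous mappings; this locates a \sys vertex $(f,q,\pfr)$, where $\tau$ prescribes a full frame $g$ with $(\pfr,g)$ one-step compatible. By Proposition~\ref{prop:partial_frames_and_completion}, the completion property lets me extend $\emap_i$ to $\emap_i \oplus \smap_i$ that realizes $g$, and I let $\sstrat$ play this $\smap_i$. Every resulting concrete play $\sigma$ satisfies $\sm(\sigma) = \rho$, the frame sequence of the corresponding $\tau$-play; since $\tau$ wins, $\rho,k \models_s \phi$, so Lemma~\ref{lem:symbModelConcrMdel} gives $\sigma,0 \models \phi$, and $\sstrat$ is winning.

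From a winning CLTL strategy to a winning parity strategy: let $\sstrat$ be winning in the CLTL game. I would build a (history-dependent) winning strategy $\tau$ for \sys in the parity game that carries along a matching concrete play. At a \sys vertex $(f,q,\pfr)$ reached after \env plays $\pfr$, first concretize $\pfr$ into an \env mapping $\emap_i$ compatible with the concrete history fixed so far; this is always possible because the domain is dense and open (equivalently, has the completion property), so the new \env values can be inserted into the existing order exactly as $\pfr$ dictates. Feed this to $\sstrat$ to get $\smap_i$ and let $\tau$ respond with the full frame $g$ induced by $\emap_i \oplus \smap_i$, which is one-step compatible with $\pfr$ by construction. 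Each $\tau$-play then corresponds to a $\sstrat$-play $\sigma$ with $\sm(\sigma)$ equal to the play's frame sequence, so $\sigma,0 \models \phi$, whence $\rho,k \models_s \phi$ by Lemma~\ref{lem:symbModelConcrMdel} and \sys wins. Positional determinacy then upgrades $\tau$ to a positional winning strategy.

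The main obstacle---and the sole use of the hypothesis on the constraint system---is keeping the abstract and concrete plays in lock-step across the turn alternation. In the parity-to-CLTL direction this is precisely Proposition~\ref{prop:partial_frames_and_completion}: after \env has already committed to $\emap_i$, \sys's abstract choice of full frame $g$ must still be concretely realizable, which fails in general (for instance over $(\Int,<,=)$) and holds exactly under the completion property. The CLTL-to-parity direction needs the dual fact, that \env's partial frame is concretely realizable, which uses density and openness. Some care is also needed with the turn order inside a round (\env before \sys, matched by partial frame before completion) and with the indexing that makes $A_\phi$ begin at the first $(k+1)$-frame, so the correspondence lines up exactly with the shift to position $k$ in Lemma~\ref{lem:symbModelConcrMdel}.
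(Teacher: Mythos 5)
Your proposal is correct and takes essentially the same approach as the paper: the same round-by-round simulation in both directions, using the completion property (Proposition~\ref{prop:partial_frames_and_completion} for completing \sys's frame after a concrete \env move, and the dual concretization of \env's partial frame) to keep the concrete and abstract plays in lock-step, and concluding via Lemma~\ref{lem:symbModelConcrMdel} together with positional determinacy of finite parity games. The only cosmetic difference is that the paper invokes Proposition~\ref{prop:partial_frames_and_completion} for both concretization steps, whereas you justify the \env-side step directly from denseness and openness, which is equivalent.
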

\begin{proof}[Proof idea]
	For every play in the CLTL game, there is a corresponding play in the 
	parity game, but the converse is not true in general, since only the 
	order of terms are tracked in the parity game and not the actual 
	values. For constraint systems satisfying the completion property, 
	Proposition~\ref{prop:partial_frames_and_completion} implies that 
	there exist valuations corresponding to all possible orderings 
	of terms, so the converse is also true.
\end{proof}

\begin{proof}
	($\Rightarrow$) Suppose \sys has a winning strategy $\sstrat$ in the 
	CLTL game. We show that \sys has a winning strategy in the 
	parity game. Plays in the parity game are of the form 
	$(\bot, q_I) (\bot, 
	q_I, \pfr_1) (f_1,q_1) (f_1,q_1,\pfr_2) (f_2,q_2) \cdots 
	(f_i,q_i,\pfr_{i+1}) (f_{i+1},q_{i+1}) \cdots$, where $(f_i,\pfr_{i+1})$ 
	and $(\pfr_{i+1},f_{i+i})$ are one-step compatible for all $i$. For 
	any such play $\pi$, let $\pi \restr i$ be $(\bot, q_I) (\bot, 
	q_I, \pfr_1) (f_1,q_1) (f_1,q_1,\pfr_2) (f_2,q_2) \cdots 
	(f_i,q_i,\pfr_{i+1})$. Let $\Pi = \set{\pi \restr i \mid \pi \text{ is a 
	play in the parity game}, i \ge 0}$. We will show the existence of a 
	function $\sstrat_p \colon \Pi \to V_e \times \eMap \times \sMap$ 
	satisfying some properties. Such a function can be used as a strategy 
	by \sys in the parity game: for a play $\pi\restr i$, \sys's response 
	$(f_{i+1},q_{i+1})$ is given by $\sstrat_p$, i.e., $\sstrat_p(\pi\restr i) 
	= ((f_{i+1},q_{i+1}),\emap_{i+1},\smap_{i+1})$. For such plays that 
	\sys plays according $\sstrat_p$, let $\frs(\pi \restr i)$ be the 
	symbolic model 
	$f_1 f_2 \cdots f_{i+1}$ and let $\maps(\pi \restr i)$ be the 
	concrete model 
	$(\emap_1 \oplus \smap_1) (\emap_2 \oplus \smap_2) \cdots 
	(\emap_{i+1} \oplus \smap_{i+1})$.
	
	We will show that there is a 
	function $\sstrat_p$ such that for all plays $\pi$ that 
	\sys 
	plays according to $\sstrat_p$ and all $i \ge 0$ , $\maps(\pi \restr 
	i)$ 
	is a 
	concrete model resulting from \sys playing the CLTL game according 
	to $\sstrat$ and $\frs(\pi \restr i) = \sm(\maps(\pi \restr i))$. 
	We will define such a function $\sstrat_p$ by induction on $i$. We 
	assume this has been done for $i$ and show how to extend to 
	$i+1$. We have $\pi \restr (i+1)=(\bot, q_I) (\bot, 
	q_I, \pfr_1) (f_1,q_1) (f_1,q_1,\pfr_2) (f_2,q_2) \cdots 
	(f_i,q_i,\pfr_{i+1}) (f_{i+1}, q_{i+1}) \\(f_{i+1}, q_{i+1},\pfr_{i+2})$. 
	By 
	induction hypothesis, $f_1 f_2 \cdots f_{i+1}=\sm(\maps(\pi \restr 
	i))$. Since the constraint system satisfies the completion property 
	and $(f_{i+1},\pfr_{i+2})$ is one-step compatible, by 
	Proposition~\ref{prop:partial_frames_and_completion}, there is a 
	mapping 
	$\emap \colon \ev \to D$ such that the symbolic model induced by 
	$\maps(\pi \restr i)\cdot \emap$ is $f_1 f_2 \cdots f_{i+1} \cdot 
	\pfr_{i+2}$. Let $\smap \colon \sv \to D = \sstrat(\maps(\pi \restr i)\cdot 
	\emap)$ be \sys's response in the CLTL game according to 
	$\sstrat$. Let $f_{i+2}$ be the frame such that $f_1 f_2 \cdots 
	f_{i+1}f_{i+2} = \sm(\maps(\pi \restr i)\cdot (\emap \oplus 
	\smap))$. Set $\sstrat_p(\pi \restr (i+1))$ to be 
	$((f_{i+2},q_{i+2}),\emap,\smap)$, where $q_{i+2}$ is the state 
	$A_\phi$ reaches after reading $f_{i+2}$ in state $q_{i+1}$. Now, 
	$\maps(\pi \restr (i+1))$ 
	is a 
	concrete model resulting from \sys playing the CLTL game according 
	to $\sstrat$ and $\frs(\pi \restr (i+1)) = \sm(\maps(\pi \restr 
	(i+1)))$, as required for the inductive construction.
	
	Let $\pi$ be any infinite play in the parity game that \sys plays 
	according to $\sstrat_p$. Then $\maps(\pi)$ is a concrete model 
	resulting from \sys playing the CLTL game according to $\sstrat$ 
	and $\frs(\pi) = \sm(\maps(\pi))$. Since $\sstrat$ is a winning 
	strategy for \sys, $\maps(\pi),0 \models \phi$. We infer from 
	Lemma~\ref{lem:symbModelConcrMdel} that $\frs(\maps(\pi)),k 
	\models_s \phi$. Hence, the sequence of states $q_I, q_1, q_2, 
	\ldots$ contained in the sequence of vertices that are visited in $\pi$ 
	satisfy the parity condition of $A_\phi$. Hence, $\pi$ itself satisfies 
	the parity condition and hence \sys wins $\pi$. Hence, $\sstrat_p$ 
	is a winning strategy for \sys in the parity game.
	
	($\Leftarrow$) Suppose $\sstrat_p$ is a positional strategy for \sys 
	in the parity game. We will show that \sys has a winning strategy 
	$\sstrat$ in the CLTL game. We will define $\sstrat$ by induction on 
	the number of rounds played. For the base case, suppose \env starts 
	by choosing a mapping $\emap_1 \colon \ev \to D$. In the parity game, let 
	\env go to the vertex $(\bot,q_I, \pfr_1)$ in the first round, where 
	$\pfr_1$ is the 
	$1$-partial frame associated with $\emap_1$. Let 
	$(f_1,q_1)=\sstrat_p((\bot,q_I,\pfr_1))$ be \sys's 
	response according to $\sstrat_p$. Since $(\pfr_1,f_1)$ is one-step 
	compatible and the constraint system satisfies the completion 
	property, by Proposition~\ref{prop:partial_frames_and_completion}, 
	$\emap_1$ can be extended to a mapping $\emap_1 
	\oplus 
	\smap_1 \colon V \to D$ such that $f_1$ is the frame associated with 
	$\emap_1 \oplus \smap_1$. Set $\sstrat(\emap_1)$ to be 
	$\smap_1$. After $i$ rounds of the CLTL game, suppose $(\emap_1 
	\oplus \smap_1) \cdots (\emap_i \oplus \smap_i)$ is the resulting 
	concrete model and let $(\bot, q_I) (\bot, 
	q_I, \pfr_1) (f_1,q_1) \cdots (f_i,q_i)$ be the corresponding play in 
	the parity game. Suppose \env chooses $\emap_{i+1}$ in the next 
	round. Let $\pfr_{i+1},f_{i+1},q_{i+1}, \smap_{i+1}$ be obtained 
	similarly as in the base case. Set $\sstrat((\emap_1 
	\oplus \smap_1) \cdots (\emap_i \oplus \smap_i)\cdot 
	\emap_{i+1})$ to be $\smap_{i+1}$.
	
	Suppose $(\emap_1 \oplus \smap_1)(\emap_2 \oplus \smap_2) 
	\cdots$ is an infinite play in the CLTL game that \sys plays according 
	to $\sstrat$. There is a play $(\bot, q_I) (\bot, 
	q_I, \pfr_1) (f_1,q_1) (f_1,q_1,\pfr_2) (f_2,q_2) \cdots $ in the parity 
	game that is winning for \sys. This satisfies the parity condition, 
	hence $A_\phi$ accepts the symbolic model $f_1 f_2 \cdots$. The 
	symbolic model $f_1 f_2 \cdots $ is the one associated with 
	$(\emap_1 \oplus \smap_1)(\emap_2 \oplus \smap_2) \cdots$ by 
	construction of $\sstrat$, so Lemma~\ref{lem:symbModelConcrMdel} 
	implies that $(\emap_1 \oplus \smap_1)(\emap_2 \oplus \smap_2) 
	\cdots,0 \models \phi$. Hence, $\sstrat$ is a winning strategy for 
	\sys in the CLTL game.
\end{proof}

\begin{theorem}
\label{thm:deccompletion}
	The CLTL realizability problem over 
	constraint systems that satisfy the completion property is \textsc{2EXPTIME}-complete.
\end{theorem}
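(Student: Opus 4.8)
The plan is to prove both halves of the completeness statement: membership in \textsc{2EXPTIME} and matching \textsc{2EXPTIME}-hardness. For the upper bound I would rely entirely on Lemma~\ref{lem:cltlToParityGame}, which reduces the realizability question to deciding whether \sys has a positional winning strategy in the parity game of Definition~\ref{def:parityGame}; all that remains is to bound the size of that game and the cost of solving it. First I would count frames. Since in this section $\fbv$ is empty, every $s$-frame (with $s \le k+1$) is merely a total pre-order on the $s\cdot|\lav|$ look-ahead terms in $\lat{s-1}$, and the number of total pre-orders on a set of size $m$ is $2^{O(m\log m)}$. As $k \le |\phi|$ and $|\lav| \le |\phi|$, we have $m \le (k+1)|\lav| = O(|\phi|^2)$, so the number of frames, and likewise the number of partial frames, is single exponential in $|\phi|$.

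Next I would combine this with the automaton bounds recalled before Definition~\ref{def:parityGame}: the deterministic parity automaton $A_\phi$ has a state set $Q$ of double-exponential size and a number of colours $d$ that is single exponential in $|\phi|$. Each vertex of the parity game pairs a (partial) frame with a state of $A_\phi$, so the vertex set $V_e \cup V_s$ has size double exponential in $|\phi|$, while the number of colours remains single exponential. The step that most needs care — and the one I regard as the main obstacle — is arguing that solving this game stays within \textsc{2EXPTIME} rather than overshooting to \textsc{3EXPTIME}: the standard bound for parity games with $n$ vertices and $d$ colours is $n^{O(d)}$ (see, e.g., \cite{M02}), and naively raising a double exponential to a single-exponential power looks dangerous. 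Writing $n = 2^{2^{p(|\phi|)}}$ and $d = 2^{q(|\phi|)}$ for polynomials $p,q$, one checks that $n^{O(d)} = 2^{O(d\log n)} = 2^{O(2^{q}\cdot 2^{p})} = 2^{2^{O(p+q)}}$, which is again double exponential; together with the (also double-exponential) cost of building the game, this yields the \textsc{2EXPTIME} upper bound.

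For the lower bound I would reduce from LTL realizability, which is \textsc{2EXPTIME}-complete \cite{PR89}. I would work over the constraint system $(D,=)$ for an infinite domain $D$, which satisfies the completion property as observed in Section~\ref{sec:decidableCompletion}. Given an LTL instance with input propositions owned by the environment and output propositions owned by the system, I would assign to each proposition $p$ two variables $u_p, v_p$ owned by the corresponding player and replace every occurrence of $p$ in the formula by the atomic constraint $u_p = v_p$ — exactly the Boolean-encoding device already used in the proof of Theorem~\ref{thm:undecidable}. Over $(D,=)$ each player can, on its own turn, freely realise either $u_p = v_p$ or $\neg(u_p = v_p)$, so every play of the CLTL game encodes a play of the LTL game and conversely, whence \sys wins one exactly when she wins the other. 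The encoding is clearly polynomial and introduces no look-ahead, so the resulting CLTL formula has $\nxt$-length $0$. This transfers \textsc{2EXPTIME}-hardness to the CLTL realizability problem, and combined with the upper bound completes the proof.
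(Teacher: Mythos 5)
Your proof is correct and follows essentially the same route as the paper: the upper bound comes from Lemma~\ref{lem:cltlToParityGame} together with size bounds on the frames, on $A_\phi$, and hence on the parity game of Definition~\ref{def:parityGame}, and the lower bound comes from the \textsc{2EXPTIME}-hardness of LTL realizability \cite{PR89}. The only differences are minor: the paper invokes the quasi-polynomial $O(n^{\log d})$ parity-game algorithm of \cite{CJKLS20} where you use the classical $n^{O(d)}$ bound (your arithmetic confirming this still stays doubly exponential is right), and your explicit encoding of each proposition $p$ as an equality constraint $u_p = v_p$ over $(D,=)$ spells out carefully what the paper compresses into the terser claim that every LTL formula is also a CLTL formula.
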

\begin{proof}
	From Lemma~\ref{lem:cltlToParityGame}, this is effectively equivalent 
	to checking the existence of a winning strategy for \sys in a 
	game. Now, checking if \sys has a winning strategy in the parity game (constructed using $A_\phi$) can be achieved in $O(n^{\log{d}})$ time where $n$ is the number of states in the game graph \cite{CJKLS20}. Now, by our construction, $n = |Q| \times |\frames|$. We know, $|\frames|$ is the number of total pre-orders on $V$, for which $2^{{(k.|V|)}^2}$ is a crude upper bound. This means that $|\frames|$ is exponential in the size of $\phi$ and hence, overall we get a \textsc{2EXPTIME} upper bound for our realizability problem.
	We also know that the realizability problem for LTL is complete for \textsc{2EXPTIME} \cite{PR89} and every LTL formula is also a CLTL formula. Thus, the CLTL realizability problem over constraint systems satisfying the completion property is also \textsc{2EXPTIME}-complete.
\end{proof}
We know that a positional winning strategy in the parity game for a player, if it exists, can be implemented by a deterministic finite state transducer.  Since $\csys$ satisfies the completion property, consider a resource-bounded Turing machine $M$, which can, given an environment mapping $\emap$ as described in Proposition 7, extend it to a mapping $\emap \oplus \smap$ such that the order $f$ imposed by the $\emap \oplus \smap$ and the previous $s-1$ mappings over the set of all terms extends the order $pf$ imposed by $em$ and the previous $s-1$ mappings. Now, for implementing the winning strategy for a player in a CLTL game, we use the deterministic finite state transducer  corresponding to the parity game given in Definition~\ref{def:parityGame}. For every input of a partial frame $pf$ by \env in a round, the transducer returns a frame $f$ for \sys that extends $pf$. The transducer along with the machine $M$ implements the winning strategy for \sys in a given CLTL game, if it exists.

Note that as we saw above, the constraint systems $(\Nat,=)$ and $(\Int,=)$ (with just equality and no linear order) also satisfy the completion property. So, it follows that the CLTL realizability problem over these constraint systems is also decidable.

\section{Decidability of single-sided CLTL games over $(\Int, <, =)$}
\label{sec:decSingleSided}
We consider games where \env has only future-blind 
variables, while the \sys has both future-blind and look-ahead 
variables. We call this single-sided CLTL games. So, in a 
single-sided game, $\ev = \efbv$ and $\sv = \sfbv \cup \slav$. 
Given a CLTL formula $\phi$, the \textbf{single-sided realizability problem} is to check whether \sys has a winning 
strategy in the single-sided CLTL game whose winning condition is 
$\phi$. We only consider the constraint system 
$(\Int,<,=)$ and show that the single-sided realizability problem is 
decidable over $(\Int,<,=)$. We do this in two stages. In the first stage, we reduce it to 
the problem of checking the non-emptiness of a set of trees satisfying 
certain properties. These trees represent \sys strategies. In the second 
stage, we show that non-emptiness can be checked using tree 
automata techniques.

Let $\gf$ be the set of gap 
functions associated with mappings of the form $\efbv \to \Int$.
For $s \ge 1$, an $s$-frame $g$ and a 
function $\gap \in \gf$, the pair $(\gap,g)$ is gap compatible 
if for all $x,y \in \efbv$, 
$\gap(x) - \gap(y) = \gap_g(x,s-1) - \gap_g(y,s-1)$. 
Intuitively, the gaps  
that frame $g$ imposes between $\efbv$ variables in its last position 
are same as the gaps imposed by $\gap$.
\begin{proposition}[gap compatibility]
	\label{prop:gapCompatible}
	For $s \ge 1$, an $s$-frame $g$ and a function $\gap \in 
	\gf$, 
	suppose the pair $(\gap,g)$ is gap compatible. If $\gap$ is the gap 
	function associated with a mapping $\emap \colon \efbv \to \Int$, it can 
	be extended to a mapping $\emap \oplus \smap \colon \fbv \to \Int$ 
	such that 
	$\lambda x . \gap_g(x,s-1)$ is the gap function associated 
	with 
	$\emap \oplus \smap$.
\end{proposition}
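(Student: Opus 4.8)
The plan is to construct the system's future-blind valuation $\smap \colon \sfbv \to \Int$ explicitly, by reading off the order and the ceiled gaps prescribed by $h := \lambda x.\gap_g(x,s-1)$ and slotting the $\sfbv$ variables into the integer line that is already pinned down by $\emap$ on $\efbv$. First I would unpack gap-compatibility: since $\gap(x)-\gap(y) = \gap_g(x,s-1)-\gap_g(y,s-1)$ for all $x,y\in\efbv$, the functions $\gap$ and $h\restr\efbv$ differ by a constant, so $h$ induces on $\efbv$ exactly the pre-order and the same ceiled gaps that $\emap$ realizes. In particular two $\efbv$ variables are $h$-equivalent iff they receive equal $\emap$-values (distinct consecutive values always contribute a gap $\ge 1$), so the values assigned by $\emap$ are monotone along the $h$-ordering of $\efbv$. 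This is precisely the consistency that lets me keep $\emap$ fixed while still obeying $h$.

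Next I would record that $h$ is, by the definition of frames, a genuine gap function on $\fbv$. Reading it off gives the $h$-equivalence classes (``groups'') $G_0 \prec G_1 \prec \cdots \prec G_r$ together with the prescribed consecutive gap values $\delta_j = h(G_{j+1})-h(G_j) \in [1,|\fbv|-1]$. The construction then assigns a single integer $V_j$ to each group and sets $(\emap\oplus\smap)(u)=V_j$ for every $u\in G_j$, subject to two constraints: (i) $V_j=\emap(x)$ whenever $G_j$ contains an $\efbv$ variable $x$ (well defined by the previous paragraph), and (ii) $\ceil{V_{j+1}-V_j}{|\fbv|-1}=\delta_j$ for every $j$, which is exactly the condition making the gap function of $\emap\oplus\smap$ equal to $h$.

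The heart of the argument is solving the gap equations (ii) between two consecutive groups that both carry $\efbv$ variables, because there both endpoints $V_j$ are already frozen by $\emap$ and only the intermediate, purely-$\sfbv$ groups remain free. I would proceed region by region: for a block of groups lying between two pinned $\efbv$ values at actual distance $D$, gap-compatibility guarantees that the prescribed $\delta_j$ in the block sum to $\ceil{D}{|\fbv|-1}$, and I would realize each non-saturated gap ($\delta_j<|\fbv|-1$) by an exact integer step while distributing the leftover room among the saturated gaps ($\delta_j=|\fbv|-1$), which may be stretched freely over $\Int$. This is precisely the place where the failure of the completion property for $(\Int,<,=)$ is circumvented: a saturated gap records ``distance at least $|\fbv|-1$'', and over $\Int$ there is always enough room to insert the up-to-$|\fbv|-1$ intermediate $\sfbv$ values. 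I expect the main obstacle to be the feasibility bookkeeping of these regions---verifying that the room $D$ left by $\emap$ is always compatible with the number and the types (saturated versus exact) of the intermediate gaps demanded by $h$. This is the one step that genuinely combines gap-compatibility with the fact that $h$ is itself a realizable gap function, and the only step that uses properties of the integer order rather than bare combinatorics.
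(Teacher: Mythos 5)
Your construction follows the same route as the paper's own proof: arrange $\efbv$ in ascending order of $\emap$-values, group the $\sfbv$ variables into the intervals between consecutive pinned values, use gap compatibility to equate the prescribed gap sum in each region with $\ceil{D}{|\fbv|-1}$, and then slot in integer values region by region. Your write-up is in fact more explicit than the paper's, which compresses the entire slotting step into one sentence (``since $|V_j| < |\fbv|-1$, the gap is wide enough to accommodate all variables in $V_j$'').

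However, the step you defer as ``feasibility bookkeeping'' is a genuine gap, and the mechanism you propose for it (realize exact gaps exactly, let saturated gaps absorb the leftover room) cannot close it. In a region between consecutive pinned values $x,x'$ at actual distance $D$ that contains at least one intermediate, purely-$\sfbv$ group, there are at least two prescribed gaps, each at least $1$, summing (by gap compatibility) to $\ceil{D}{|\fbv|-1} \le |\fbv|-1$; hence every prescribed gap in such a region is at most $|\fbv|-2$, i.e.\ exact. So if $D > |\fbv|-1$, the exact gaps sum to $|\fbv|-1 < D$ and there is no saturated gap anywhere in the region to take up the slack: your equations (ii) are infeasible. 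Gap compatibility as defined does not exclude this situation. Concretely, take $\efbv=\set{x_1,x_2}$ and $\sfbv=\set{y}$ (so $|\fbv|-1=2$), $\emap(x_1)=0$, $\emap(x_2)=10$, giving $\gap(x_1)=0$ and $\gap(x_2)=\ceil{10}{2}=2$, and let $g$ prescribe $\gap_g(x_1,s-1)=0$, $\gap_g(y,s-1)=1$, $\gap_g(x_2,s-1)=2$. This pair is gap compatible, yet for every choice $v=\smap(y)$ with $0<v<10$ one gets $\ceil{v}{2}+\ceil{10-v}{2}\ge 3$, so the gap function of $\emap\oplus\smap$ can never equal $\gap_g(\cdot,s-1)$. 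Note that this is not only a hole in your plan: the paper's proof passes over exactly the same point with its ``wide enough to accommodate'' remark, so the statement in its literal form needs a repair --- either gap compatibility must be strengthened (e.g.\ forbidding intermediate $\sfbv$ groups strictly inside a saturated gap between $\efbv$ variables), or the conclusion must be weakened to realizing only the pre-order induced by $\gap_g(\cdot,s-1)$, which is all that symbolic satisfaction actually consults. You located the right pressure point, but as it stands it is a counterexample, not bookkeeping.
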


\begin{proof}
	Arrange $\efbv$ as $x_0, x_1, \ldots$ such that $0=\gap(x_0) < 
	\gap(x_1) < \cdots$. For any $j$, let $V_j = \set{y \in \sfbv \mid 
	\gap_g(x_{j-1},s-1) < \gap_g(y,s-1) < \gap_g(x_{j},s-1)}$ be the set 
	of variables in $\sfbv$ that occur ``in-between'' $x_{j-1}$ and $x_j$ 
	according to $\gap_g(\cdot,s-1)$. Since the pair $(\gap,g)$ is gap 
	compatible, $\gap_g(x_j,s-1) - \gap_g(x_{j-1},s-1)=\gap(x_j) - 
	\gap(x_{j-1})=\ceil{\emap(x_j) - \emap(x_{j-1})}{|\fbv|-1}$. The 
	second equality follows from the fact that $\gap$ is the gap function 
	associated with $\emap$. Since $|V_j| < |\fbv|-1$, the gap 
	$\ceil{\emap(x_j) - \emap(x_{j-1})}{|\fbv|-1}$ is wide enough to 
	accommodate all 
	variables in $V_j$. The mapping $\smap$
	assigns to each variable in $V_j$ some value between 
	$\emap(x_{j-1})$ and $\emap(x_{j})$ such that $\lambda x . 
	\gap_g(x,s-1)$ is the gap function associated with $\emap \oplus 
	\smap$.
\end{proof}

Let $\phi$ be the CLTL formula defining the winning condition of a 
single-sided CLTL game and let $k$ be its 
$\nxt$-length.  
Let $\frames$ be 
the set of all $s$-frames for $s\in [0,k]$. For technical convenience, we 
let $\frames$ include the trivial $0$-frame 
$\bot=(\le_\bot,\gap_\bot)$, where $\le_\bot$ is the trivial total 
pre-order on the empty set and $\gap_\bot$ is the trivial function on 
the empty domain.
\begin{definition}[Winning strategy trees]
	A strategy tree is a function $\tree \colon {\gf}^* \to \frames$ such that for 
	every 
	node $\node \in {\gf}^*$, $\tree(\node)$ is a 
	$\ceil{|\node|}{k+1}$-frame and for every $\gap \in \gf$, 
	$(\tree(\node),\tree(\node \cdot \gap))$ is one-step 
	compatible and $(\gap,\tree(\node \cdot \gap))$ is gap compatible.
	A function $\maplabel$ is said to be a labeling function if for every node 
	$\node \in 
	{\gf}^*$, $\maplabel(\node) \colon V \to \Int$ is a mapping of the variables in 
	$V$.
	For an infinite path $\pth$ in $\tree$, let $\tree(\pth)$ 
	(resp.~$\maplabel(\pth)$) denote the infinite sequence of frames 
	(resp.~mappings) labeling the nodes in $\pth$, except the root node 
	$\epsilon$. A winning strategy tree is a pair 
	$(\tree,\maplabel)$ such that $\tree$ is a strategy tree and $\maplabel$ is a labelling function satisfying the condition that for every infinite path 
	$\pth$, $\tree(\pth) = \sm(\maplabel(\pth))$ and $\tree(\pth),k 
	\models_s \phi$.
\end{definition}
The last condition above means that $\tree(\pth)$ is the symbolic 
model associated with the concrete model $\maplabel(\pth)$ and that 
it symbolically satisfies the formula $\phi$.

Two concrete models may have the same symbolic model associated 
with them, if they differ only slightly, as explained next. Two concrete 
models $\sigma_1,\sigma_2$ are said to coincide on $\lav$ if 
$\sigma_1(i)\restr \lav = 
\sigma_2(i) \restr \lav$ for all $i \ge 0$. They are said to coincide on 
$\fbv$ up to gap functions if for every $i \ge 0$, the same gap function 
is associated with $\sigma_1(i)\restr \fbv$ and $\sigma_2(i)\restr 
\fbv$. The following result follows directly from definitions.
\begin{proposition}[similar concrete models have same symbolic model]
	\label{prop:similarConcrModels}
	If two concrete models coincide on $\lav$ and they coincide on 
	$\fbv$ up to gap functions, then they have the same symbolic model 
	associated with them.
\end{proposition}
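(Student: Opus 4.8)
The plan is to unfold the definition of the symbolic-model map $\sm$ and to verify, position by position, that both ingredients of each frame---the total pre-order on look-ahead terms and the gap-function component for future-blind variables---are determined solely by the data that the two hypotheses assert is shared between the concrete models. Fix $k$ to be the relevant $\nxt$-length and recall that the $i$\textsuperscript{th} frame of $\sm(\sigma)$ is an $\ceil{i+1}{k+1}$-frame built from the window of positions $i-\ceil{i}{k}$ through $i$ of $\sigma$. Since the size of the frame at position $i$ depends only on $i$ and $k$, the frames of $\sm(\sigma_1)$ and $\sm(\sigma_2)$ have matching sizes automatically, so it suffices to show that the pre-order and the gap function agree in each frame.

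First I would treat the pre-order. By definition, the total pre-order of the $i$\textsuperscript{th} frame compares two look-ahead terms $\nxt^{j_1}y_1$ and $\nxt^{j_2}y_2$ according to the concrete values $\sigma(i-\ceil{i}{k}+j_1)(y_1)$ and $\sigma(i-\ceil{i}{k}+j_2)(y_2)$, where $y_1,y_2\in\lav$. The hypothesis that $\sigma_1$ and $\sigma_2$ coincide on $\lav$ gives $\sigma_1(\ell)\restr \lav=\sigma_2(\ell)\restr \lav$ at every position $\ell$, and in particular at every position appearing in the window, so these values are identical for the two models. Hence the induced pre-orders coincide.

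Next I would treat the gap component. For each $j\in[0,\ceil{i}{k}]$ the function $\lambda x.\gap_f(x,j)$ of the $i$\textsuperscript{th} frame is exactly the gap function associated with $\sigma(i-\ceil{i}{k}+j)\restr \fbv$. The second hypothesis---that $\sigma_1$ and $\sigma_2$ coincide on $\fbv$ up to gap functions---says precisely that $\sigma_1(\ell)\restr \fbv$ and $\sigma_2(\ell)\restr \fbv$ induce the same gap function at every position $\ell$, so the gap components of the two $i$\textsuperscript{th} frames agree as well. Combining the two agreements over all $i$ yields $\sm(\sigma_1)=\sm(\sigma_2)$.

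There is no substantive obstacle here: the statement holds essentially because the definition of $\sm$ factors cleanly, with the pre-order reading off only $\lav$-values and the gap component reading off only the $\fbv$-gap functions, and the hypotheses supply exactly these two pieces of shared data. The only point needing a little care is the bookkeeping of the window indices $i-\ceil{i}{k}+j$, to confirm that every concrete position referenced by the $i$\textsuperscript{th} frame lies in the range where the hypotheses guarantee coincidence---which it does, since both hypotheses are stated for all positions.
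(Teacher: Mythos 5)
Your proof is correct and matches the paper's approach: the paper simply asserts that this proposition ``follows directly from definitions,'' and your argument is exactly that unfolding---the frame's pre-order component is determined solely by the $\lav$-values in the window, and the gap component solely by the gap functions of the $\fbv$-restrictions, both of which the hypotheses guarantee to coincide. Nothing further is needed.
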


The following result accomplishes the first stage of the decidability 
proof, reducing the existence of winning strategies to 
non-emptiness of a set of trees.
\begin{lemma}[strategy to tree]
	\label{lem:stratToTree}
	\Sys has a winning strategy in the single-sided CLTL game with 
	wining condition $\phi$ iff there exists a winning strategy tree.
\end{lemma}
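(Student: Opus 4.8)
The plan is to prove both directions by explicit construction, moving between concrete \sys strategies and winning strategy trees while exploiting Proposition~\ref{prop:gapCompatible} to rebuild valuations from the finite gap-function data.

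\medskip

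\noindent($\Rightarrow$) First I would assume \sys has a winning strategy $\sstrat$ in the single-sided game and build a winning strategy tree $(\tree,\maplabel)$ by induction on the depth of nodes in ${\gf}^*$. The key observation is that since $\ev=\efbv$, an \env move in round $i$ is completely described, up to the symbolic abstraction, by a gap function $\gap\in\gf$: by Proposition~\ref{prop:similarConcrModels}, any two concrete \env mappings inducing the same gap function are interchangeable as far as symbolic models are concerned. So a node $\node=\gap_1\cdots\gap_i\in{\gf}^*$ encodes a sequence of \env moves. Along the path to $\node$ I maintain a canonical concrete play $\maplabel(\epsilon)\maplabel(\gap_1)\cdots\maplabel(\node)$ consistent with $\sstrat$, and set $\tree(\node)$ to be the $\ceil{|\node|}{k+1}$-frame associated with this play. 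Concretely, given the play up to $\node$ and a new gap function $\gap$, I fix any \env mapping $\emap\colon\efbv\to\Int$ realizing $\gap$ relative to the previous values, let $\smap=\sstrat(\maps(\node)\cdot\emap)$ be \sys's response, and define $\maplabel(\node\cdot\gap)=\emap\oplus\smap$ and $\tree(\node\cdot\gap)$ to be the resulting frame. One then checks the two compatibility conditions: one-step compatibility of $(\tree(\node),\tree(\node\cdot\gap))$ is automatic because consecutive frames of a single concrete model are one-step compatible, and gap compatibility of $(\gap,\tree(\node\cdot\gap))$ holds because $\emap$ was chosen to realize $\gap$. The winning condition on every infinite path follows from Lemma~\ref{lem:symbModelConcrMdel}: each path $\pth$ carries a genuine concrete play $\maplabel(\pth)$ against which \sys used $\sstrat$, so $\maplabel(\pth),0\models\phi$, whence $\tree(\pth),k\models_s\phi$.

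\medskip

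\noindent($\Leftarrow$) For the converse I would assume a winning strategy tree $(\tree,\maplabel)$ exists and extract a concrete \sys strategy $\sstrat$. Here the subtle point—and the main obstacle—is that the tree branches over the finite set $\gf$ of gap functions, whereas in the actual game \env chooses concrete mappings from the infinite domain $\Int$. So given an \env mapping $\emap_{i+1}\colon\efbv\to\Int$ in round $i+1$, I must (i) compute its gap function $\gap_{i+1}$ relative to the previously played values, (ii) follow the corresponding child $\node\cdot\gap_{i+1}$ in the tree to read off the target frame $f=\tree(\node\cdot\gap_{i+1})$, and (iii) translate the symbolic instruction $f$ back into an actual \sys mapping over $\Int$. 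Step (iii) is exactly what Proposition~\ref{prop:gapCompatible} provides: since $(\gap_{i+1},f)$ is gap compatible by the strategy-tree condition, and $\gap_{i+1}$ is the gap function of the concrete $\emap_{i+1}$ that \env just played, the proposition lets me extend $\emap_{i+1}$ to $\emap_{i+1}\oplus\smap_{i+1}$ realizing $f$; I set $\sstrat(\cdots\emap_{i+1})=\smap_{i+1}$. The delicate part is maintaining the invariant, by induction on rounds, that the concrete play produced this way has $\tree(\pth)$ as its associated symbolic model, so that the tree's winning condition $\tree(\pth),k\models_s\phi$ transfers via Lemma~\ref{lem:symbModelConcrMdel} to $\maplabel(\pth),0\models\phi$.

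\medskip

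I expect the hardest step to be establishing, in the $(\Leftarrow)$ direction, that the gap-function abstraction loses no information that \env could exploit—i.e., that the finite branching over $\gf$ really does cover all of \env's concrete choices. This is where single-sidedness is essential: because \env owns only future-blind variables, its current move interacts with the constraints only through the current-position ordering and gaps captured by $\gap_{i+1}$, so Proposition~\ref{prop:similarConcrModels} guarantees that the concrete value \env actually picks is irrelevant once its gap function is fixed. I would make this precise by noting that for any concrete \env move, Proposition~\ref{prop:gapCompatible} produces a \sys response realizing the prescribed frame, and that the symbolic model along any resulting path coincides with $\tree(\pth)$, closing the argument.
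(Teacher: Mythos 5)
Your ($\Rightarrow$) direction matches the paper's proof essentially step for step (pick any $\emap_\gap$ realizing $\gap$, respond with $\sstrat$, label the node with the resulting mapping, and read the frames off the concrete play), and it is correct; a small nitpick is that a gap function is a property of a single mapping $\efbv \to \Int$ alone, not of that mapping ``relative to the previous values''. The genuine gap is in your ($\Leftarrow$) direction, at step (iii). Proposition~\ref{prop:gapCompatible} concerns only future-blind variables: it extends the \env mapping $\emap_{i+1} \colon \efbv \to \Int$ to a mapping on $\fbv = \efbv \cup \sfbv$ whose gap function matches $\lambda x.\gap_f(x,s-1)$. It produces no values whatsoever for \sys's look-ahead variables $\slav$, and the frame $f$ also constrains look-ahead terms across up to $k+1$ positions against concretely played earlier values. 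Over $(\Int,<,=)$ such constraints cannot in general be met by an on-the-fly extension---this is exactly the failure of the completion property for $\Int$, and it is the reason the lemma's statement involves a labeling function $\maplabel$ at all (and why Lemma~\ref{lem:bddChainAdmitsValuation} is needed afterwards to characterize when one exists). So the claim ``the proposition lets me extend $\emap_{i+1}$ to $\emap_{i+1}\oplus\smap_{i+1}$ realizing $f$'' is false as stated: nothing at your disposal realizes the look-ahead part of $f$ from scratch, round by round.

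The repair is to use the labeling you assumed to exist, which your construction never consults. Define \sys's response as $\smap = (\maplabel(\node\cdot\gap_{i+1})\restr\slav) \oplus (\emap'\restr\sfbv)$, where $\emap'$ is the extension of $\emap_{i+1}$ to $\fbv$ given by Proposition~\ref{prop:gapCompatible}: the look-ahead values are copied verbatim from $\maplabel$ (these are globally consistent integers, by the definition of winning strategy trees), and only the $\sfbv$ part is synthesized. Accordingly, the induction invariant cannot be ``the play's symbolic model is $\tree(\pth)$'' directly; it is that the play and $\hat{\maplabel}(\node)$ coincide on $\slav$ exactly and coincide on $\fbv$ up to gap functions, after which Proposition~\ref{prop:similarConcrModels} gives that both have the same symbolic model $\tree(\pth)$, and Lemma~\ref{lem:symbModelConcrMdel} then yields $\sigma,0 \models \phi$ for the play $\sigma$ itself (your write-up concludes $\maplabel(\pth),0 \models \phi$, which is about the wrong model---what must be shown is that the actual play satisfies $\phi$).
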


\begin{proof}[Proof idea]
	If \env chooses a mapping $\emap \colon \efbv \to \Int$ in the CLTL 
	game, the corresponding choice in the tree $\tree$ is to go to the 
	child $\gap$, the gap function associated with $\emap$. \Sys 
	responds with the mapping $\maplabel(\gap)\restr \slav$ for the 
	look-ahead variables. For the future-blind variables $\sfbv$, \sys 
	chooses a mapping that ensures compatibility with the frame 
	$\tree(\gap)$. This will ensure that \sys's response and $L$ coincide 
	on $\lav$ and coincide on $\fbv$ up to gap functions, so 
	Proposition~\ref{prop:similarConcrModels} ensures that both have 
	the 
	same symbolic model. The symbolic model symbolically satisfies 
	$\phi$ by definition of wining strategy trees and 
	Lemma~\ref{lem:symbModelConcrMdel} implies that 
	the concrete model satisfies $\phi$.
\end{proof}

\begin{proof}
	We introduce a notation to use in this proof. For a labeling function 
	$\maplabel$ and a node $\node = \gap_1 \cdot \gap_2 \cdots 
	\gap_n$, we denote by $\hat{\maplabel}(\node)$ the 
	sequence of mappings $\maplabel(\gap_1) \cdot \maplabel(\gap_1 
	\cdot \gap_2) \cdots \maplabel(\node)$. The notation $\hat{\tree}$ 
	has similar meaning.
	
	($\Rightarrow$) Suppose \sys has a winning strategy $\sstrat$. We 
	first construct a labeling function $\maplabel$ such that for 
	any node $\node$ with $|\node|\ge 1$, if 
	$\hat{\maplabel}(\node)\restefbv$ is the sequence of \env 
	choices, then
	$\hat{\maplabel}(\node)\restsv$ are the \sys responses 
	according to 
	$\sstrat$. We proceed by induction on $|\node|$. For the base case 
	$|\node| = 0$, $\node=\epsilon$. We set 
	$\maplabel(\epsilon)$ to be the constant function that maps 
	everything to $0$. This satisfies the specified condition vacuously.
	
	For the induction step, consider a node $\node\cdot 
	\gap$ for some $\gap \in \gf$. Let $\emap_{\gap} \colon \efbv \to \Int$ 
	be a mapping such that the gap function associated with 
	$\emap_{\gap}$ is $\gap$. Set $\maplabel(\node\cdot 
	\gap)=\emap_\gap \oplus \sstrat(\hat{\maplabel}(\node)\cdot 
	\emap_\gap)$. By induction hypothesis, for the 
	sequence $\hat{\maplabel}(\node)\restefbv$ of \env choices, 
	$\hat{\maplabel}(\node)\restsv$ are the \sys responses 
	according to $\sstrat$. Hence, for the sequence 
	$\hat{\maplabel}(\node)\restefbv \cdot \emap_\gap$ of \env 
	choices, $\hat{\maplabel}(\node)\restsv \cdot 
	\sstrat(\hat{\maplabel}(\node)\cdot \emap_\gap)$ are 
	the \sys responses. This completes the induction step and the 
	construction of the labeling 
	function $\maplabel$. Let $\tree$ be the tree such that for every 
	infinite 
	path $\pth$, $\tree(\pth)$ is $\sm(\maplabel(\pth))$, the symbolic 
	model associated with the concrete model $\maplabel(\pth)$. Since 
	$\maplabel(\pth)\restsv$ are the \sys responses to \env choices 
	$\maplabel(\pth)\restefbv$ according to the winning strategy 
	$\sstrat$, 
	$\maplabel(\pth),0 \models \phi$. 
	Lemma~\ref{lem:symbModelConcrMdel} implies that 
	$\sm(\maplabel(\pth)),k \models_s \phi$, so $\tree(\pth),k
	\models_s \phi$.
	
	($\Leftarrow$) For a sequence $\sigma$ of mappings over $V$ and 
	$S\subseteq V$, let 
	$\hat{\gap}(\sigma\restr S)$ denote the sequence of gap functions 
	$\gap_1 \cdot \gap_2 \cdots$ such that for all $i$, $\gap_i$ is the 
	gap function associated with the mapping $\sigma(i)$ restricted to 
	the domain $S$.
	Suppose there exists a winning strategy tree 
	$(\tree,\maplabel)$. We will construct a strategy $\sstrat$ that is 
	winning for \sys satisfying the following property: suppose 
	$\sigma\in 
	\Map^*$ is a sequence of mappings resulting 
	from a play that \sys plays according to $\sstrat$ and 
	$\node=\hat{\gap}(\sigma\restefbv)$. Then 
	$\hat{\maplabel}(\node)$ and 
	$\sigma$ coincide on $\slav$ and they coincide on $\fbv$ 
	up to gap functions. We proceed by induction 
	on 
	$|\sigma|$. For the base case $|\sigma|=0$, $\sigma=\epsilon$ 
	and there is nothing to prove.
	
	For the induction step, let $\sigma$ be the sequence of mappings 
	resulting from the rounds played so far and in the next round, 
	suppose \env chooses the mapping $\emap \colon \efbv \to \Int$. Let 
	$\gap_\emap$ be 
	the gap function associated with $\emap$, $\node = 
	\hat{\gap}(\sigma\restefbv)$ and $f=\tree(\node\cdot 
	\gap_\emap)$. 
	Say $f$ is an $s$-frame for some $s$. By definition of 
	winning 
	strategy trees, the pair $(\gap_\emap,f)$ is gap compatible. 
	Proposition~\ref{prop:gapCompatible} implies that $\emap$ can be 
	extended to a mapping $\emap' \colon \fbv \to \Int$ such that $\lambda 
	x. \gap_f(x,s-1)$ is the gap function associated with $\emap'$.  Also 
	$\lambda 
	x. \gap_f(x,s-1)$ is the gap function associated with 
	$\maplabel(\node 
	\cdot \gap_\emap)\restfbv$, by definition of winning strategy trees. 
	So $\emap'$ and  $\maplabel(\node 
	\cdot \gap_\emap)\restfbv$ coincide up to gap functions.
	Set 
	$\sstrat(\sigma \cdot \emap)$ to be 
	$\smap = (\maplabel(\node.\gap_\emap)\restslav )
	\oplus 
	(\emap'\restsfbv)$. In words, \sys's response resembles 
	$\maplabel(\node\cdot \gap_\emap)$ on $\slav$ and resembles 
	$\emap'$ on 
	$\sfbv$. 
	By induction hypothesis, $\hat{\maplabel}(\node)$ and 
	$\sigma$ coincide on $\slav$ and they coincide on $\fbv$ 
	up to gap functions. Hence, $\hat{\maplabel}(\node\cdot 
	\gap_\emap)$ and 
	$\sigma\cdot \smap$ coincide on $\slav$ and they coincide on 
	$\fbv$	up to gap functions. This completes the induction step and 
	hence the construction of $\sstrat$.
	
	It remains to prove that $\sstrat$ is a winning strategy. Let $\sigma$ 
	be a concrete model resulting from a play in which \sys follows the 
	strategy $\sstrat$. The sequence of gap functions 
	$\hat{\gap}(\sigma \restefbv)$ induces an infinite path $\pth$ in 
	the tree $(\tree,\maplabel)$. By construction of $\sstrat$, $\sigma$ 
	and $\maplabel(\pth)$ coincide on $\lav=\slav$ and coincide on 
	$\fbv$ up to gap functions. 
	Proposition~\ref{prop:similarConcrModels} implies that $\sigma$ 
	and 
	$\maplabel(\pth)$ have the same symbolic model $\tree(\pth)$.  By 
	definition of winning strategy trees, $\tree(\pth),k \models_s 
	\phi$. 
	Lemma~\ref{lem:symbModelConcrMdel} implies that $\sigma,0 
	\models \phi$. Since this holds for any $\sigma$ resulting from a 
	play in which \sys follows the strategy $\sstrat$, this shows that 
	$\sstrat$ is winning for \sys.
\end{proof}

Given a tree ${\gf}^* \to \frames$, a tree automaton over finite alphabets 
can check whether it is a strategy tree or not, by allowing transitions 
only between one-step and gap compatible frames. However, to check 
whether it is a winning strategy tree, we need to check whether there 
exists a labeling function $\maplabel$, which is harder. One way to 
check the existence of such a labeling function is to start labeling at the 
root and inductively extend to children. Suppose there are two variables 
$x,y$ at some node and we have to label them with integers. There 
may be many variables in other nodes whose labels should be 
strictly 
between those of $x,y$ in the current node. So our labels for $x,y$ in 
the current 
node should leave gap large enough to accommodate others that are 
supposed to be in between. Next we introduce some orderings we use 
to formalize this.

A node variable in a strategy tree $\tree$ is a pair $(\node,x)$ where 
$\node$ is a node and $x \in \lav$ is a look-ahead variable. The tree 
induces an order on node variables as follows. Suppose $\node$ is a 
node, $\tree(\node)$ is an $s$-frame for some $s$ and $\node_a$ is 
an 
ancestor of $\node$ such that the difference in height $h=|\node| - 
|\node_a|$  between the descendant and ancestor is at most $s-1$. For 
look-ahead variables $x,y \in \lav$, recall that the term $\nxt^{s-1} 
x$ represents the variable $x$ in the last position of the frame 
$\tree(\node)$, and $\nxt^{s-1-h}y$ represents the variable $y$ at 
$h$ positions before the last one. We say $(\node,x) \lo_\tree 
(\node_a,y)$ (resp.~$(\node_a,y) \lo_\tree (\node,x)$) if $\nxt^{s-1} 
x \le_{\tree(\node)} \nxt^{s-1-h}y$ (resp.~$\nxt^{s-1-h}y 
\le_{\tree(\node)} \nxt^{s-1} x$). In other words, for the variables and 
positions captured in the frame $\tree(\node)$, $\lo_\tree$ is same as 
the total pre-order $\le_{\tree(\node)}$. We define $(\node,x) \lto_\tree 
(\node_a,y)$ (resp.~$(\node_a,y) \lto_\tree (\node,x)$) if $(\node,x) \lo_\tree (\node_a,y)$ and $(\node_a,y) 
\not\lo_\tree (\node,x)$ (resp.~$(\node_a,y) 
\lo_\tree (\node,x)$ and $(\node,x) \not\lo_\tree (\node_a,y)$). We define $\ro_\tree$ to be 
the reflexive transitive closure of $\lto_\tree$ and $\rto_\tree$ to be 
the transitive closure of $\lto_\tree$. Note that $\ro_\tree$ and $\rto_\tree$ can compare 
node variables that are in different branches of the tree also, though they are not total orders. We write 
$(\node_1,x) \ro_\tree 
(\node_2,y)$ (resp,~$(\node_1,x) \rto_\tree (\node_2,y)$) equivalently as $(\node_2,y) \go_\tree (\node_1,x)$ (resp.~$(\node_1,x) \gto_\tree (\node_2,y)$). By definition,
$(\node_1,x) \rto_\tree (\node_2,y)$ (resp.$(\node_2,y) \rto_\tree 
(\node_1,x)$) if $(\node_1,x) \ro_\tree (\node_2,y)$ and $(\node_2,y) 
\not\ro_\tree (\node_1,x)$ (resp.~$(\node_2,y) 
\ro_\tree (\node_1,x)$ and $(\node_1,x) \not\ro_\tree (\node_2,y)$). 
$\rto$ is irreflexive and transitive.
\begin{definition}[Bounded chain strategy trees]
	\label{def:bddGapStrTree}
	Suppose $\tree$ is a strategy tree, $\node, \node'$ are two 
	nodes and $x,y\ \in \lav$ are 
	look-ahead variables such that $(\node,x) \rto_\tree (\node',y)$. 
	A chain between $(\node,x)$ and $(\node',y)$ is a sequence 
	$(\node_1,x_1) (\node_2,x_2) \cdots (\node_r,x_r)$ such that 
	$(\node,x) \rto_\tree (\node_1,x_1) \rto_\tree (\node_2,x_2) 
	\rto_\tree \cdots \rto_\tree (\node_r,x_r) \rto_\tree (\node',y)$. We 
	say $r$ is the length of the chain. The strategy tree $\tree$ is said to 
	have bounded chains if for any two node 
	variables $(\node,x)$ and $(\node',y)$, there is a 
	bound $N$ such that any chain between $(\node,x)$ and 
	$(\node',y)$ is of length at most $N$.
\end{definition}

\begin{lemma}
	\label{lem:bddChainAdmitsValuation}
	A strategy tree $\tree$ has a labeling function $\maplabel$ such that 
	$(\tree,\maplabel)$ is a winning strategy tree iff $\tree$ has 
	bounded chains.
\end{lemma}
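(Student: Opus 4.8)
The plan is to read the condition $\tree(\pth)=\sm(\maplabel(\pth))$ as the requirement that the integer labels assigned by $\maplabel$ to the look-ahead node variables be consistent with the orders recorded in the frames, i.e.\ with the relation $\lo_\tree$ and hence with its strict transitive closure $\rto_\tree$ and the induced equivalence (identify $u,v$ when $u\ro_\tree v$ and $v\ro_\tree u$). Since for future-blind variables the symbolic semantics only compares values within a single position, I would fix the $\fbv$-part of $\maplabel$ node by node: at each node $\node$ choose any mapping $\fbv\to\Int$ realizing the gap function $\lambda x.\gap_{\tree(\node)}(x,s-1)$ of the last position of $\tree(\node)$, which exists as in Proposition~\ref{prop:gapCompatible}; one-step compatibility makes neighbouring frames agree on shared gap functions, so these local choices cohere along every path. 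Thus the whole problem reduces to labelling the look-ahead node variables $(\node,x)$, $x\in\lav$, by integers so that $(\node,x)\lto_\tree(\node',y)$ forces a strict inequality and mutual $\ro_\tree$-comparability forces equality. Since $\le_{\tree(\node)}$ is a total pre-order, any two such variables occurring together in a frame are comparable, so respecting $\lto_\tree$ and the equivalence makes the order induced by $\maplabel$ equal to $\le_{\tree(\node)}$ in each frame; Proposition~\ref{prop:similarConcrModels} and Lemma~\ref{lem:symbModelConcrMdel} then upgrade this to $\tree(\pth)=\sm(\maplabel(\pth))$ on every path and transfer symbolic satisfaction to the concrete model. (Note $\tree(\pth),k\models_s\phi$ depends on $\tree$ alone and is untouched by $\maplabel$.)

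For the ($\Rightarrow$) direction I would show that a labelling is impossible unless chains are bounded. If $(\tree,\maplabel)$ is a winning strategy tree, then reading the frames along any path shows that $(\node,x)\lto_\tree(\node_a,y)$ implies $\maplabel(\node)(x)<\maplabel(\node_a)(y)$; each single $\lto_\tree$-step is witnessed on the path through the relevant ancestor--descendant pair, and since $\maplabel$ depends on the node alone, strict inequalities compose along any chain, even one that crosses branches. Hence a chain $(\node,x)\rto_\tree(\node_1,x_1)\rto_\tree\cdots\rto_\tree(\node_r,x_r)\rto_\tree(\node',y)$ produces $r+1$ strict inequalities among integers, so $r\le\maplabel(\node')(y)-\maplabel(\node)(x)-1$. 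This bound is uniform over all chains between the fixed pair, which is exactly bounded chains.

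For the ($\Leftarrow$) direction, bounded chains first rules out any cycle $(\node,x)\rto_\tree(\node,x)$ (a cycle could be traversed arbitrarily often, giving unbounded chains), so $\rto_\tree$ induces a strict partial order on the equivalence classes of look-ahead node variables, and bounded chains says precisely that every interval $[a,b]$ of this countable poset has finite height. The main step is then the order-theoretic fact that such a poset embeds strict-order-preservingly into $(\Int,<)$. I would prove this by inserting the classes one at a time in some enumeration while maintaining the invariant that $f(b)-f(a)\ge\mathrm{ht}(a,b)$ for all already-placed $a\rto_\tree b$, where $\mathrm{ht}(a,b)$ is the finite length of the longest chain in $[a,b]$. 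When a new class $c$ is inserted, super-additivity $\mathrm{ht}(a,b)\ge\mathrm{ht}(a,c)+\mathrm{ht}(c,b)$ for $a\rto_\tree c\rto_\tree b$ guarantees that the forced lower bound $\max_a\bigl(f(a)+\mathrm{ht}(a,c)\bigr)$ never exceeds the forced upper bound $\min_b\bigl(f(b)-\mathrm{ht}(c,b)\bigr)$, so an integer value for $c$ always exists and the invariant is preserved. Setting $\maplabel(\node)(x):=f\bigl([(\node,x)]\bigr)$ honours every $\lto_\tree$ and every equality constraint by construction, and combining this with the per-node future-blind labelling gives, along each path, a concrete model whose symbolic model is exactly $\tree(\pth)$.

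I expect the delicate point to be this backward construction: one must see that bounded chains is not merely necessary but is exactly the amount of ``room'' required, captured by the invariant $f(b)-f(a)\ge\mathrm{ht}(a,b)$ and the super-additivity of $\mathrm{ht}$. In particular, infinite antichains inside an interval and infinite descending chains whose values tend to $-\infty$ cause no difficulty over $\Int$, so it is genuinely the finiteness of interval \emph{heights}, rather than of intervals, that the hypothesis supplies; isolating $\mathrm{ht}$ as the quantity to reserve space for is the crux that makes the single global labelling $\maplabel$ consistent across all branches simultaneously.
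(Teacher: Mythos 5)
Your proposal is correct and follows essentially the same route as the paper: the forward direction extracts the bound on chain lengths from integer differences exactly as the paper does, and your backward insertion argument with the invariant $f(b)-f(a)\ge\mathrm{ht}(a,b)$ and super-additivity of $\mathrm{ht}$ is the same greedy reservation of longest-chain-length gaps that the paper carries out by induction on node depth (the paper's parenthetical remark about concatenating chains between consecutive node variables is precisely your super-additivity). The only differences are organizational: you insert equivalence classes in an arbitrary enumeration rather than level by level, and you make the equality constraints between frame-equivalent node variables explicit, a point the paper glosses over when it ``arranges node variables in ascending order according to $\ro_\tree$''.
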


\begin{proof}
    ($\Rightarrow$) Suppose $\tree$ has a labeling function 
	$\maplabel$ such that 
	$(\tree,\maplabel)$ is a winning strategy tree. Since for every infinite 
	path $\pth$, $\tree(\pth) = \sm(\maplabel(\pth))$, $\maplabel$ 
	should respect the relation $\rto_\tree$, i.e., if $(\node,x) \rto_\tree 
	(\node',y)$, then $\maplabel(\node)(x) < \maplabel(\node')(y)$. 
	Hence, any chain between $(\node,x)$ and $(\node',y)$ cannot be 
	longer than $\maplabel(\node')(y) - \maplabel(\node)(x)$.
	
	($\Leftarrow$) Suppose $\tree$ has bounded chains. We construct a 
	labeling function $\maplabel$ such that $(\tree,\maplabel)$ is a 
	winning strategy tree. At every node $\node$, we choose mappings 
	for future-blind variables $\fbv$ such that the gap function 
	associated with $\maplabel(\node)\restfbv$ is 
	$\gap_{\tree(\node)}$. These choices can be done independently for 
	every node. For look-ahead variables, we construct $\maplabel$ for 
	every node by induction on depth of the node such that for any 
	node variables $(\node,x),(\node',y)$ such that $(\node,x) \rto_\tree 
	(\node',y)$ and $\maplabel(\node), \maplabel(\node')$ have been 
	constructed, $\maplabel(\node')(y) - \maplabel(\node)(x)$ is at least 
	as large as the length of the longest chain between $(\node,x)$ and 
	$(\node',y)$. For the base case $\node=\epsilon$, let 
	$\maplabel(\node)$ be the trivial mapping on the empty domain.
	
	For the induction step, consider a node $\node$. Let $(\node,x_0), 
	(\node,x_1), \ldots$ be the node variables from $\node$ and let 
	$(\node_1,y_1), (\node_2,y_2), \ldots $ be the node variables from 
	all the ancestors of $\node$. Arrange them in ascending order 
	according to $\ro_\tree$. In this arrangement, suppose 
	$(\node_i,y_i) 
	(\node,x_j) (\node,x_{j+1}) \cdots (\node,x_{l}) (\node_{i+1}, 
	y_{i+1})$ is a contiguous sequence of node variables from $\node$ 
	surrounded by ancestor node variables $(\node_i,y_i)$ and 
	$(\node_{i+1}, y_{i+1})$. Set 
	$\maplabel(\node)(x_j)$ to be the sum of $\maplabel(\node_i)(y_i)$ 
	and the length of the longest chain between $(\node_i,y_i)$ and 
	$(\node,x_j)$. Set 
	$\maplabel(\node)(x_{j+1})$ to be the sum of 
	$\maplabel(\node)(x_j)$ and the length of the longest chain between 
	$(\node,x_j)$ and $(\node,x_{j+1})$. Continue this way till 
	$(\node,x_l)$. The value set for $\maplabel(\node)(x_l)$ will be less 
	than $\maplabel(\node_{i+1})(y_{i+1})$ minus the length of the 
	longest chain between $\maplabel(\node)(x_l)$ and 
	$(\node_{i+1},y_{i+1})$, since by induction 
	hypothesis, $\maplabel(\node_{i+1})(y_{i+1}) - 
	\maplabel(\node_{i})(y_{i})$ is large enough to accommodate the 
	longest chain between $(\node_{i},y_{i})$ and 
	$(\node_{i+1},y_{i+1})$ (note that any chain between $(\node,x_j)$ 
	and 
	$(\node,x_{j+1})$ can be concatenated with any chain between 
	$(\node,x_{j+1})$ and $(\node,x_{j+2})$ and so on to form a chain 
	between $(\node_{i},y_{i})$ and $(\node_{i+1},y_{i+1})$). This way, 
	all 
	contiguous sequence of node variables from $\node$ can be 
	mapped satisfactorily. This completes the induction step and hence 
	the proof.
\end{proof}

The above lemma characterizes those strategy trees that are winning strategy trees.
This is the main technical difference between CLTL games and games with register automata
specifications \cite{RFE21,EFK20}. Since register automata can compare values that are
arbitrarily far apart, the corresponding characterization of symbolic structures that
have associated concrete structures is more involved compared to Lemma~\ref{lem:bddChainAdmitsValuation}
above.

Detecting unbounded chains is still difficult for tree automata---to 
find longer chains, we may have to examine longer paths. This difficulty 
can be overcome if we can show that longer chains can be obtained by 
repeatedly joining shorter ones. We now introduce some notation and 
results to formalize this. For a node $\node$ and an ancestor 
$\node_a$, $\hat{\tree}(\node_a,\node)$ is the sequence of frames 
$\tree(\node_a)\cdots \tree(\node)$ labeling the path from $\node_a$ 
to $\node$. A node $\node_1$ is said to occur within the influence of 
$(\node_a,\node)$ if $\node_1$ occurs between $\node_a$ and 
$\node$ or $\node_1$ is an ancestor of $\node_a$ and $|\node_a| - 
|\node_1| \le s-1$, where $s$ is the size of the frame 
$\tree(\node_a)$. The following result follows directly from definitions.
\begin{proposition}[Identical paths induce identical orders]
	\label{prop:identPathIdentOrder}
	Suppose nodes $\node,\node'$ and their ancestors 
	$\node_a,\node_a'$ respectively are such that 
	$\hat{\tree}(\node_a,\node) = \hat{\tree}(\node_a',\node')$. 
	Suppose $\node_1,\node_2$ occur within the influence of 
	$(\node_a,\node)$ and $\node_1',\node_2'$ occur within the 
	influence of $(\node_a',\node')$ such that 
	$|\node|-|\node_1|=|\node'|-|\node_1'|$ and 
	$|\node|-|\node_2|=|\node'|-|\node_2'|$. For any look-ahead 
	variables $x,y$, $(\node_1,x) \ro_\tree (\node_2,y)$ 
	(resp.~$(\node_1,x) \lo_\tree (\node_2,y)$) iff 
	$(\node_1',x) \ro_\tree (\node_2',y)$ (resp.~$(\node_1',x) \lo_\tree 
	(\node_2',y)$).
\end{proposition}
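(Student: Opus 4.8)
The plan is to reduce both equivalences to a single statement about the base relation $\lo_\tree$, since $\ro_\tree$ is by definition its reflexive transitive closure (and $\lto_\tree,\rto_\tree$ are the strict versions). First I would set up the natural correspondence $\Psi$ between node variables: a node $\node_1$ within the influence of $(\node_a,\node)$ is sent to the node $\node_1'$ within the influence of $(\node_a',\node')$ at the same relative height, i.e.\ with $|\node|-|\node_1| = |\node'|-|\node_1'|$, keeping the variable fixed, so $\Psi(\node_1,x) = (\node_1',x)$. The hypotheses $|\node|-|\node_1| = |\node'|-|\node_1'|$ and $|\node|-|\node_2| = |\node'|-|\node_2'|$ say exactly that $\node_1,\node_2$ and $\node_1',\node_2'$ sit at matching relative heights, so $\Psi$ is precisely the map appearing in the statement, and it suffices to show that $\Psi$ preserves and reflects $\lo_\tree$; preservation of $\lto_\tree$, $\ro_\tree$ and $\rto_\tree$ then follows formally.

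Next I would prove the $\lo_\tree$ case. If $(\node_1,x)\lo_\tree(\node_2,y)$ then $\node_1,\node_2$ are comparable (one is an ancestor of the other) and lie within a common frame window. The relation is, by definition, read off the total pre-order of the frame at the lower of the two nodes; but by one-step compatibility any frame whose window covers both positions induces the same order on the corresponding look-ahead terms. Since the influence of $(\node_a,\node)$ is the contiguous path segment running from the ancestor $s-1$ levels above $\node_a$ (where $s$ is the size of $\tree(\node_a)$) down to $\node$, both nodes are covered by a frame occurring in $\hat{\tree}(\node_a,\node)$; in the delicate sub-case where $\node_1,\node_2$ are both strict ancestors of $\node_a$, the covering frame is $\tree(\node_a)$ itself, the first frame of the sequence, whose total pre-order on $\lat{s-1}$ records exactly the order among the $s$ positions ending at $\node_a$. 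Because $\hat{\tree}(\node_a,\node) = \hat{\tree}(\node_a',\node')$ and the relative heights match, $\Psi$ identifies this covering frame with the corresponding frame of $\hat{\tree}(\node_a',\node')$ and identifies the two compared terms $\nxt^{\,s-1}x$ and $\nxt^{\,s-1-h}y$ (with $h$ the difference in height) with the same terms there; hence $(\node_1,x)\lo_\tree(\node_2,y)$ iff $(\node_1',x)\lo_\tree(\node_2',y)$, and likewise for $\lto_\tree$.

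Finally I would lift this to $\ro_\tree$. Since $\Psi$ preserves and reflects $\lto_\tree$ on the node variables of the influence region, any chain $(\node_1,x)\lto_\tree\cdots\lto_\tree(\node_2,y)$ all of whose intermediate node variables also lie in the influence is transported verbatim by $\Psi$ to a witnessing chain on the primed side, and conversely. The main obstacle is therefore to rule out chains that leave the influence—intermediate node variables lying below $\node$, above the window of $\node_a$, or off the root-to-$\node$ path—since such node variables have no counterpart under $\Psi$. I would close this gap using one-step compatibility: the influence is a contiguous segment of a single root-to-node path and every $\lto_\tree$ edge joins node variables lying within one frame window, so the orders recorded by the frames along the segment already determine all strict inequalities holding among its node variables; consequently any chain between two influence node variables can be rerouted into a monotone chain that stays on the segment. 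Granting this localization, the influence-restricted transitive closure coincides with $\ro_\tree$ on the relevant node variables, and the equivalence for $\ro_\tree$ (and $\rto_\tree$) follows. I expect this rerouting step, rather than the frame bookkeeping of the previous paragraph, to be where the real care is needed.
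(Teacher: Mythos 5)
The paper offers no proof to compare against: it introduces this proposition with the single sentence ``The following result follows directly from definitions.'' Measured against that, your sketch is more substantive, and its first half is correct: $\lo_\tree$ between two node variables in an influence region is read off the frame at the deeper of the two nodes, a chain of one-step compatibilities along the path shows that every frame whose window covers both positions records the same order between the corresponding terms, and some frame determined by $\hat{\tree}(\node_a,\node)$ covers them (the frame at $\node_a$ itself in the sub-case where both nodes lie above $\node_a$); since the primed region carries the same frame sequence, $\lo_\tree$ and $\lto_\tree$ transfer. You are also right---and this is precisely what the paper's ``directly from definitions'' glosses over---that the $\ro_\tree$ case does \emph{not} follow formally (despite the claim at the end of your first paragraph): $\ro_\tree$ is a transitive closure over the whole tree, and a witnessing chain may dip below $\node$, wander through side branches, or climb above the window of $\node_a$, where nothing at all is assumed about $\tree$.

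The genuine gap is that your rerouting claim, which is the only nontrivial content of the proposition, is asserted rather than proved: the sentence you offer in its support (``the orders recorded by the frames along the segment already determine all strict inequalities holding among its node variables'') \emph{is} the localization statement to be established, so the justification is circular, and you then explicitly defer ``the real care''. The claim is true, and the missing argument runs as follows. In any chain, pick an interior element $(\mu,z)$ of maximal depth; its two neighbours lie at (not necessarily proper) ancestors of $\mu$, hence at mutually comparable nodes, and both edges at $(\mu,z)$ are read in the single frame $\tree(\mu)$, whose window covers both neighbours. Transitivity of the total pre-order $\le_{\tree(\mu)}$, transported by one-step compatibility to the frame at the deeper neighbour (a routine window-size check shows every intermediate frame covers both positions), yields a direct $\lto_\tree$-edge between the neighbours, so $(\mu,z)$ can be deleted. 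Iterating, the depth profile loses all interior peaks; a peak-free chain first moves to weakly shallower nodes and then to weakly deeper ones, so every element is a weak ancestor of one of the two endpoints, and the chain therefore lies entirely on the root-to-$\node$ path. On a single path the symmetric deletion of interior valleys is also sound (the two neighbours of a valley are again comparable there), and iterating both deletions terminates in a monotone chain, which stays between its endpoints and hence inside the influence region. Only with this localization in hand does in-region $\ro_\tree$ coincide with the transitive closure of in-region $\lto_\tree$, after which your correspondence $\Psi$ finishes the proof; without it, your write-up proves the $\lo_\tree$ statement but not the $\ro_\tree$ one.
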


For a node $\node$, the subtree $\tree_\node$ rooted at $\node$ is 
such that for all $\node'$, $\tree_\node(\node') = \tree(\node\cdot 
\node')$. A tree $\tree$ is called \emph{regular} if the set 
$\set{\tree_\node \mid \node \in {\gf}^*}$ is finite, i.e., there are only 
finitely many subtrees up to isomorphism. Two nodes $\node,\node'$ 
are said to be isomorphic if $\tree_\node = \tree_{\node'}$.
\begin{lemma}[Pumping chains in regular trees]
	\label{lem:pumpingRegTree}
	Suppose $\tree$ is a regular tree. Then $\tree$ has unbounded 
	chains iff there exists an infinite path containing two 
	infinite sequences $(\node_1,x),
	(\node_2,x),(\node_3,x) \ldots$ (resp.~$(\node_1',y),
	(\node_2',y),(\node_3',y) \ldots$) such that $\node_{i+1}$ 
	(resp.~$\node_{i+1}'$) is a 
	descendant of $\node_i$ (resp.~$\node_i'$) for all $i \ge 1$ and 
	satisfy one of the following conditions.
	\begin{center}
		$\begin{matrix}
			(\node_1,x) & \rto_\tree &  (\node_2,x) & \rto_\tree & 
			(\node_3,x) \rto_\tree \cdots\\
			\rotatebox[origin=c]{-90}{\ensuremath{\lo_\tree}} & 
			&\rotatebox[origin=c]{-90}{\ensuremath{\lo_\tree}} & &
			\rotatebox[origin=c]{-90}{\ensuremath{\lo_\tree}}\hphantom{abcdab}\\
			(\node_1',y) & \go_\tree &  (\node_2',y) & \go_\tree & 
			(\node_3',y) \go_\tree \cdots
		\end{matrix}$
	    or \enspace
	   	$\begin{matrix}
	    (\node_1,x) & \gto_\tree &  (\node_2,x) & \gto_\tree & 
	    (\node_3,x) \gto_\tree \cdots\\
	    \rotatebox[origin=c]{90}{\ensuremath{\lo_\tree}} & 
	    &\rotatebox[origin=c]{90}{\ensuremath{\lo_\tree}} & 
	    &\rotatebox[origin=c]{90}{\ensuremath{\lo_\tree}}\hphantom{abcdab}\\
	    (\node_1',y) & \ro_\tree &  (\node_2',y) & \ro_\tree & 
	    (\node_3',y) \ro_\tree \cdots
	    \end{matrix}$
	\end{center}
\end{lemma}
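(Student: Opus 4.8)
\emph{Proof plan.} I would prove the two directions separately. The reverse direction ($\Leftarrow$) is short, and the forward direction ($\Rightarrow$) is where regularity is essential and where the real difficulty lies.

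For $\Leftarrow$, the idea is that either displayed configuration obstructs every order-respecting labeling. Suppose for contradiction that $\tree$ had bounded chains. By Lemma~\ref{lem:bddChainAdmitsValuation} there would be a labeling $\maplabel$ with $(\tree,\maplabel)$ a winning strategy tree, so in particular $\tree(\pth)=\sm(\maplabel(\pth))$ for every path $\pth$. Then $\maplabel$ must respect all the symbolic orders read off the frames: $(\node,x)\rto_\tree(\node',y)$ forces $\maplabel(\node)(x)<\maplabel(\node')(y)$, while $\lo_\tree$ and $\go_\tree$ force the corresponding non-strict inequalities. Applying this to the first configuration, the top row gives $\maplabel(\node_1)(x)<\maplabel(\node_2)(x)<\cdots$, the verticals give $\maplabel(\node_i)(x)\le\maplabel(\node_i')(y)$, and the bottom row gives $\maplabel(\node_1')(y)\ge\maplabel(\node_2')(y)\ge\cdots$. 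Hence the integers $\maplabel(\node_i)(x)$ are strictly increasing yet all bounded above by $\maplabel(\node_1')(y)$, which is impossible over $\Int$; the second configuration is symmetric. So no such labeling exists, and Lemma~\ref{lem:bddChainAdmitsValuation} forces $\tree$ to have unbounded chains.

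For $\Rightarrow$, assuming unbounded chains I would pump a finite ``crossing'' pattern into one of the two infinite configurations. First I would refine chains to use only single $\lto_\tree$-steps, which connect node variables at tree-distance at most $k$; a chain of length $N$ then traces a walk in $\tree$ changing depth by at most $k$ per step. Because $\tree$ is finitely branching (degree $|\gf|$) and the elements of a chain are pairwise distinct, any chain confined to bounded depth has bounded length, so arbitrarily long chains must reach arbitrarily deep nodes. König's lemma applied to the nodes visited by ever-longer chains yields an infinite path $\pth^{*}$ on which arbitrarily long chains accumulate. Since $\tree$ is regular, only finitely many subtree types occur, so along $\pth^{*}$ one type recurs at infinitely many depths, and by Proposition~\ref{prop:identPathIdentOrder} the local order configuration around these depths is identical. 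Using that each chain lies between two fixed endpoints, its ascent into the tree carries values upward with depth while its return carries them downward with depth; pigeonholing over the finite set $\lav$ and over the subtree types, I would extract an ancestor $\node_a$ and a descendant $\node_b$ on $\pth^{*}$ with $\tree_{\node_a}=\tree_{\node_b}$ and variables $x,y$ witnessing a crossing of the segment between $\node_a$ and $\node_b$: $(\node_a,x)\rto_\tree(\node_b,x)$, $(\node_b,y)\rto_\tree(\node_a,y)$, and a pinch such as $(\node_b,x)\lo_\tree(\node_b,y)$. Finally, since $\tree_{\node_a}=\tree_{\node_b}$, the induced isomorphism maps this crossing forward indefinitely, producing an infinite descending sequence of nodes $\node_a,\node_b,\node_c,\ldots$ on a single path together with the increasing $x$-sequence, the decreasing $y$-sequence, and the verticals of the first configuration (or, in the mirror case, the second).

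The hard part will be the extraction of the crossing pair: guaranteeing that the recurring subtree type comes equipped with \emph{both} a strictly increasing $x$-sequence and a strictly decreasing $y$-sequence linked by the $\lo_\tree$ verticals, rather than just one monotone sequence. This is exactly where boundedness between the fixed chain endpoints enters—the ceiling furnished by the upper endpoint must descend as the increasing sequence rises, and transporting this squeeze through a repeated isomorphic window is what produces the second row. Making this simultaneity precise, and checking that the pinch relations survive the pumping so that all verticals align at the same relative depths (as permitted by Proposition~\ref{prop:identPathIdentOrder}), is the technical heart of the argument.
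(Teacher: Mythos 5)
Your reverse direction is correct, though it takes a different route from the paper: you derive a contradiction via Lemma~\ref{lem:bddChainAdmitsValuation} by producing an infinite strictly increasing sequence of integers that is bounded above, whereas the paper argues directly from the definition of chains---the displayed configuration itself exhibits chains of every length $i$ between $(\node_1,x)$ and $(\node_1',y)$, namely $(\node_1,x) \rto_\tree (\node_2,x) \rto_\tree \cdots \rto_\tree (\node_i,x) \ro_\tree (\node_1',y)$. Both arguments work; the paper's version has the mild advantage of not relying on the construction direction of Lemma~\ref{lem:bddChainAdmitsValuation} nor on any special property of $\Int$.

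The forward direction, however, has a genuine gap, and it is exactly the part you flag as ``the technical heart'': everything after your K\"{o}nig's lemma step is stated as a goal rather than proved. Two concrete problems. First, K\"{o}nig's lemma gives an infinite path approached by ever longer chains, but it does not give chain elements \emph{lying on} that path: a chain is a $\rto_\tree$-sequence whose individual $\lto_\tree$-steps may descend into a side branch, turn around, and come back, so its elements can avoid any fixed path almost entirely. The paper needs a separate argument here (its ``branching nodes'' argument): whenever a chain leaves the main path at $\node$ through $(\node,x)$ and returns through $(\node,y)$, transitivity yields $(\node,x) \rto_\tree (\node,y)$, so each excursion deposits a chain element on the main path, and a counting argument then shows that the number of chain elements on a single path is unbounded. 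Second, even with unboundedly many chain elements on one path, they need not be monotone in depth: the chain can make u-turns, while your crossing pattern requires a long monotonically descending segment whose consecutive elements are far apart---the paper's ``straight segments'', obtained by yet another argument (if straight segments had bounded length, a straight segment occurring below another could be used to extend the first, a contradiction). Only with a straight segment in hand can your ``squeeze'' idea be executed: the return of the chain to its fixed upper endpoint supplies, for each element $(\node_i,y_i)$ of the descending segment, an intermediate element $(\node_i',y_i')$ at an ancestor within distance $k$, and the pinch orientation $(\node_i,y_i) \lo_\tree (\node_i',y_i')$ (rather than the reverse) is forced by irreflexivity of $\rto_\tree$; after that, a pigeonhole over the finitely many tuples (isomorphism type of $\node_i$, isomorphism type of $\node_i'$, variable $y_i$, variable $y_i'$) produces a repeatable window, which Proposition~\ref{prop:identPathIdentOrder} then pumps. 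Your proposal names the destination of this argument but supplies none of these three steps, so as it stands it restates the lemma's difficulty rather than resolving it.
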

\begin{proof}[Proof idea]
	We can choose a chain that is long enough to contain two isomorphic 
	nodes. The path between them can be repeated infinitely. 
	Proposition~\ref{prop:identPathIdentOrder} will imply that this 
	infinite path contains an infinite chain as required.
\end{proof}

\begin{proof}
($\Leftarrow$) We consider the first case; the other case is similar. 
	Since $(\node_i,x) \lo_\tree (\node_i',y) \ro_\tree 
	(\node_{i-1}',y) \ro_\tree \cdots \ro_\tree (\node_1',y)$ for all $i\ge 
	1$, we have $(\node_i,x) \ro_\tree (\node_1',y)$. Hence, 
	$(\node_1,x) \rto_\tree (\node_2,x) \rto_\tree \cdots \rto_\tree 
	(\node_i,x) 
	\ro_\tree (\node_1',y)$ for all $i \ge 1$, demonstrating that there 
	are chains of unbounded lengths between $(\node_1,x)$ and 
	$(\node_1',y)$.
	
	($\Rightarrow$)
	We show the existence of a short segment that can be repeated 
	arbitrarily many times to get the required infinite path. We show that 
	there are node variables along a path satisfying the following 
	conditions:
	\begin{enumerate}
		\item $\begin{matrix}
		(\node_1,x) & \rto_\tree &  (\node_2,x) \\
		\rotatebox[origin=c]{-90}{\ensuremath{\lo_\tree}} & 
		&\rotatebox[origin=c]{-90}{\ensuremath{\lo_\tree}} & \\
		(\node_1',y) & \go_\tree &  (\node_2',y) 
		\end{matrix}$
		or \enspace
		$\begin{matrix}
		(\node_1,x) & \gto_\tree &  (\node_2,x) \\
		\rotatebox[origin=c]{90}{\ensuremath{\lo_\tree}} & 
		&\rotatebox[origin=c]{90}{\ensuremath{\lo_\tree}} & \\
		(\node_1',y) & \ro_\tree &  (\node_2',y) 
		\end{matrix}$,
		\item the nodes are arranged as 
		$\node_1',\node_1, \node_2',\node_2$ in ascending order of 
		depth, $|\node_1'| > k$,
		\item $\node_1,\node_2$ are isomorphic, $\node_1',\node_2'$ 
		are isomorphic and 
		$|\node_1|-|\node_1'|=|\node_2|-|\node_2'|\le k$.
	\end{enumerate}
	The node variables mentioned above are as shown below.
	\begin{center}
		\begin{tikzpicture}
		\node[state, label=180:root] (root) at (0cm,0cm) {};
		
		\node[state, label=-90:{$(\node_1',y)$}] (n1p) at 
		([xshift=2cm]root) {};
		\node[state, label=90:{$(\node_1,x)$}] (n1) at 
		([xshift=0.7cm]n1p) {};
		
		\node[state, label=-90:{$(\node_2',y)$}] (n2p) at 
		([xshift=2.3cm]n1) {};
		\node[state, label=90:{$(\node_2,x)$}] (n2) at 
		([xshift=0.7cm]n2p) {};
		
		\node[state, label=-90:{$(\node_3',y)$}] (n3p) at 
		([xshift=2.3cm]n2) {};
		\node[state, label=90:{$(\node_3,x)$}] (n3) at 
		([xshift=0.7cm]n3p) 
		{};
		
		\draw[-] (root) -- (n3);
		\draw[decorate, decoration=brace] ([yshift=0.7cm]n1.center) 
		-- node[auto=left] {pattern} 
		([yshift=0.7cm,xshift=-0.1cm]n2.center);
		\draw[decorate, decoration=brace] ([yshift=0.7cm, 
		xshift=0.1cm]n2.center) --  node[auto=left] {pattern repeats} 
		([yshift=0.7cm]n3.center);
		\end{tikzpicture}
	\end{center}
	We first prove that the existence of such nodes is sufficient. Since 
	$\node_1,\node_2$ are isomorphic, for any sequence of frames 
	starting from $\node_1$, the same sequence also starts from 
	$\node_2$. Hence there is a descendant $\node_3$ of $\node_2$ 
	such 
	that $\node_2,\node_3$ are isomorphic and 
	$\hat{\tree}(\node_1,\node_2)=\hat{\tree}(\node_2,\node_3)$. The 
	nodes $\node_1', \node_1, \node_2', \node_2$ occur within the 
	influence of $(\node_1,\node_2)$ and the nodes 
	$\node_2',\node_2, 
	\node_3', \node_3$ occur within the influence of 
	$(\node_2,\node_3)$. 
	In the first case in the first condition above, $(\node_1,x) \rto_\tree 
	(\node_2,x) \lo_\tree 
	(\node_2',y) 
	\ro_\tree (\node_1',y)$ and 
	Proposition~\ref{prop:identPathIdentOrder} 
	implies that $(\node_2,x) \rto_\tree (\node_3,x) \lo_\tree 
	(\node_3',y) \ro_\tree (\node_2',y)$. This pattern can be repeated 
	arbitrarily many times, proving that there are node variables as 
	stated in the first case of the lemma. The other case is similar.
	
	Now we will show the existence of the short segment as claimed 
	above. 
	Since $\tree$ is regular, the number of non-isomorphic subtrees of 
	$\tree$ is finite, say $\kappa$. Let $N=\kappa^2 |\lav|^2$.
	We will show subsequently that there is a chain of the form 
	$(\node,x_1) \rto_\tree (\node_1,y_1) \rto_\tree (\node_2,y_2) 
	\rto_\tree \cdots \rto_\tree (\node_{N+2},y_{N+2}) \ro_\tree 
	(\node',x_2)$ or $(\node,x_1) \gto_\tree (\node_1,y_1) \gto_\tree 
	(\node_2,y_2) 
	\gto_\tree \cdots \gto_\tree (\node_{N+2},y_{N+2}) \go_\tree 
	(\node',x_2)$, 
	where $\node_1$ is a descendant of both $\node$ and $\node'$ of 
	depth at least $(k+1)$ more than both $\node$ and $\node'$ and 
	$\node_{i+1}$ is a descendant of $\node_i$ of depth at least 
	$(k+1)$ 
	more than $\node_i$ for all $i \in [1,N+1]$ (we call such chains 
	straight 
	segments). We will only consider the 
	first case here; the other case is similar. Now 
	$(\node_{N+2},y_{N+2}) 
	\ro_\tree (\node',x_2)$ and $\node_{N+2}$ is  a deep descendant of 
	$\node'$ with $\node_1, \ldots, \node_{N+1}$ (which are 
	themselves 
	at least $(k+1)$ positions apart from each other) in between. Recall 
	that 
	$\ro_\tree$ is the transitive closure of $\lo_\tree$ and $\lo_\tree$ 
	holds only between node variables that are at most $k$  positions 
	apart. Hence, there must be intermediate node variables between 
	$(\node_{N+2},y_{N+2}) , (\node',x_2)$ so that 
	$(\node_{N+2},y_{N+2}) 
	\ro_\tree (\node',x_2)$. For every $i \in [1,N+1]$, there must be 
	some 
	intermediate node variable $(\node_i',y_i')$ such that $\node_i'$ is 
	an 
	ancestor of $\node_i$, $|\node_i|-|\node_i'|\le k$ and 
	$(\node_{N+2},y_{N+2}) \ro_\tree 
	(\node_i',y_i') \ro_\tree 
	(\node',x_2)$. Since $|\node_i|-|\node_i'|\le k$, either 
	$(\node_i,y_i) 
	\lo_\tree (\node_i',y_i')$ or $(\node_i',y_i') \lo_\tree (\node_i,y_i)$ 
	(the frame $\tree(\node_i)$ spans $\node_i'$ also; hence the 
	frame imposes an order between the node variables). If 
	$(\node_i',y_i') 
	\lo_\tree (\node_i,y_i)$, then $(\node_i,y_i) \rto_\tree 
	(\node_{N+2},y_{N+2}) \ro_\tree (\node_i',y_i') \lo_\tree 
	(\node_i,y_i)$ 
	implies that $(\node_i,y_i) \rto_\tree (\node_i,y_i)$, contradicting the 
	fact that $\rto_\tree$ is irreflexive. Hence, $(\node_i,y_i) 
	\lo_\tree (\node_i',y_i')$. Consider the sequence $(\node_1,y_1), 
	(\node_1',y_1'), (\node_2,y_2), (\node_2',y_2'), \ldots, 
	(\node_{N+1},y_{N+1}), (\node_{N+1}',y_{N+1}')$. Since 
	$N=\kappa^2 
	|\lav|^2$, there are $i,j$ such that $\node_i$ (resp.~$\node_i'$) is 
	isomorphic to $\node_j$ (resp.~$\node_j'$), $y_i=y_j$ and 
	$y_i'=y_j'$. 
	The node variables $(\node_i,y_i), (\node_j,y_i), (\node_i',y_i'), 
	(\node_j',y_i')$ satisfy the conditions required for $(\node_1,x), 
	(\node_2,x), (\node_1',y), (\node_2',y)$ respectively in our claim 
	about 
	the existence of a short segment.
	
	Next we will show that there are chains that go arbitrarily deep in a 
	single branch. Suppose there are chains of unbounded lengths 
	between 
	$(\node_1,x_1)$ and $(\node_2,x_2)$. All such chains must pass 
	through the least common ancestor (say $\node_a$) of 
	$\node_1,\node_2$. For some variable $x_a$, there must be chains 
	of 
	unbounded lengths between either $(\node_1,x_1)$ and 
	$(\node_a,x_a)$ or between $(\node_a,x_a)$ and $(\node_2,x_2)$. 
	Say 
	there are unbounded chains between $(\node_1,x_1)$ and 
	$(\node_a,x_a)$; the other case is similar. There is only one path 
	between $\node_1$ and $\node_a$, so there must be chains of 
	unbounded lengths that go beyond this path and come back. There 
	must be node variables $(\node_1,y_1), (\node_1,y_2)$ or 
	$(\node_a,y_1), (\node_a,y_2)$ such that there are chains of 
	unbounded lengths between them. We will consider $(\node_1,y_1), 
	(\node_1,y_2)$; the other case is similar. For the chains of 
	unbounded 
	lengths starting from $(\node_1,y_1)$ and ending at 
	$(\node_2,y_2)$, 
	let $\node$ be the highest node (nearest to the root) visited. There 
	must be $(\node,z_1), (\node,z_2)$ such that there are chains of 
	unbounded lengths between them that only visit descendants of 
	$\node$. If there 
	is a bound (say $B$) on how deep the chains go below $\node$ and 
	come back, 
	the number of nodes that can be visited is bounded by the number 
	of 
	node variables that occur in the subtree of height $B$ rooted at 
	$\node$ (a node can occur 
	at most once in a chain; otherwise, it will contradict the fact that $
	\rto_\tree$ is irreflexive). Hence, for any bound $B$, there are chains 
	that go deeper than $B$ and come back.
	
	Next we prove that there is no bound on the number of node 
	variables 
	in a single path that belong to a chain. For this, first suppose that 
	there 
	is a 
	node $\node$ and a chain goes down one child of $\node$ starting 
	from $(\node,x)$, comes back to $\node$ via $(\node,y)$ and goes 
	down another child. Then we have $(\node,x) \rto_\tree (\node,y)$ 
	or 
	$(\node,y) \rto_\tree (\node,x)$ (see the illustration below; if 
	$(\node,x) \ro_\tree (\node_b,x') \rto_\tree (\node_b,y') \ro_\tree 
	(\node,y)$ in the branch, we have $(\node,x) \rto_\tree (\node,y)$ in 
	the main path by transitivity). Hence, every such node contributes a 
	node variable in a 
	chain.
	\begin{center}
		\begin{tikzpicture}
		\node[state, label=90:root](r) at (0cm,0cm) {};
		\node[state, label=155:{$(\node,x)$}](n1) at ([xshift=2cm]r) {};
		\node[state, label=25:{$(\node,y)$}](n4) at ([xshift=0.75cm]n1) {};
		\node[state, label=-135:{$(\node_b,x')$}](n2) at 
		([xshift=1cm,yshift=-1cm]n1) {};
		\node[state, label=-45:{$(\node_b,y')$}](n3) at 
		([xshift=1cm,yshift=-1cm]n4) {};
		
		\draw[thick] (root) -- (n1);
		\draw[very thin] (n1) -- (n2) -- node[auto=right] {$\rto_\tree$} 
		(n3) -- (n4);
		\draw[thick] (n1) -- node[auto=left] (l1) {$\rto_\tree$} (n4);
		\node[draw=black, fit=(n1) (n4) (l1)] (b1) {};
		
		\node[state](n5) at ([xshift=2cm]n4) {};
		\node[state](n8) at ([xshift=0.75cm]n5) {};
		\node[state](n6) at ([xshift=1cm,yshift=-1cm]n5) {};
		\node[state](n7) at ([xshift=1cm,yshift=-1cm]n8) {};
		
		\draw[thick] (n4) -- (n5);
		\draw[very thin] (n5) -- (n6) -- node[auto=right] {$\rto_\tree$} 
		(n7) -- (n8);
		\draw[thick] (n5) -- node[auto=left] (l2) {$\rto_\tree$} (n8);
		\node[draw=black, fit=(n5) (n8) (l2)] (b2) {};
		
		\draw[thick] (n8) -- ([xshift=1cm]n8.center);
		\draw[dotted] ([xshift=1cm]n8.center) -- ([xshift=2cm]n8.center);
		\node (b4) at ([xshift=3cm]n8.center) {main path};
		\node (b5) at ([yshift=-1cm]b4.center) {branches};
		
		\node at ([yshift=1cm]barycentric cs:b1=1,b2=1) (b3) {branching 
			nodes};
		\draw[->] (b3) -- (b1);
		\draw[->] (b3) -- (b2);
		\end{tikzpicture}
	\end{center}
	So if there is no bound on the number of such branching nodes along 
	a 
	path, then there is no bound on the number of node variables 
	in a single path that belong to a chain, 
	as required. Suppose for the sake of contradiction that the 
	number of such branching nodes along any path is bounded (by say 
	$B_1$) and the number of node variables in a chain along any one 
	path is also bounded (say by $B_2$). Then any chain is in a subtree 
	with at most $|\gf|^{B_1}$ leaves (and hence at most as many paths) 
	and at most $B_2$ node variables along any path, so the length of 
	such chains is bounded. Hence, either the number of branching 
	nodes 
	along a path is unbounded or the number of node variables in a 
	chain 
	along a path is unbounded. Both of these imply that the number of 
	node variables in a chain along a path is unbounded, as required.
	
	A chain that goes deep down a path may make u-turns (first descend 
	through descendants and then go to an ascendant or vice-versa) 
	multiple 
	times within the branch. We would like to prove that there is no 
	bound 
	on the length of chain segments that don't have u-turns (these are 
	the 
	straight segments that we need). Suppose for the sake of 
	contradiction 
	that there is a 
	bound on the length of straight segments. Then there is no bound 
	on the number of straight segments in a path, since we have already 
	shown that the number of 
	node variables in a chain along a path is unbounded. There can be 
	only 
	boundedly 
	many distinct straight segments in a path of bounded depth, so the 
	straight segments go deeper without any bound. If there is a straight 
	segment and another one occurs below the first one, the first straight 
	segment can be extended by appending node variables of the 
	second 
	one, as can be seen in the illustration below.
	\begin{center}
		\begin{tikzpicture}
		\node[state, label=180:root] (r) at (0cm,0cm) {};
		\node[state] (n1) at ([xshift=2cm]r) {};
		\node[state] (n2) at ([xshift=1cm]n1) {};
		\node[state] (n3) at ([xshift=1cm]n2) {};
		\node[state, label=90: first straight segment] (n4) at 
		([xshift=1cm]n3) {};
		
		\node[coordinate] (n5) at ([yshift=-0.6cm]n4) {};
		\node[state] (n6) at ([xshift=-0.5cm]n5) {};
		\node[state] (n7) at ([xshift=-1cm]n6) {};
		\node[coordinate] (n8) at ([yshift=-0.6cm]n7) {};
		\node[state, label=-90:second straight segment] (n9) at 
		([xshift=0.5cm]n8) {};
		\node[state] (n10) at ([xshift=1cm]n9) {};
		\node[state] (n11) at ([xshift=1cm]n10) {};
		\node[state] (n12) at ([xshift=1cm]n11) {};
		
		\draw[thick, rounded corners=0.05cm] (r) -- (n1) -- (n2) -- (n3) 
		-- (n4) -- (n5) -- (n6) --	(n7) -- (n8) -- (n9) -- (n10) -- (n11) -- 
		(n12);
		
		\draw[very thin] (n4) -- node[auto=left, pos=0.6] {first segment 
			extended} 
		(n11);
		\end{tikzpicture}
	\end{center}
	This contradicts the hypothesis that length of straight segments is 
	bounded. This shows that there are unboundedly long straight 
	segments, completing the proof.
\end{proof}

Lemma~\ref{lem:pumpingRegTree} says that if a regular tree has unbounded chains, it will have 
an infinite path containing an infinite chain. The infinite sequence of the 
first (resp.~second) kind given in 
Lemma~\ref{lem:pumpingRegTree} is called an infinite forward 
(resp.~backward) chain. Now we design a tree 
automaton $\mathcal{A}_{\phi}$ whose language 
$\mathcal{L(\mathcal{A}_{\phi})}$ is an 
approximation 
of the set $\mathcal{T} = \set{\tree \mid \exists \maplabel, (\tree,\maplabel) \text{ 
is a winning strategy tree}}$ such that $\mathcal{L(\mathcal{A}_{\phi})}$ 
is non-empty iff 
$\mathcal{T}$ is. Hence, the single-sided CLTL realizability problem is 
equivalent to checking the non-emptiness of 
$\mathcal{L(\mathcal{A}_{\phi})}$. The tree automaton 
$\mathcal{A}_{\phi}$ is defined as the intersection of three automata 
${\mathcal{A}}^{\text{str}}_{\phi}$, ${\mathcal{A}}^{\text{symb}}_{\phi}$ 
and 
${\mathcal{A}}^{\text{chain}}_{\phi}$, all of which read $|\gf|$-ary trees 
labeled with letters from $\frames$. 
The automaton ${\mathcal{A}}^{\text{str}}_{\phi}$ 
accepts the set of all strategy trees, 
${\mathcal{A}}^{\text{symb}}_{\phi}$ accepts the set of all trees each of 
whose paths symbolically satisfies the formula $\phi$ 
and ${\mathcal{A}}^{\text{chain}}_{\phi}$ accepts 
the set of all trees that do not have any infinite forward or backward 
chains. We now give a detailed construction of these automata.

The automaton ${\mathcal{A}}^{\text{str}}_{\phi}$ has set of states 
$\frames$. In state $f$, it can read the input label $f$ and go to states 
$f_1, \ldots, f_{|\gf|}$ in its children, provided $(f,f_i)$ is one-step 
compatible and $({\gap}_i, f_i)$ is gap-compatible for all $i \in 
[1,|\gf|]$. All states are accepting in this B\"{u}chi automaton.
This automaton just checks that every pair of consecutive frames along 
every branch of the tree is one-step compatible and gap-compatible 
and hence verifies that the tree accepted is a strategy tree. Now, the size of the set of states of ${\mathcal{A}}^{\text{str}}_{\phi}$ is $|\frames|$, and the size of the transition set is $|\frames| \times |\Sigma| \times {|\frames|}^{|\gf|}$ where the input alphabet $\Sigma = \frames$. Since, $\gf$ is the set of all gap functions associated with mappings of the form $\efbv \to \Int$, by definition of $\gf$ its range must be $\set{0,\ldots,|\efbv|^2}$ implying $|\gf| \le |\efbv|^{({|\efbv|}^2)}$. Also, from the definition of $\frames$, we get $|\frames| \le 2^{{(k.|\lav|)}^2} \times (|\fbv|^{|\fbv|^2})^k$ where $k$ is the $\nxt$-length of $\phi$. Thus, the size of ${\mathcal{A}}^{\text{str}}_{\phi}$ is double exponential in the size of $\phi$.

The automaton ${\mathcal{A}}^{\text{symb}}_{\phi}$ checks that every 
path in the input tree is accepted by a B{\"u}chi automaton 
${\mathcal{B}}^{\text{ symb}}_{\phi}$, which ensures that the input 
sequence symbolically satisfies the formula $\phi$. Given the B{\"u}chi 
automaton ${\mathcal{B}}^{\text{ symb}}_{\phi}$, we first convert it to some deterministic parity automaton ${\mathcal{C}}^{\text{ symb}}_{\phi}$ in exponential time in the size of ${\mathcal{B}}^{\text{ symb}}_{\phi}$ and from that, it is easy to construct 
the parity tree automaton ${\mathcal{A}}^{\text{symb}}_{\phi}$ with the same size as ${\mathcal{C}}^{\text{ symb}}_{\phi}$.
The B{\"u}chi automaton ${\mathcal{B}}^{\text{ symb}}_{\phi}$ needs to 
check symbolic satisfiability---whether an atomic formula is satisfied 
at a position can be decided by checking just the current frame, just 
like propositional LTL. Hence the standard B{\"u}chi automaton 
construction for LTL can be used to construct 
${\mathcal{B}}^{\text{ symb}}_{\phi}$ in \textsc{EXPTIME} \cite{VW86}. Thus, the parity tree automaton ${\mathcal{A}}^{\text{symb}}_{\phi}$ can be constructed in \textsc{2EXPTIME} in the size of $\phi$.

Next, we describe the construction of the parity tree automaton
${\mathcal{A}}^{\text{chain}}_{\phi}$. It needs to check that there are no 
infinite forward or backward chains in any of the paths. For this we will first construct a B\"{u}chi word automaton that accepts all words not having an infinite forward or backward chain, convert it into a deterministic parity automaton ${\mathcal{C}}^{\text{chain}}_{\phi}$ and then as before, construct ${\mathcal{A}}^{\text{chain}}_{\phi}$ with the same size as ${\mathcal{C}}^{\text{chain}}_{\phi}$.
This B\"{u}chi word automaton can be constructed by complementing the B\"{u}chi automaton ${\mathcal{B}}^{\text{chain}}$ which accepts all words that contain an infinite forward chain or 
an 
infinite backward chain in \textsc{EXPTIME} in the size of ${\mathcal{B}}^{\text{chain}}$ \cite{SVW87}. The construction of such a B\"{u}chi automaton ${\mathcal{B}}^{\text{chain}}$ is already described in 
\cite{DD07}, which we reproduce here using our notation. Recall that 
$\ro_\tree$ is the transitive closure of $\lo_\tree$. So the part of the 
infinite chain $(\node_1,x) \rto_\tree (\node_2,x) \rto_\tree 
(\node_3,x) \rto_\tree \cdots$ may be embedded in a sequence  
$(\node_1,x) \lo_\tree (\node_1'',z_1) \rto_\tree (\node_2,x) \lo_\tree 
(\node_2'',z_2) \rto_\tree (\node_3,x) \lo_\tree \cdots$. So the 
B\"{u}chi word automaton checks for a sequence of node variables 
related by $\lo_\tree$, with the order being strict infinitely often.

Define ${\mathcal{B}}^{\text{chain}} = (Q, \Sigma, \set{q_0}, 
\longrightarrow, F)$, where:

\begin{itemize}
	\item $Q = \set{q_0} \cup (\lav \times \set{0,...,(k-1)} \times \lav 
	\times 
	\set{0,...,(k-1)} \times \set{d,e} \times \set{0,1})$; (Here $d$ and 
	$e$ 
	denote the forward and backward infinite sequences and $1$ or $0$ 
	indicate whether or not the order is strict)
	
	\item $\Sigma = \frames$
	
	\item $\longrightarrow$ is given by:
	\begin{itemize}
		\item $q_0 \overset{f}{\longrightarrow} q_0$
		\item $q_0 \overset{f}{\longrightarrow} (x,i,y,j,d,0)$ and $q_0 
		\overset{f}{\longrightarrow} (x,i,y,j,e,0)$       $\forall x,i,y,j$\\
		Here $(x,i)$ and $(y,j)$ indicate the variables $x$ and the $y$ at 
		the $i^{\text{th}}$ and $j^{\text{th}}$ positions of the current 
		frame. The automaton guesses $(x,i)$ as $(\node_1, x)$ and 
		$(y,j)$ 
		as $({\node}'_1, y)$, as well as which of $d$ or $e$ forms a 
		chain.\\
		
		\item $(x,i,y,j,\delta,b) \overset{f}{\longrightarrow} (x, i-1, y, j-1, 
		\delta, b)$ for $\delta \in \set{d,e}$ and $b \in \set{0,1}$ 
		provided 
		$i, j \geq 2$\\
		(Wait till $x$ or $y$ is at the edge of the frame).\\
		\item $(x,1,y,j,\delta,b) \overset{f}{\longrightarrow} (z, i, y, j-1, 
		\delta, b')$ provided \\
		$j > 1$,\\
		$x <_f {\nxt}^i z$ and $b' = 1$, or $x \equiv_f {\nxt}^i z$ and 
		$b' = 0$ 
		and\\
		${\nxt}^i z <_f {\nxt}^{j-1} y$.\\
		(Guess a continuation of the chain from $(x,1)$).\\
		\item $(x,i,y,1,d,b) \overset{f}{\longrightarrow} (x, i-1, w, j, d, 
		b')$ provided \\
		$i > 1$,\\
		${\nxt}^j w\text{ }{\leq}_f\text{ }y$ and $b' = b$ and \\
		${\nxt}^{i-1} x <_f {\nxt}^{j} w$.\\
		(Guess a continuation of the chain from $(y,1)$).
		\item similarly for $(x,1)$ and $(y,1)$ simultaneously.
		\item similar transitions for $e$
	\end{itemize}
	\item the set of accepting states comprises all states of the form 
	$(x,i,y,j,\delta,1)$
\end{itemize}

The size of ${\mathcal{B}}^{\text{chain}}$ is polynomial in the size of the CLTL formula $\phi$ and hence, the size of ${\mathcal{A}}^{\text{chain}}_{\phi}$ is double exponential in the size of $\phi$.

Now we have the following result.

\begin{lemma}
\label{lem:treeautomata}
The \sys player has a winning strategy in the single-sided 
CLTL$(\Int,<,=)$ game with winning condition $\phi$ iff 
$\mathcal{L(\mathcal{A}_{\phi})}$ is non-empty.
\end{lemma}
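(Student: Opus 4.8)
The plan is to compose the three reductions already established and to close the small gap between them using the fact that a non-empty tree-automaton language always contains a regular witness. By Lemma~\ref{lem:stratToTree}, \sys has a winning strategy in the single-sided CLTL$(\Int,<,=)$ game iff the set $\mathcal{T} = \set{\tree \mid \exists \maplabel, (\tree,\maplabel) \text{ is a winning strategy tree}}$ is non-empty, so it suffices to prove that $\mathcal{T}$ is non-empty iff $\mathcal{L(\mathcal{A}_{\phi})}$ is. Unfolding the definition of $\mathcal{A}_{\phi}$ as the intersection of ${\mathcal{A}}^{\text{str}}_{\phi}$, ${\mathcal{A}}^{\text{symb}}_{\phi}$ and ${\mathcal{A}}^{\text{chain}}_{\phi}$, a tree $\tree$ lies in $\mathcal{L(\mathcal{A}_{\phi})}$ precisely when $\tree$ is a strategy tree, every path of $\tree$ symbolically satisfies $\phi$, and $\tree$ has no infinite forward or backward chain.

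For the inclusion $\mathcal{T} \subseteq \mathcal{L(\mathcal{A}_{\phi})}$ I would take $\tree \in \mathcal{T}$ together with a witnessing labeling $\maplabel$. The definition of a winning strategy tree directly gives that $\tree$ is a strategy tree and that every path symbolically satisfies $\phi$, so $\tree$ is accepted by ${\mathcal{A}}^{\text{str}}_{\phi}$ and ${\mathcal{A}}^{\text{symb}}_{\phi}$. Lemma~\ref{lem:bddChainAdmitsValuation} tells us that $\tree$ has bounded chains, and the argument for the ($\Leftarrow$) direction of Lemma~\ref{lem:pumpingRegTree}---which never invokes regularity---shows that any infinite forward or backward chain would force chains of unbounded length. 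Hence $\tree$ has no such infinite chain and is accepted by ${\mathcal{A}}^{\text{chain}}_{\phi}$, so $\tree \in \mathcal{L(\mathcal{A}_{\phi})}$. In particular, non-emptiness of $\mathcal{T}$ implies non-emptiness of $\mathcal{L(\mathcal{A}_{\phi})}$.

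The converse is where the only genuine content lies, because $\mathcal{L(\mathcal{A}_{\phi})}$ \emph{strictly over-approximates} $\mathcal{T}$: a non-regular tree may avoid infinite forward and backward chains while still carrying chains of unbounded length, and such a tree admits no valid labeling by Lemma~\ref{lem:bddChainAdmitsValuation}. The device that repairs this is that $\mathcal{A}_{\phi}$ is a parity tree automaton, so if $\mathcal{L(\mathcal{A}_{\phi})}$ is non-empty it contains a \emph{regular} tree $\tree$. This $\tree$ is a strategy tree all of whose paths symbolically satisfy $\phi$ and which has no infinite forward or backward chain. I would then apply the ($\Rightarrow$) direction of Lemma~\ref{lem:pumpingRegTree} in contrapositive form: a regular tree with no infinite forward or backward chain must have bounded chains. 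Lemma~\ref{lem:bddChainAdmitsValuation} now yields a labeling $\maplabel$ with $(\tree,\maplabel)$ a winning strategy tree---the symbolic satisfaction of $\phi$ along every path, which that definition also requires, is already guaranteed by ${\mathcal{A}}^{\text{symb}}_{\phi}$---so $\tree \in \mathcal{T}$ and $\mathcal{T}$ is non-empty. Combining the two directions with Lemma~\ref{lem:stratToTree} gives the claim.

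The step I expect to be the crux is exactly this appeal to the regular-witness property of tree automata in the converse direction. It is what bridges ``$\tree$ has no infinite forward or backward chain'' (a condition a tree automaton can enforce) and ``$\tree$ has bounded chains'' (the condition actually needed for a labeling to exist), and it is the reason that an automaton recognizing only an approximation of $\mathcal{T}$ is nonetheless faithful for the non-emptiness test.
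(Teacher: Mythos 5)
Your proposal is correct and follows essentially the same route as the paper's own proof: reduce via Lemma~\ref{lem:stratToTree} to non-emptiness of the set of trees admitting a labeling, use Lemma~\ref{lem:bddChainAdmitsValuation} (plus the easy direction of Lemma~\ref{lem:pumpingRegTree}) for the inclusion into $\mathcal{L}(\mathcal{A}_{\phi})$, and invoke the regular-witness property of parity tree automata together with the hard direction of Lemma~\ref{lem:pumpingRegTree} for the converse. Your explicit remarks---that the over-approximation is repaired exactly by regularity of the witness, and that the symbolic satisfaction required by the definition of winning strategy trees is supplied by ${\mathcal{A}}^{\text{symb}}_{\phi}$ rather than by Lemma~\ref{lem:bddChainAdmitsValuation}---are points the paper leaves implicit, but the argument is the same.
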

\begin{proof}
Suppose there is a winning strategy for the \sys player in 
single-sided CLTL$(\Int,<,=)$ game with winning condition 
$\phi$. By Lemma~\ref{lem:stratToTree}, there exists a winning 
strategy tree, say $(\tree,\maplabel)$. 
Since, $\tree$ is a strategy tree, $\tree \in 
\mathcal{L}({\mathcal{A}}^{\text{str}})$. We know that every branch of 
$\tree$ must symbolically satisfy $\phi$ and hence, $\tree \in 
\mathcal{L}({\mathcal{A}}^{\text{symb}}_{\phi})$. Further, since $\tree$ has 
the labelling function $\maplabel$, Lemma~\ref{lem:bddChainAdmitsValuation} implies that $\tree$ has bounded 
chains and thus, it cannot have any infinite forward or backward chains. 
So $\tree \in \mathcal{L}({\mathcal{A}}^{\text{chain}})$. Thus, $\tree \in 
\mathcal{L(\mathcal{A}_{\phi})}$.

Conversely, suppose $\mathcal{A}_{\phi}$ accepts a tree $\tree$. Then, 
using \cite[Lemma 8.16, Theorem 8.19, Corollary 8.20]{N02} (suitably 
modifying the proofs to work for tree automata that accept $|\gf|$-ary 
trees), we can see that $\mathcal{A}_{\phi}$ must accept a 
regular tree $\tree'$. Since, $\tree' \in 
\mathcal{L(\mathcal{A}_{\phi})}$, every branch of $\tree'$ 
must symbolically satisfy $\phi$, $\tree'$ must be a strategy tree and it 
cannot have any infinite forward or backward chains. Thus, by Lemma~\ref{lem:pumpingRegTree}, $\tree'$ must 
have bounded chains and hence by Lemma~\ref{lem:bddChainAdmitsValuation}, $\tree'$ must have a 
labelling 
function $\maplabel'$ such that $(\tree',\maplabel')$ is a winning strategy tree. Hence, by 
Lemma~\ref{lem:stratToTree} the 
\sys player has a winning strategy in the single-sided 
CLTL$(\Int,<,=)$ game.
\end{proof}

\begin{theorem}
\label{thm:decresult}
The single-sided realizability problem for CLTL over $(\Int,<,=)$ is \textsc{2EXPTIME}-complete.
\end{theorem}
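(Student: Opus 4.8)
The plan is to establish matching upper and lower bounds. For the \textbf{upper bound}, I would use Lemma~\ref{lem:treeautomata}, which reduces the single-sided realizability problem to deciding non-emptiness of $\mathcal{L}(\mathcal{A}_\phi)$, where $\mathcal{A}_\phi$ is the intersection of the three tree automata ${\mathcal{A}}^{\text{str}}_{\phi}$, ${\mathcal{A}}^{\text{symb}}_{\phi}$ and ${\mathcal{A}}^{\text{chain}}_{\phi}$ constructed above. The size analyses already carried out show that each of these automata has at most double-exponentially many states in $|\phi|$, that ${\mathcal{A}}^{\text{str}}_{\phi}$ imposes only a transition restriction (all its states are accepting), and that the two parity automata ${\mathcal{A}}^{\text{symb}}_{\phi}$ and ${\mathcal{A}}^{\text{chain}}_{\phi}$ use at most exponentially many colours. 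Taking the product yields an automaton whose state set is still double-exponential, and whose combined acceptance condition (a conjunction of two parity conditions) can be turned, by standard constructions, into a single parity condition with at most exponentially many colours.

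It then remains to bound the cost of the emptiness test. Non-emptiness of a parity tree automaton reduces to solving a parity game whose size is polynomial in the size of the automaton and whose number of priorities equals the automaton's number of colours. Such a game is solvable in time $n^{O(d)}$, where $n$ is the number of vertices and $d$ the number of priorities. Writing $n = 2^{2^{O(\mathrm{poly}(|\phi|))}}$ and $d = 2^{O(\mathrm{poly}(|\phi|))}$, the bound $n^{O(d)}$ simplifies to $2^{2^{O(\mathrm{poly}(|\phi|))}}$, which is double-exponential in $|\phi|$. This gives the \textsc{2EXPTIME} upper bound.

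For the \textbf{lower bound}, I would reduce from the realizability problem for propositional LTL, which is \textsc{2EXPTIME}-complete~\cite{PR89}. A Boolean proposition $p$ is simulated in $(\Int,<,=)$ by a pair of future-blind variables $u_p, v_p$ owned by the player that controls $p$, declaring $p$ to hold precisely when $u_p = v_p$. This encoding uses only equality at the current position, so it requires no look-ahead; in particular, the \env-controlled propositions are encoded entirely with \env-owned future-blind variables, so the resulting CLTL game is single-sided. Translating an LTL winning condition into the corresponding CLTL formula over these variables, \sys wins the LTL game iff she wins the encoded single-sided CLTL game. Hence single-sided CLTL realizability over $(\Int,<,=)$ is \textsc{2EXPTIME}-hard, and together with the upper bound this proves \textsc{2EXPTIME}-completeness.

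The step I expect to require the most care is the bookkeeping around the acceptance condition of $\mathcal{A}_\phi$: one must check that combining the parity conditions of ${\mathcal{A}}^{\text{symb}}_{\phi}$ and ${\mathcal{A}}^{\text{chain}}_{\phi}$ keeps the number of colours exponential rather than double-exponential, since it is precisely this colour count that sits in the exponent $d$ of the emptiness bound $n^{O(d)}$. Once the colours remain exponential, the exponent arithmetic collapses back to double-exponential time; but securing a genuinely single parity condition (instead of, say, a Streett condition that would make the emptiness game costlier) is the point that deserves explicit justification.
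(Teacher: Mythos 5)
Your proposal is correct and follows essentially the same route as the paper: the upper bound via Lemma~\ref{lem:treeautomata} together with an emptiness check for the parity tree automaton $\mathcal{A}_\phi$, and the lower bound by reduction from propositional LTL realizability \cite{PR89}. You are in fact more explicit than the paper on two points it glosses over---turning the conjunction of the parity conditions of ${\mathcal{A}}^{\text{symb}}_{\phi}$ and ${\mathcal{A}}^{\text{chain}}_{\phi}$ into a single parity condition while keeping the number of colours exponential, and encoding Boolean propositions as equalities $u_p = v_p$ between future-blind variables owned by the controlling player so that the hardness reduction stays single-sided (the paper simply asserts that LTL realizability embeds into the problem).
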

\begin{proof}
Given a formula $\phi$, Lemma~\ref{lem:treeautomata} 
implies that it is enough to construct 
the tree automaton $\mathcal{A}_{\phi}$ and check it for 
non-emptiness. From the description of the construction in Appendix C, we can see that ${\mathcal{A}}^{\text{str}}_{\phi}$, ${\mathcal{A}}^{\text{symb}}_{\phi}$ and ${\mathcal{A}}^{\text{chain}}_{\phi}$ can be constructed in \textsc{2EXPTIME} in the size of $\phi$. Thus, the automaton $\mathcal{A}_{\phi}$ can be constructed in \textsc{2EXPTIME}. Now, checking non-emptiness of a parity tree automaton is decidable and the upper bound stated in \cite[Corollary 8.22 (1)]{N02} implies that the single-sided realizability problem for CLTL over $(\Int,<,=)$ is in \textsc{2EXPTIME}. 
Now, the realizability problem for LTL is \textsc{2EXPTIME}-complete \cite{PR89} and hence, the single-sided realizability problem for CLTL over $(\Int,<,=)$ must also be \textsc{2EXPTIME}-complete. 
\end{proof}

\section{Decidability of single-sided prompt-CLTL games over $(\Int, <, =)$}
\label{sec:decpromptCLTL}
Prompt-LTL is an extension of LTL with the prompt-eventually operator 
$\prompt$. The realizability problem for prompt-LTL is decidable, as 
shown in \cite{KPV09} via a reduction to LTL 
realizability problem. We consider a similar extension of CLTL. We show 
that the single-sided realizability problem for prompt-CLTL over $(\Int,<,=)$ is 
decidable, by giving a reduction to single-sided 
games on CLTL. We adapt techniques from \cite{KPV09} to show 
this.

The syntax of prompt-CLTL is given by the grammar
$\phi ::= c ~|~ \lnot \phi ~|~ \phi \lor \phi ~|~ \nxt \phi ~|~ \prompt 
~ \phi ~|~\phi \until \phi$.
The semantics of prompt-CLTL are defined over concrete models 
$\sigma$ and a bound $k \ge 0$. We say $(\sigma, i, k) \models \phi$ 
to indicate that $\phi$ holds in position $i$ of $\sigma$ with bound 
$k$. The relation $\models$ is defined by induction on the structure of 
$\phi$ similar to CLTL, except for $\prompt ~ \phi$ that is defined as 
follows: $(\sigma, i, k) \models \prompt ~ \phi$ iff there exists $j$ 
such that $i \le j \le i+k$ and $(\sigma, j, k) \models \phi$.
We say $(\sigma, k) \models \phi$ if $(\sigma, 0, k) \models \phi$.
Single-sided prompt-CLTL games are similar to single-sided CLTL 
games, except that the winning condition is a 
prompt-CLTL formula.

A strategy $\sstrat \colon M^* \cdot\eMap \to \sMap$ is said to be winning 
for \sys if there is a bound $k \ge 0$ such that for all 
models $\sigma$ generated from plays conforming to $\sstrat$, 
$(\sigma, k) \models \phi$. Given, a prompt-CLTL formula $\phi$, we define the single-sided realizability problem for prompt-CLTL as the problem of checking whether $\sys$ has a winning strategy in the single-sided prompt-CLTL game with winning condition $\phi$.

We now describe the alternating colour technique, as proposed in 
\cite{KPV09}, with minor modifications required to lift it to 
prompt-CLTL. Let $\splx$ and $\sply$ be new \sys look-ahead 
variables not in $V$ and let $\col$ be the constraint $\splx = \sply$. A $\col$-colouring of a concrete model $\sigma$ 
is a concrete model $\sigma'$ over the set of
variables $V \cup \set{\splx, \sply}$, such that $\sigma'$ agrees with 
$\sigma$ on $V$. 
We say that position $i$ is green (resp.~red) if $\col$ is true 
(resp.~false) in ${\sigma}_i'$. A position 
$i$ is said to be a $\col$-change point if either $i=0$ or the colours at 
$i-1$ and $i$ are different. A subword ${\sigma}_i'...{\sigma}_{i'}'$ is a 
$\col$-block if all positions in the subword have the same colour, and 
$i$ and $i'+1$ are $\col$-change points. For $k \ge 0$, we say that 
$\sigma'$ is 
$k$-spaced/$k$-bounded/$k$-tight (with respect to the colouring) if 
$\sigma'$ has infinitely many blocks, and all the blocks are of length at 
least $k$, at most $k$ and exactly $k$, respectively.

For a 
prompt-CLTL formula $\phi$ over $V$, we define a formula $c(\phi)$ 
over $V \cup\set{\splx, \sply}$ as $
c(\phi) := (GF \col \wedge GF \neg \col) \wedge {rel}_{\col}(\phi)$,
where ${rel}_{\col}(\phi)$ denotes the formula obtained from $\phi$ by 
recursively, replacing every subformula of the form $\prompt ~ \psi$ 
by the CLTL formula $(\col \implies (\col \until (\neg \col \until \psi))) 
\wedge (\neg \col \implies (\neg \col \until (\col \until \psi)))$. The formula $c(\phi)$ forces every concrete model to be partitioned into infinitely many blocks and requires each prompt eventuality to be satisfied in the current or the next block or the position immediately after the next block. 

We have the following lemma from \cite{KPV09} (stated in the context 
of prompt-CLTL). The proof of the lemma is almost the same as that in 
\cite{KPV09} and hence, we omit the proof over here. 

\begin{lemma}
	\label{lem:altcolour}
	Consider a prompt-CLTL formula $\phi$, a concrete model 
	$\sigma$, and a bound $k \ge 0$, 
	\begin{enumerate}
		\item If $(\sigma,k) \models \phi$, then for every $k$-spaced 
		$\col$-colouring $\sigma'$ of $\sigma$, we have, $\sigma' 
		\models c(\phi)$.
		
		\item If $\sigma'$ is a $k$-bounded $\col$-colouring of 
		$\sigma$ such that $\sigma' \models c(\phi)$, then $(\sigma, 2k) 
		\models \phi$
	\end{enumerate}
\end{lemma}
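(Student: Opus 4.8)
The plan is to prove both statements by a single structural induction on $\phi$, after first disposing of the conjunct $(GF\col \wedge GF\neg\col)$ of $c(\phi)$. For part~1, I would fix an arbitrary $k$-spaced colouring $\sigma'$; since a $k$-spaced model has infinitely many $\col$-blocks and consecutive blocks differ in colour, both colours occur infinitely often, so $\sigma',0 \models GF\col \wedge GF\neg\col$ immediately. For part~2 the same conjunct is part of the hypothesis $\sigma' \models c(\phi)$ and need not be reproved. In both parts it then remains to relate bounded satisfaction of $\phi$ in $\sigma$ to ordinary satisfaction of $\mathit{rel}_\col(\phi)$ in $\sigma'$.

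I would prove, by induction on the structure of a subformula $\psi$ of $\phi$ and for every position $i$, the two implications: (i) if $\sigma'$ is $k$-spaced and $(\sigma,i,k)\models\psi$, then $\sigma',i \models \mathit{rel}_\col(\psi)$; and (ii) if $\sigma'$ is $k$-bounded and $\sigma',i \models \mathit{rel}_\col(\psi)$, then $(\sigma,i,2k)\models\psi$. As in \cite{KPV09}, I would first put $\phi$ in negation normal form over the extended signature that also contains the dual of $\until$ (release), so that negation occurs only on atomic constraints and $\prompt$ occurs only positively; this is exactly what lets a one-directional, monotone induction go through. Since $\sigma'$ agrees with $\sigma$ on $V$ and every atomic constraint is over $V$, the base case (literals) is in fact a biconditional. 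The Boolean cases and the temporal operators $\nxt$, $\until$ (and its dual) then follow because $\mathit{rel}_\col$ commutes with all of them and each is monotone in the truth of its arguments, so the two implications propagate directly from the inductive hypotheses. Taking $i=0$ and $\psi=\phi$ yields the lemma.

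The heart of the argument, and the step I expect to be the main obstacle, is the case $\psi = \prompt\chi$, where the block lengths must be converted into the prompt bounds $k$ and $2k$. Writing $\chi' = \mathit{rel}_\col(\chi)$ and assuming by symmetry that position $i$ is green, the first conjunct of $\mathit{rel}_\col(\prompt\chi)$ reduces at $i$ to $\col\until(\neg\col\until\chi')$, which holds precisely when $\chi'$ is true at some position $n$ lying in the current green block $[i,e_1]$, in the following red block $[e_1+1,e_2]$, or at the single position $e_2+1$ immediately after it; thus the effective window of the relativization is $[i,e_2+1]$. For part~1, $k$-spacedness forces the red block to have length at least $k$, so $e_2+1 \ge i+k+1$ and $[i,i+k]\subseteq[i,e_2+1]$; hence any bound-$k$ witness $j$ for $\prompt\chi$ lies in the window, and the inductive hypothesis (i) turns it into a witness for $\col\until(\neg\col\until\chi')$. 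For part~2, $k$-boundedness forces both the current green block and the next red block to have length at most $k$, so $e_1 \le i+k-1$ and $e_2 \le i+2k-1$, giving $[i,e_2+1]\subseteq[i,i+2k]$; hence the relativization's witness $n$ satisfies $n \le i+2k$, and the inductive hypothesis (ii) yields a bound-$2k$ witness for $\prompt\chi$. The delicate points are the exact book-keeping at block boundaries (whether $i$ sits at the start or the interior of its block, and the off-by-one contributed by the trailing position $e_2+1$) and the symmetric red-at-$i$ case handled via the second conjunct of $\mathit{rel}_\col(\prompt\chi)$; these are precisely where the factor of $2$ in the bound $2k$ originates.
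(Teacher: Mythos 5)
The paper does not actually give its own proof here---it defers to \cite{KPV09}, stating the argument is ``almost the same''---and your proposal is essentially that deferred argument: negation normal form with release, a one-directional monotone induction on subformulas, and the block-window analysis of $\prompt$ (effective window $[i,e_2+1]$; the length-at-least-$k$ red block gives part~1, the length-at-most-$k$ blocks give the factor $2$ in part~2), all of which you carry out with correct book-keeping. The one caveat is that your normal-form step tacitly assumes $\prompt$ occurs only positively: under this paper's grammar, which allows unrestricted negation, a formula such as $\lnot \prompt \lnot(x < \nxt x)$ admits no such normal form and the lemma is in fact false for it (a $k$-spaced colouring can stretch the window far beyond $k$, breaking part~1), so this positivity restriction---built into the syntax of \cite{KPV09} and silently inherited by the paper---is a hypothesis your proof relies on and should state explicitly rather than a harmless preprocessing step.
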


We now have the following theorem.
\begin{theorem}
\label{thm:promptCLTL_to_CLTL_red}
Over the 
domain $(\Int,<,=)$, \sys has a winning strategy in the single-sided 
prompt-CLTL game with formula $\phi$ over the set of variables $V$  
iff she has a winning strategy in the single-sided CLTL game with
formula $c(\phi)$ over the set of variables $V \cup\set{\splx, \sply}$ .
\end{theorem}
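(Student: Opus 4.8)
The plan is to prove both directions through Lemma~\ref{lem:altcolour}, exploiting that the new colouring variables $\splx,\sply$ are owned by \sys. Since she controls them and \env still owns only future-blind variables $\efbv$, the game remains single-sided; hence the CLTL game with $c(\phi)$ over $V\cup\set{\splx,\sply}$ is a legitimate single-sided CLTL game over $(\Int,<,=)$, and Theorem~\ref{thm:decresult} together with Lemma~\ref{lem:treeautomata} applies to it. Recall that the colour of a position is determined by whether $\splx=\sply$ there, equivalently, at the symbolic level, by whether $\splx\equiv\sply$ in the frame labelling that position.

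For the forward direction I would assume \sys wins the prompt-CLTL game, so there is a bound $k$ with $(\sigma,k)\models\phi$ for every play $\sigma$ conforming to her strategy $\sstrat$. I build a strategy for the $c(\phi)$ game that applies $\sstrat$ to the $V$-projection of the history (legitimate, since the colour \sys adds depends only on the position number) and simultaneously paints a $k$-tight $\col$-colouring, that is, blocks of length exactly $k$; over $\Int$ this is realised by setting $\splx=\sply=0$ on green positions and $\splx=0,\sply=1$ on red positions. Any resulting model $\sigma'$ restricts on $V$ to a play $\sigma$ conforming to $\sstrat$, and $\sigma'$ is a $k$-spaced $\col$-colouring of $\sigma$; part~1 of Lemma~\ref{lem:altcolour} then gives $\sigma'\models c(\phi)$, so this strategy is winning.

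The backward direction is the main obstacle, because from a winning strategy in the $c(\phi)$ game I must extract a \emph{single} bound that works uniformly over all plays, in order to invoke part~2 of Lemma~\ref{lem:altcolour}. I would derive this bound from regularity. If \sys wins the $c(\phi)$ game then $\mathcal{L}(\mathcal{A}_{c(\phi)})$ is non-empty, so by the argument in the proof of Lemma~\ref{lem:treeautomata} it contains a \emph{regular} winning strategy tree $(\tree,\maplabel)$; say $\tree$ has $\kappa$ subtrees up to isomorphism. Every path of $\tree$ symbolically satisfies $c(\phi)$, hence in particular $GF\col\wedge GF\neg\col$, so colours change infinitely often along every path. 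The key claim is that every monochromatic block along any path has length at most $\kappa$: if some path carried, say, a green block longer than $\kappa$, then two of its nodes---an ancestor $\node_1$ and a descendant $\node_2$---would have isomorphic subtrees, and repeating the identical green segment between them would yield an infinite all-green path in $\tree$, contradicting $GF\neg\col$ on that path. Thus every play of this regular strategy is a $k$-bounded $\col$-colouring with $k=\kappa$.

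Finally I would let \sys play the regular $c(\phi)$ strategy (obtained from $(\tree,\maplabel)$ via Lemma~\ref{lem:stratToTree}) in the prompt-CLTL game while internally maintaining $\splx,\sply$ in her memory and outputting only the $\sv$-valuation; this is a well-defined prompt-CLTL strategy since she computes the colouring deterministically. For any resulting play $\sigma$ over $V$, its coloured version $\sigma'$ satisfies $c(\phi)$ and is a $k$-bounded colouring, so part~2 of Lemma~\ref{lem:altcolour} gives $(\sigma,2k)\models\phi$. As $2k$ is fixed independently of the play, \sys wins the single-sided prompt-CLTL game with bound $2k$. The only delicate points are the uniformity of the block-length bound (handled by the pumping argument above, which is the crux) and the correspondence between plays of the strategy and paths of the regular tree, which follows from the construction in Lemma~\ref{lem:stratToTree} together with the fact that $\splx,\sply\in\slav$ so colours are faithfully tracked along each path.
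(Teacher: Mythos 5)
Your proposal is correct and follows essentially the same route as the paper's own proof: the forward direction paints a $k$-tight periodic colouring on top of the prompt-winning strategy and applies part~1 of Lemma~\ref{lem:altcolour}, while the backward direction extracts a regular winning strategy tree from the non-emptiness argument of Lemma~\ref{lem:treeautomata}, bounds monochromatic block lengths by the number of subtree isomorphism classes via the same pumping-to-an-all-one-colour-path contradiction with $GF\col\wedge GF\neg\col$ (the paper gets $\kappa+1$ where you get $\kappa$, an immaterial counting difference), and concludes with part~2 of Lemma~\ref{lem:altcolour} and the uniform bound $2k$.
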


\begin{proof}
	($\Rightarrow$) Suppose $\phi$ is a prompt-CLTL formula which is 
	single-sided realizable. Then there exists a strategy $\sstrat \colon 
	M^*\cdot\eMap \to \sMap$ for the \sys player in the single-sided 
	prompt-CLTL game with winning condition $\phi$, over 
	$(\Int,<,=)$, 
	and a bound $k \ge 0$, such that for all concrete models resulting 
	from 
	a play conforming to $\sstrat$, $(\sigma,k) \models \phi$. We shall 
	now extend the strategy $\sstrat$ to a strategy $\sstrat'$ for the 
	\sys 
	player in the single-sided CLTL game with winning condition 
	$c(\phi)$.
	
	Let $M'$ (resp.~$\sMap'$) denote the set of all mappings of the form 
	$V' \to \Int$ (resp.~$\sv' \to \Int$). 
	We define $\sstrat' \colon M'^* \cdot\eMap \to \sMap'$ as: for all $\tau 
	\in 
	M^* \cdot\eMap$, $\sstrat'(\tau) = \sstrat(\tau) \cup \set{\splx 
		\mapsto 
		0, 
		\sply \mapsto 0}$ if $|\tau|\hspace{0.1cm} mod \hspace{0.1cm} 
		2k$ 
	lies between $0$ and $k-1$ and $\sstrat'(\tau) = \sstrat(\tau) \cup 
	\set{\splx \mapsto 0, \sply \mapsto 1}$ if 
	$|\tau|\hspace{0.1cm}mod\hspace{0.1cm}2k$ lies between $k$ and 
	$2k-1$. Now, any concrete model $\sigma'$ generated by a play in 
	the 
	CLTL game with winning condition $c(\phi)$ that conforms to 
	$\sstrat'$, is $k$-tight (by construction). Thus, by 
	Lemma~\ref{lem:altcolour}, $\sigma' \models c(\phi)$ and hence 
	$c(\phi)$ is realizable.
	
	($\Leftarrow$) Suppose the CLTL formula $c(\phi)$ is realizable by a 
	single-sided CLTL game with winning condition $c(\phi)$ over 
	$(\Int,<,=)$. By Lemma \ref{lem:treeautomata}, the language of the 
	automaton $\mathcal{A}^{\phi}$ (defined in the previous section) is 
	non-empty. Thus, $\mathcal{A}^{\phi}$ accepts a tree $\tree'$. 
	Now, as we saw in the proof of Lemma \ref{lem:treeautomata}, 
	$\mathcal{A}^{\phi}$ must also accept a regular strategy tree $\tree \colon 
	{\gf}^* \to \frames'$ (with respect to the winning condition formula 
	$c(\phi)$), with a labelling function $\maplabel$ such that  
	$(\tree,\maplabel)$ is a winning strategy tree. Thus, the infinite 
	sequence $\tree(\pth)$ along every infinite path $\pth$ of the tree 
	$\tree$, satisfies the formula $c(\phi)$. 
	
	Now, we define a strategy $\sstrat' \colon {\Map}^* \cdot{\eMap} \to \sMap$, for the 
	\sys player in the single-sided prompt-CLTL game with winning 
	condition $\phi$. For all $\tau = (em_1 \oplus sm_1 )...(em_{|\tau|-1} \oplus sm_{|\tau|-1})\cdot em_{|\tau|} \in 
	{\Map}^* \cdot{\eMap}$, let $\gap(\tau) = 
	\gap(em_1)\gap(em_2)...\gap(em_{|\tau|})$. Now, define 
	$\sstrat'(\tau) = \hat{\maplabel}(\gap(\tau)) \restr \sv$. We shall 
	prove that $\sstrat'$ is a winning strategy for the \sys player.
	
	We know that a regular tree has a finite number of non-isomorphic 
	subtrees. Let us say the number of non-isomorphic subtrees of 
	$\tree$ is $\kappa$. We will show that every concrete model 
	$\sigma$ admitted by a symbolic model $\tree(\pth)$ along some 
	path $\pth$ of the strategy tree $\tree$ is $(\kappa + 1)$-bounded.
	
	Assume for the sake of contradiction that $\sigma$ has adjacent 
	$\col$-change points $i$ and $j$ such that $j - i > \kappa + 1$. 
	Let ${\node}_1{\node}_2{\node}_3...$ be an infinite path along 
	$\tree$ such that the infinite sequence of frames along this path 
	admits the concrete model $\sigma$. Since, the number of 
	non-isomorphic subtrees is $\kappa$, there must exist positions 
	$i'$ and $j'$ such that $i \le i' < j' \le j-1$ and the subtrees rooted 
	at the nodes ${\node}_{i'}$ and ${\node}_{j'}$ are isomorphic. This 
	means that there exists an infinite path in $\tree$, such that the 
	infinite sequence of frames along that path equals 
	$\tree({\node}_{1})\tree({\node}_{2})...\tree({\node}_{i'-1}){(\tree({\node}_{i'})...\tree({\node}_{j'-1}))}^{\omega}$.
	 But no concrete model admitted by this sequence of frames can 
	satisfy $GF \col \wedge GF \neg \col$, which contradicts the fact 
	that $\tree$ is a winning strategy tree with respect to the winning 
	condition $c(\phi)$.
	
	Thus, every concrete model admitted by a symbolic model 
	$\tree(\pth)$ along some path $\pth$ of the strategy tree $\tree$ is 
	$(\kappa + 1)$-bounded, and satisfies $c(\phi)$. Therefore, by 
	Lemma \ref{lem:altcolour}, every concrete model $\sigma'$ 
	generated during a play of the single-sided prompt-CLTL game over 
	$(\Int,<,=)$, conforming to strategy $\sstrat'$, is such that 
	$(\sigma', 2\kappa + 2) \models \phi$. This shows that the 
	prompt-CLTL formula $\phi$ is single-sided realizable. 
\end{proof}

The above theorem when combined with Theorem \ref{thm:decresult} 
gives us the following result.
\begin{theorem}
\label{thm:promptCLTLdecresult}
The single-sided realizability problem for prompt-CLTL over $(\Int,<,=)$ is \textsc{2EXPTIME}-complete.
\end{theorem}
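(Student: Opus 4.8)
The plan is to derive this theorem by combining Theorem~\ref{thm:promptCLTL_to_CLTL_red} with Theorem~\ref{thm:decresult}, treating the upper and lower bounds separately. For the upper bound, given a prompt-CLTL formula $\phi$ over $V$, Theorem~\ref{thm:promptCLTL_to_CLTL_red} tells us that \sys has a winning strategy in the single-sided prompt-CLTL game for $\phi$ iff she has one in the single-sided CLTL game for $c(\phi)$ over $V \cup \set{\splx,\sply}$. So I would first check that this reduction is size-efficient. The conjunct $GF\col \wedge GF\neg\col$ has constant size, and the relativization $rel_\col$ replaces each $\prompt$ subformula by a fixed CLTL gadget built from $\col$, negation and $\until$; hence $|c(\phi)|$ is linear in $|\phi|$, and since the gadgets introduce no new look-ahead terms, the $\nxt$-length of $c(\phi)$ equals that of $\phi$. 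By Theorem~\ref{thm:decresult} the single-sided CLTL realizability problem over $(\Int,<,=)$ is in \textsc{2EXPTIME} in the formula size, so this yields a \textsc{2EXPTIME} upper bound for single-sided prompt-CLTL realizability.

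For the lower bound, I would observe that prompt-CLTL syntactically extends CLTL---its grammar adds only the $\prompt$ operator---so every single-sided CLTL realizability instance is also a single-sided prompt-CLTL instance with the same answer. Since Theorem~\ref{thm:decresult} already establishes \textsc{2EXPTIME}-hardness for single-sided CLTL over $(\Int,<,=)$, this hardness transfers verbatim, and combining the two bounds gives \textsc{2EXPTIME}-completeness.

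The substantive work is entirely contained in Theorem~\ref{thm:promptCLTL_to_CLTL_red}; what remains here is bookkeeping, so I do not expect a genuine obstacle. The main point to verify carefully is that the reduction preserves the single-sided structure: the fresh variables $\splx,\sply$ are \sys look-ahead variables, so \env still owns only future-blind variables and the reduced game is genuinely single-sided, and $c(\phi)$ is a bona fide CLTL formula since $rel_\col$ eliminates every $\prompt$ operator. Once these are confirmed, the complexity bound follows mechanically from Theorem~\ref{thm:decresult}.
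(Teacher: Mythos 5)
Your overall route is the same as the paper's---reduce via Theorem~\ref{thm:promptCLTL_to_CLTL_red}, invoke Theorem~\ref{thm:decresult} for the upper bound, and transfer hardness because CLTL is a syntactic fragment of prompt-CLTL---and your lower-bound and single-sidedness checks are fine. But the key step of your upper bound is wrong: $|c(\phi)|$ is \emph{not} linear in $|\phi|$. The translation ${rel}_{\col}$ replaces $\prompt \, \psi$ by $(\col \implies (\col \until (\neg \col \until \psi'))) \wedge (\neg\col \implies (\neg\col \until (\col \until \psi')))$, where $\psi'$ is the recursive translation of $\psi$; this gadget is not of fixed size, since it contains \emph{two} copies of $\psi'$. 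With $n$ nested $\prompt$ operators, the tree/string size of $c(\phi)$ grows like $2^n$, i.e., exponentially in $|\phi|$. If you then apply Theorem~\ref{thm:decresult} as a black box (\textsc{2EXPTIME} in the size of its input formula), you only obtain a \textsc{3EXPTIME} bound, not \textsc{2EXPTIME}.

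The paper closes exactly this hole: it concedes that $c(\phi)$ is exponential in the size of $\phi$ because ${rel}_{\col}$ is recursive, but observes that the number of \emph{distinct subformulas} of ${rel}_{\col}(\phi)$ is linear in the number of subformulas of $\phi$, and that the tree automaton of Lemma~\ref{lem:treeautomata} has size governed by the number of distinct subformulas (the closure used in the LTL-style automaton constructions) rather than by the string length of $c(\phi)$. With that accounting, the \textsc{2EXPTIME} upper bound goes through. So your proof needs either this subformula/DAG-size argument or an equivalent one; the linearity claim as you stated it fails on formulas with nested prompt-eventualities, and everything downstream of it rests on that claim.
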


\begin{proof}
Theorem~\ref{thm:promptCLTL_to_CLTL_red} implies that the single-sided realizability problem for prompt-CLTL over $(\Int, <, =)$ can be reduced, to the single-sided realizability problem for CLTL. Given a prompt-CLTL formula $\phi$ over a finite set of variables $V$, the CLTL formula $c(\phi)$ over $V \cup \set{\splx, \sply}$ is exponential in the size of $\phi$ as the definition of ${rel}_{\col}(\phi)$ is recursive. However, the number of subformulas of ${rel}_{\col}(\phi)$ is linear in the number of subformulas of $\phi$. Since the single-sided CLTL realizability game is reduced to checking the non-emptiness of a tree automaton (Lemma ~\ref{lem:treeautomata}) whose size depends on the number of subformulas of $c(\phi)$, the single-sided realizability problem for prompt-CLTL is also in \textsc{2EXPTIME}. Now, we know that the realizability problem for LTL is \textsc{2EXPTIME}-complete \cite{VW86} and that every LTL formula is also a prompt-CLTL formula. Therefore, the problem of single-sided realizability for prompt-CLTL is \textsc{2EXPTIME}-complete.
\end{proof}

\section{Discussion and Future Work}
\label{conclusion}
We have seen in this paper that the CLTL realizability problem 
is decidable over domains satisfying
completion property and that the single-sided CLTL realizability problem is decidable over integers 
with linear order and equality. But both these problems have a high complexity (both are \textsc{2EXPTIME}-complete). It would be interesting to see if there are expressive fragments of CLTL with lower complexity, like the 
fragments of LTL studied in \cite{PPS06}, which work on practical examples.

We believe that single-sided CLTL games over the domain of natural 
numbers $(\Nat,<,=)$ are also decidable. In \cite{DD07}, the authors 
extend the automata-characterization for the satisfiability problem for 
CLTL over the integer domain to the domain of natural numbers. A 
similar extension of the tree-automata characterization for the 
single-sided games over integers to one for single-sided games over 
the naturals seems possible, although the details need to be worked 
out.

Despite the decidability result that we have for the single-sided CLTL 
games over integers, the language of the tree automaton that we 
construct in this paper is an approximation of the set of all winning 
strategy trees. We do not have a machine-theoretic representation for 
winning strategies yet, and this is an interesting direction for future 
exploration.

We have seen that while CLTL games over the integers are undecidable 
in general, restricting to single-sided games yields decidability. It 
would be interesting to see if there are other meaningful restrictions on the 
structure of the game, that yield decidability results.

\bibliography{references}
\end{document}